\renewcommand{\title}[1]{

\begin{center} \Large \bf #1 \end{center}
}
\renewcommand{\author}[2]{
 \begin{center} #1  \vspace{3mm} \\
  #2 \\
 \end{center}
\addvspace{\baselineskip}
}
\newtheorem{theorem}{Theorem}[section]
\newtheorem{proposition}[theorem]{Proposition}
\newtheorem{corollary}[theorem]{Corollary}
\newtheorem{example}[theorem]{Example}
\theoremstyle{definition}
\newtheorem{definition}[theorem]{Definition}
\theoremstyle{remark}
\newtheorem*{rem}{Remark}
\begin{document}
\baselineskip 5mm
\title{Category of Quantizations and Inverse Problem}
\author{Akifumi Sako}
{
Tokyo University of Science,\\ 1-3 Kagurazaka, Shinjuku-ku, Tokyo, 162-8601, Japan
}
\noindent
\vspace{1cm}

\abstract{ 
We introduce a category composed of all
quantizations of all Poisson algebras. By the category, we can
 treat in a unified way the various quantizations for all Poisson algebras
and develop a new classical limit formulation.
This formulation proposes a new 
method for the inverse problem, that is, 
the problem of finding the classical limit from a quantized space. 
Equivalence of quantizations is defined by using this category, 
and the conditions under which the two quantizations are equivalent are investigated.
Two types of the classical limits are defined
as the limits in the context of category theory，
and they are determined by giving a sequence of objects.
Using these classical limits, 
we discussed the inverse problem of determining 
the classical limit from some 
noncommutative Lie algebra.
From a Lie algebra, we construct a sequence of quantized spaces, 
from which we determine a Poisson algebra. 
We also present a method to obtain this sequence of quantizations 
from the principle of least action by using matrix regularization. 
Apart from the above category of quantizations of
all Poisson algebras, 
we also introduce a category of quantizations 
of a fixed single Poisson algebra. 
In this category, the other classical limit is defined, and 
it is automatically determined for the category.
}
%
%
%
\section{Introduction}

In M-theory and string theory, 
it has been proposed for a long time
that matrix models give their constructive formulation \cite{BFSS,IKKT3}.
Noncommutative manifolds are obtained as classical solutions 
of such matrix models. (See for example \cite{IKKT2,fuzzya} and references therein.)
Assuming that the universe as a noncommutative manifold is realized, 
the space we inhabit is perceived as a smooth manifold, 
at least to an approximation. 
Therefore, it is important to obtain smooth manifolds 
as classical limits of noncommutative manifolds.

In many cases, noncommutative geometry is 
some kind of a noncommutative 
deformation of a smooth manifold. 
In other words, it is often made as a quantization of a smooth manifold.
So, it is often called an inverse problem
to construct a usual commutative manifold or commutative algebra 
as a classical limit of a noncommutative manifold.

Taking the classical limit of a noncommutative manifold 
is often fraught with difficulty.
The images of the quantization maps 
have information of the classical geometry of manifolds, 
but the algebras generated from them lose
information of the original geometry.
For example, as a well-known phenomenon, 
matrix regularization generates the same matrix algebra 
whether a two-dimensional torus is transformed into a fuzzy torus 
or a two-dimensional sphere into a fuzzy sphere.
(See Appendix \ref{appendix_fuzzy} and
\cite{deWitHoppeNicolai,arnlind,matrix1,fuzzy1, MadoreText}.)
Put another way, when reading a classical geometry (Poisson algebra)
from a matrix algebra, 
the Poisson algebra obtained depends on what classical limit is taken
\cite{berezin1,bordemannA,Chu:2001xi}.
Various approaches have been taken to the inverse problem 
of how to extract geometric properties from matrix algebras
\cite{shimada,berenstein,Schneiderbauer,ishiki1,ishiki2,asakawa}.
Therefore, a framework including any Poisson algebra and its quantizations is important in considering such issues to be investigated in a unified manner.\\

Another important lesson implied by the difficulty described above 
is that the commutative limit cannot be identified 
by information in the algebra obtained as the target of a quantization map.
Rather, information about the modules or vector spaces that are images of
quantization maps before their generating algebras is important to identify the classical limits. This tendency can be read from, for example, fuzzy spheres and fuzzy torus in Appendix \ref{appendix_fuzzy}.
Note that an image of a quantization map is not an algebra but a module 
in general since a quantization map is not an algebra homomorphism but a 
module homomorphism (linear map).
Furthermore, the classical limit depends on what classical limit is taken
as already mentioned.
Accordingly, it is suggested that 
we should prepare a sequence of modules 
to determine a Poisson algebra as a classical limit, not just one of them.
\\

In light of the above, the first purpose of this study is 
to construct a category composed of the all 
quantizations of all Poisson algebras.
This is actually performed in Section \ref{sect2+1}.
We denote this category by $QW$ (Quantum World),
and its strict definition is given in Section \ref{sect2+1}.
Quantization here refers to a noncommutative deformation 
in the sense of approximate canonical quantization of a Poisson algebra.
The definition and properties of this 
quantization are given in Section \ref{sect2}.
Quantizations in this paper are not necessarily 
identified with them in the sense of quantum mechanics, 
nor in the sense of spectral discretization, 
but in the sense of many noncommutative geometries.
The morphisms that define this $QW$ consist of Poisson morphisms, 
quantization maps, and algebra homomorphisms whose domain is restricted. 
Their compositions are constructed 
in such a way that they satisfy the axioms 
of the category without contradiction.
Furthermore, in this context, it is also important 
to show that quantization can be properly classified. 
To this end, we construct a functor from $QW$ to some 
comma category and show that 
the isomorphism of the object of the comma category 
gives the isomorphism pair of the Poisson algebra 
and the algebra generated by its quantized space.
$QW$ allows any quantization of any Poisson algebra 
to be treated as a single framework.
\\

The second purpose is to define the classical limit 
within the framework of this category of the all quantizations $QW$
in a non-traditional way.
Of course, the question of what classical limit is best 
within the $QW$ framework depends on what classical limit is needed. 
In this paper, we propose three types of classical limits and investigate their properties by considering concrete examples 
in Section \ref{sect_ClassicalLimit}.
We define the classical limit 
using the limit in the category theory or a slight modified limit.
The first two types are defined by using sequence of objects (modules) in $QW$
and morphisms between them.
One is a naive adaptation of the category theory limit to $QW$. 
When that category theory limit corresponds to a Poisson algebra, 
it is defined to be the classical limit. 
To define another classical limit, it is necessary to modify the definition 
of the limit of category theory.
We impose universality within the framework that the vertex of the cone is a Poisson algebra.
The classical limit so defined is called the weak classical limit.
These limits depends on how to choose the sequence of objects, and the 
existence of the limits are not guaranteed.
As concrete examples, we investigate the classical limits 
for a sequence of deformation quantizations of a polynomial ring 
and for a sequence of fuzzy spheres.
By considering the all quantizations as a single category, 
we can make a structure that
a sequence of modules as a sequence of objects in the category 
determines a Poisson algebra as the classical limit from all 
Poisson algebras.
To make the sequence of objects, we use matrix regularization as an example.
Furthermore, developing this argument 
we study how to obtain the classical limit by using 
principle of least action in Section \ref{sect5}.

The third type of the classical limit is 
made by focusing a single Poisson algebra.
When we choose a Poisson algebra, this classical limit is determined
automatically.
It is constructed in the framework of a category 
made from $QW$ with a fixed Poisson algebra whose
objects are quantization maps from the Poisson algebra.
We prove that the classical limit coincides with the fixed Poisson algebra.
We call this classical limit the strong classical limit.
\\

We would like to make a few comments on recent 
previous studies that discuss quantization 
as noncommutative deformation of 
a Poisson algebra by using category theory approach.
The construction of a quantization category 
was done by \cite{Gohara:2019kkd,Jumpei:2020ngc}.
The quantization category in \cite{Gohara:2019kkd,Jumpei:2020ngc}
was pioneering in the sense that it was a category-theoretic version of quantization, but it contained two problems.
The first problem is that the category in \cite{Gohara:2019kkd,Jumpei:2020ngc} 
is constructed as a whole of quantization 
with respect to a single fixed Poisson algebra. 
As already mentioned, this is a weakness for the inverse problem
because it is important to treat the all quantizations
for the all Poisson algebras, in a unified manner.
The second problem is that the quantization maps treated in 
\cite{Gohara:2019kkd,Jumpei:2020ngc}
are maps from some Poisson algebra to Lie algebras.
Target spaces of the maps do not necessarily reflect the reality of the image of the mapping. A Lie algebra that is a target space of some quantization map
may have lost some of their original information 
of the image of quantization map, as already mentioned.
So, to classify quantization maps, we need more complex structure in the formulation.
On the other hand, the category constructed in this article conquers these two problems.
\\

This paper is organized as follows.
In Section \ref{sect2}, we study quantization maps.
Using the properties of quantization maps, we
construct a category of the all 
quantizations of all Poisson algebras $QW$ in Section \ref{sect2+1}.
In Section \ref{sect3}, a way to classify the quantization maps
is established by using the context of $QW$.
In Section \ref{sect_ClassicalLimit},
we discuss how to find the classical limit in $QW$.
Three types of classical limits are discussed.
Examples of these classical limits are studied.
In Section \ref{sect5},
the invers problem is discussed in the context of the classical limit
defined in Section \ref{sect_ClassicalLimit}.
Especially, a way to obtain the classical limit
by the principle of least action with a sequence of actions
is proposed.
We make a summary of this article in Section \ref{sect6}.
A symbol like Landau notation, 
used throughout this paper, is defined in Appendix \ref{ap1}.
Many new definitions appear in this paper, 
and a list of the symbols is available in Appendix \ref{ap2}.
We often use a fuzzy sphere or a fuzzy torus to make examples.
So, we make a short introduction to them in Appendix \ref{appendix_fuzzy}.


\section{Quantization Maps}\label{sect2}
We define the quantization maps in this section.
To formulate them, we use $R$-modules and $R$-algebras.
$R$ denote a fixed commutative ring over ${\mathbb C}$, in this paper.
Only finitely generated $R$-modules and $R$-algebras are considered.
As the simplest case, we can employ ${\mathbb C}$ as $R$
in all the discussions in this paper.

\subsection{Definition of Weak Quantization Maps}
We define a category of Poisson algebras whose morphisms are restricted 
into surjective Poisson morphisms as follows.
\begin{definition}
Let $(A, \cdot_A ,\{ ~ , ~ \}_A ) $ and $( B, \cdot_B , \{ ~ , ~ \}_B)$ be
Poisson algebras over $R$.  When a linear $\phi_{A, B} : A \rightarrow B $  satisfies
\begin{align}
\phi_{A, B}( f \cdot_A g ) &= \phi_{A, B}(f) \cdot_B \phi_{A, B}(g) \\
\phi_{A, B}(\{ f , g \}_A )&= \{ \phi_{A, B}(f) ,  \phi_{A, B}(g) \}_B ,
\end{align}
for any $ f,g  \in A$, $\phi_{A, B}$ is called a Poisson morphism.
${\mathscr Poisss}$ is a category whose objects are Poisson algebras 
over $R$ and
its morphisms are ``surjective" Poisson morphisms.
\end{definition}
The reason why the morphisms were restricted to surjective maps 
will become clear later.
Often the subscripts of Poisson brackets and the symbol for the product 
$\cdot_A$ and so on are omitted in the following.
When we put $R ={\mathbb C}$, Poisson algebras are 
usual ones.
All examples appearing in this paper are the cases 
that $R$ is fixed to  ${\mathbb C}$.

\begin{definition}[Weak Quantization map $wkQ$]\label{WQ}
Let ${A}$ be a Poisson algebra over $R$, and
let $T_i$ be an $R$-module that is given by a subset of some 
$R$-algebra $(M , *_M)$.
If an $R$-module homomorphism (linear map)  $t_{Ai} \in Hom_{R\mbox{\tiny -Mod}} ({A} , T_i) $ 
equips a constant $\hbar(t_{Ai}) \in {\mathbb C}$ and 
satisfies 
\begin{align}\label{lie}
[t_{Ai}(f),t_{Ai}(g)]_M =\sqrt{-1}\hbar (t_{Ai}) t_{Ai}(\{f,g\})
+\tilde{O}(\hbar^{1+\epsilon} (t_{Ai})) \quad 
(\epsilon >0 )
\end{align}
for arbitrary $f, g \in {A}$, where $[a ,b]_M := a *_M b - b *_M a $,
we call $t_i$ a weak quantization map.
For the case that 
$[t_{Ai}(f),t_{Ai}(g)]_M= 0 $ and $t_{Ai}( \{ f,g \} )=0 $
for $\forall f,g \in A $ , we put $\hbar(t_{Ai})=0$. 
$\tilde{O}$ is defined in the Appendix \ref{ap1}.
We denote the set of all weak quantization maps by $wkQ$:
\begin{align*}
wkQ := \bigsqcup_{\substack{  A \in {\mathscr Poisss} \\ 
T_i \subset M \in R\mbox{\tiny -alg}}} 
\{ t_{Ai} \in Hom_{R\mbox{\tiny -Mod}}(A, T_i) ~ | ~ t_{Ai} ~\mbox{satisfies } ~
(\ref{lie}) \}
\end{align*}
\end{definition}

Note that the target of $t_{Ai}$ is not an algebra in general,
but it is a subset of an algebra $M$, so the left-hand side of (\ref{lie}) is defined through the product in $M$.
We call $\hbar(t_{Ai}) \in {\mathbb C}$ a noncommutativity parameter.
As we will see in Proposition \ref{prop7},
the $\hbar $ can take any value of ${\mathbb C}^\times$
for fixed $A$ and $T_i$ if $\hbar(t_{Ai}) \neq 0$.
In the following, $*_M$ is omitted as appropriate.
For simplicity, $[a, b]_M$ is abbreviated to $[a,b]$
when there is no risk of confusion.
For the case  $R={\mathbb C}$, $t_{Ai} \in Hom_{{\mathbb C}}(A, T_i) $
is a linear map between ${\mathbb C}$-vector spaces with the condition (\ref{lie}).
\\

Definition \ref{WQ} includes many kinds of quantizations.
For example, matrix regularization \cite{matrix1,arnlind},
fuzzy spaces \cite{fuzzy1}, and
Berezin-Toeplitz quantization 
\cite{berezin1,bordemannA,berezin2}
which have original ideas of the matrix regularization,
satisfy Definition \ref{WQ}.
In fact, the fuzzy sphere appears frequently as an example
in this paper.
In addition, the strict deformation quantization introduced by Rieffel 
\cite{rieffel1,rieffel2,strict2},
the prequantization \cite{prequantization0,prequantization1,prequantization2},
and Poisson enveloping algebras \cite{oh,oh2,um,can}
are also in $wkQ$.
These facts are derived immediately from these definitions.
(In \cite{Gohara:2019kkd,Jumpei:2020ngc},
we can see organized discussions about these quantization maps. 
The conditions for $wkQ$ is a part of the definition of 
pre-$\mathscr{Q}$ in \cite{Gohara:2019kkd,Jumpei:2020ngc}.)

\subsection{Properties for Weak Quantization Maps}
For the later part, we summarize basic properties of 
$R$-algebra homomorphisms used in this paper, and 
we derive some propositions of the weak quantization maps.
In order to make this article a self-contained article, 
proofs of basic matters are not omitted.

\begin{proposition}\label{prop1}
Let $(M , *_M)$ and $(N, *_N)$ be $R$-algebras, and let $T$ be a generating set of $M$, i.e. 
$\langle T \rangle = M$.
For any $R$-algebra homomorphisms $h_{MN}, h'_{MN}$ from $M$ to $N$,
\begin{align}
h_{MN}|_T = h'_{MN}|_T ~ \Leftrightarrow h_{MN}= h'_{MN},
\end{align}
where $h_{NM} |_T$ and $h'_{NM}|_T$ are $h_{MN}$ and $ h'_{MN}$ whose domains are restricted
to $T$, respectively.
\end{proposition}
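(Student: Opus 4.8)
The plan is to prove the nontrivial implication ($\Rightarrow$); the converse is immediate, since restricting two equal functions to the same subset yields equal functions. For the forward direction I would introduce the \emph{agreement set}
\[
S := \{\, m \in M \mid h_{MN}(m) = h'_{MN}(m) \,\},
\]
and show that $S$ is a sub-$R$-algebra of $M$. By hypothesis $T \subseteq S$, and since $\langle T \rangle = M$ with $\langle T \rangle$ being by definition the smallest sub-$R$-algebra of $M$ containing $T$, this forces $S = M$, which is exactly $h_{MN} = h'_{MN}$.

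To check that $S$ is a sub-$R$-algebra, I would verify closure under $R$-linear combinations and under the product $*_M$, using only the defining properties of $R$-algebra homomorphisms as adopted in this paper, namely that they are $R$-module homomorphisms compatible with the product (with no unit condition, consistent with the algebras not being assumed unital). For $m, n \in S$ and $r \in R$, linearity of $h_{MN}$ and $h'_{MN}$ gives
\[
h_{MN}(m + r\,n) = h_{MN}(m) + r\, h_{MN}(n) = h'_{MN}(m) + r\, h'_{MN}(n) = h'_{MN}(m + r\,n),
\]
so $m + r\,n \in S$; and multiplicativity gives
\[
h_{MN}(m *_M n) = h_{MN}(m) *_N h_{MN}(n) = h'_{MN}(m) *_N h'_{MN}(n) = h'_{MN}(m *_M n),
\]
so $m *_M n \in S$. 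Hence $S$ is closed under the $R$-module operations and under $*_M$, i.e.\ it is a sub-$R$-algebra containing $T$, and the conclusion follows. Equivalently, one can phrase this as an induction: every element of $M = \langle T \rangle$ is a finite $R$-linear combination of finite $*_M$-products of elements of $T$; the base case is the hypothesis on $T$, the inductive step for products uses multiplicativity, and the assembly of the linear combination uses $R$-linearity.

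There is no genuine obstacle here — this is the standard fact that an algebra homomorphism is determined by its values on a generating set — so the only point that deserves a word of care is the paper's convention that $R$-algebra homomorphisms need not preserve a multiplicative identity (and the algebras need not possess one). This is precisely why a generating set $T$ in the sense of $\langle T \rangle = M$ is enough, without one having to worry about whether $1 \in M$ can be produced from $T$.
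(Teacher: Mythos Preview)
Your proof is correct and is essentially the same as the paper's: the paper argues directly by writing an arbitrary $b \in M$ as an $R$-linear combination of $*_M$-products of elements of $T$ and then applying multiplicativity and linearity termwise, which is exactly the explicit version of your closing ``equivalently, one can phrase this as an induction'' remark. Your agreement-set formulation is a slightly cleaner packaging of the same argument.
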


\begin{proof}
For a multi index $I_k = (i_1 , \cdots , i_k)$, we denote 
$a_{i_1}*_M \cdots *_M a_{i_k}$ by $a_{I_k}$, where
$a_{i_j} \in T$.
Since $T$ generates $M$, any $b \in M$ can be written as
a polynomial
\begin{align*}
b= \sum_k \sum_{I_k} b_{I_k} a_{I_k} ~ \qquad ( b_{I_k} \in R ).
\end{align*}
If $h_{MN}|_T = h'_{MN}|_T$, $h_{MN}(a_{i_j}) = h'_{MN}(a_{i_j})$
for each $a_{i_j}$, and
\begin{align*}
h_{NM}(b) &= \sum_k \sum_{I_k} b_{I_k} h_{NM}(a_{I_k}) \\
&= \sum_k \sum_{I_k} b_{I_k} h'_{NM}(a_{I_k})= h'_{NM}(b).
\end{align*}
The proof for the opposite direction is trivial.
\end{proof}


\begin{proposition}\label{prop2}
Let $M$ and $\langle T \rangle $ be $R$-algebras,
where $T $ is a subset of $\langle T \rangle$ that generates algebra 
$\langle T \rangle$.
For any $R$-algebra homomorphism $h : \langle T \rangle \rightarrow M$
\begin{align}
\langle h(T) \rangle = h (\langle T \rangle ).
\end{align}
\end{proposition}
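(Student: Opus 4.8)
The plan is to prove the set equality by establishing the two inclusions separately, reusing the polynomial normal form for elements of $\langle T \rangle$ that already appeared in the proof of Proposition \ref{prop1}.

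First I would check $h(\langle T \rangle) \subseteq \langle h(T) \rangle$. Take any $b \in \langle T \rangle$. Since $T$ generates the $R$-algebra $\langle T \rangle$ and no multiplicative identity is assumed, $b$ can be written as a polynomial with no constant term, $b = \sum_k \sum_{I_k} b_{I_k}\, a_{I_k}$ with $b_{I_k} \in R$ and $a_{I_k} = a_{i_1} *\cdots * a_{i_k}$, $a_{i_j} \in T$, exactly as in the proof of Proposition \ref{prop1}. Because $h$ is an $R$-algebra homomorphism it is $R$-linear and multiplicative, so $h(b) = \sum_k \sum_{I_k} b_{I_k}\, h(a_{i_1}) *_M \cdots *_M h(a_{i_k})$, which is again a polynomial without constant term in the elements $h(a_{i_j}) \in h(T)$; hence $h(b) \in \langle h(T) \rangle$.

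Next I would check the reverse inclusion $\langle h(T) \rangle \subseteq h(\langle T \rangle)$. The image $h(\langle T \rangle)$ is a sub-$R$-algebra of $M$: it is closed under addition, under $R$-scalar multiplication, and under the product $*_M$, precisely because $h$ respects each of these operations. Since $h(T) \subseteq h(\langle T \rangle)$ and $\langle h(T) \rangle$ is by definition the smallest sub-$R$-algebra of $M$ containing $h(T)$, we conclude $\langle h(T) \rangle \subseteq h(\langle T \rangle)$. Combining the two inclusions yields $\langle h(T) \rangle = h(\langle T \rangle)$.

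This argument is essentially routine; there is no serious obstacle. The only point that needs a little attention — not a real difficulty — is the bookkeeping forced by the paper's convention that $R$-algebras need not be unital, so that the normal form for elements of $\langle T \rangle$ carries no degree-zero term and the generated subalgebra contains no scalar multiples of an identity. Neither inclusion is affected by this, so the proof goes through as stated.
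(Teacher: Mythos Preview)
Your proof is correct and more complete than the paper's. The paper's argument handles only the inclusion $\langle h(T)\rangle \subseteq h(\langle T\rangle)$ explicitly, by writing an arbitrary $m\in\langle h(T)\rangle$ as a polynomial in elements $e_{i_j}\in h(T)$, choosing preimages $e'_{i_j}\in T$ with $h(e'_{i_j})=e_{i_j}$, and then recognising $m=h(\sum m_{I_k}\,e'_{i_1}\cdots e'_{i_k})$; the opposite inclusion is left tacit. You instead prove that same inclusion by the abstract observation that $h(\langle T\rangle)$ is a sub-$R$-algebra containing $h(T)$ and hence contains the generated subalgebra, and you supply the concrete polynomial computation for the \emph{other} direction $h(\langle T\rangle)\subseteq\langle h(T)\rangle$. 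Both routes are standard and equally short; your version has the small advantage of making both inclusions explicit, while the paper's concrete lift of generators is perhaps more in keeping with the hands-on style used elsewhere in Section~\ref{sect2}.
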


\begin{proof}
For any $m \in \langle h(T) \rangle$ can be written as
a polynomial
\begin{align*}
m= \sum_k \sum_{I_k} m_{I_k} e_{I_k} ~ \qquad ( m_{I_k} \in R ),
\end{align*}
where 
$e_{I_k}=e_{i_1}* \cdots * e_{i_k}$ with $e_{i_j}\in h (T)$.
There exists $e'_{i_j} \in T$ such that
$e_{i_j}= h (e'_{i_j})$ for each $e_{i_j}$. Then,
\begin{align*}
m &= \sum_k \sum_{I_k} m_{I_k} h (e'_{i_1})* \cdots * h (e'_{i_k}) 
\\
&= h( \sum_k \sum_{I_k} m_{I_k} e'_{i_1} * \cdots * e'_{i_k} ).
\end{align*}
\end{proof}


\begin{proposition}\label{prop3}
Let $A$ be a Poisson algebra.
Let $t_i : A \rightarrow M_i$ be a weak quantization map whose 
image $t_i(A)$ generates
an $R$-algebra $M_i$, i.e. $\langle t_i(A) \rangle = M_i$.
For an arbitrary $R$-algebra $M_j$ and any $R$-algebra homomorphism 
$h_{ij} : M_i \rightarrow M_j$, 
an $R$-module homomorphism $t_j : A \rightarrow t_j(A) \subset M_j$ 
defined by
\begin{align}\label{t_j}
t_j &= h_{ij}\circ t_i , 
\end{align}
satisfies 
\begin{align}\label{prop3_1}
[ t_j (f) , t_j(g)]_{M_j} &=\sqrt{-1}\hbar (t_i) t_j(\{f , g \})+\tilde{O}(\hbar^{1+\epsilon} (t_i))  ~~\ ( \forall f, g \in A ) .
\end{align}
\end{proposition}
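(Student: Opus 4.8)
The plan is to push the defining identity $(\ref{lie})$ for $t_i$ forward through the $R$-algebra homomorphism $h_{ij}$, exploiting that $h_{ij}$ is $R$-linear and multiplicative and hence a Lie homomorphism for the two commutator brackets. The whole argument is a short unwinding of definitions once the homomorphism property is in place; the only point requiring genuine care is the error term.

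First I would note that since $h_{ij}: M_i \rightarrow M_j$ is an $R$-algebra homomorphism, for all $a,b \in M_i$
\begin{align*}
h_{ij}([a,b]_{M_i}) = h_{ij}(a *_{M_i} b - b *_{M_i} a) = h_{ij}(a) *_{M_j} h_{ij}(b) - h_{ij}(b) *_{M_j} h_{ij}(a) = [h_{ij}(a), h_{ij}(b)]_{M_j}.
\end{align*}
Taking $a = t_i(f)$, $b = t_i(g)$ and recalling $t_j = h_{ij}\circ t_i$, this gives $h_{ij}\bigl([t_i(f), t_i(g)]_{M_i}\bigr) = [t_j(f), t_j(g)]_{M_j}$, i.e. the left-hand side of $(\ref{prop3_1})$. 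I would also observe in passing that $t_j$ is well defined as a map $A \rightarrow t_j(A) \subset M_j$, since $h_{ij}$ takes values in $M_j$.

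Next I would apply $h_{ij}$ to the right-hand side of $(\ref{lie})$. By $R$-linearity of $h_{ij}$ the leading term becomes $\sqrt{-1}\,\hbar(t_i)\,h_{ij}(t_i(\{f,g\})) = \sqrt{-1}\,\hbar(t_i)\,t_j(\{f,g\})$, which is exactly the leading term in $(\ref{prop3_1})$ (note the noncommutativity parameter is still that of $t_i$). For the remainder I would invoke the behaviour of the symbol $\tilde{O}$ under $R$-module homomorphisms: from its definition in Appendix \ref{ap1}, post-composing an element of order $\tilde{O}(\hbar^{1+\epsilon}(t_i))$ with the linear map $h_{ij}$ again produces an element of order $\tilde{O}(\hbar^{1+\epsilon}(t_i))$. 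Applying $h_{ij}$ to both sides of the equality $(\ref{lie})$ and combining the two computations then yields $(\ref{prop3_1})$.

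The main obstacle is precisely this last step: one must be sure that the Landau-type symbol $\tilde{O}$ of Appendix \ref{ap1} is stable under $R$-module homomorphisms into another finitely generated module, so that the order estimate is preserved and not weakened. I expect this to follow directly from the definition there, but it is the one place I would check carefully rather than treat as routine; the hypothesis $\langle t_i(A)\rangle = M_i$ is not actually needed for the identity itself and is only recorded for later use.
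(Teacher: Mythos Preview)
Your proof is correct and follows essentially the same route as the paper: push the defining identity for $t_i$ forward through the $R$-algebra homomorphism $h_{ij}$, using that $h_{ij}$ preserves commutators and is $R$-linear. Your explicit remark that the $\tilde{O}$ term is stable under $R$-module homomorphisms (which the paper uses tacitly) and your observation that the hypothesis $\langle t_i(A)\rangle = M_i$ plays no role in the identity itself are both accurate.
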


\begin{proof}
For any $f, g \in A$, from (\ref{t_j}) and 
the fact that $h_{ij}$ is an
$R$-algebra homomorphism, 
\begin{align*}
&[ t_j (f) , t_j(g)]_{M_j} \\
&=
(h_{ij}|_{t_i(A)} \circ t_i )(f) *_{M_j} (h_{ij}|_{t_i(A)} \circ t_i )(g)
-(h_{ij}|_{t_i(A)} \circ t_i )(g) *_{M_j} (h_{ji}|_{t_i(A)} \circ t_i )(f)
\\
&= h_{ij} (  t_i (f) *_{M_i} t_i (g)-  t_i (f) *_{M_i} t_i (g))\\
&= h_{ij} (  [t_i (f),~ t_i (g)]_{M_i})
\end{align*}
Since $t_i$ is in $wkQ$, the above equation is written as
\begin{align*}
[ t_j (f) , t_j(g)]_{M_J} 
&= h_{ij} ( \sqrt{-1}\hbar (t_i) t_i(\{f , g \})+\tilde{O}(\hbar^{1+\epsilon} (t_i)))\\ 
&=  \sqrt{-1}\hbar (t_i)  t_j (\{f , g \}) + \tilde{O}(\hbar^{1+\epsilon} (t_i))).
\end{align*}
Here we use Proposition \ref{propA2} in Appendix \ref{ap1}.
\end{proof}

Note that (\ref{prop3_1}) does not mean that
$t_j$ is a weak quantization map with 
$\hbar (t_j)= \hbar (t_i)$.
For example,  $\hbar (t_i)\neq 0$ case, 
if $t_j( A )$ generates commutative algebra, (\ref{prop3_1}) 
means $\{f , g \}$ is in ${\rm Ker }~ t_j$.
In this case $\hbar (t_j)= 0$
by the definition of the weak quantization.
\\

\begin{corollary}\label{coro3}
Let $A$ be a nontrivial Poisson algebra, i.e.
$\exists f,g \in A $ such that $\{f , g \} \neq 0$.
Let $t_i, t_j, h_{ij}, M_i, M_j$ be them in 
Proposition \ref{prop3}.
If $\langle t_i (A) \rangle$ is noncommutative and if
$\langle t_j (A) \rangle =  \langle h_{ij}\circ t_i  (A)\rangle \subset M_j$
is commutative, then 
\begin{align}
\{ A , A\} := \{~ \{f, g \} \in A ~| ~ f,g \in A \} \subset {\rm Ker} ~t_j,
\end{align}
is not $\{ 0 \}$. 
$t_i ( \{ A , A\} ) $ is not $\{ 0 \}$, too, and
\begin{align}
t_i ( \{ A , A\} ) \subset {\rm Ker} ~h_{ij}.
\end{align}
\end{corollary}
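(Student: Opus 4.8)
The plan is to extract everything from the identity (\ref{prop3_1}) of Proposition \ref{prop3}, specialized to the hypothesis that $\langle t_j(A)\rangle$ is commutative. First I would record that, since $A$ is nontrivial, there exist $f,g\in A$ with $\{f,g\}\neq 0$, so $\{A,A\}$ is non-empty; this is needed only so that the conclusions are not vacuous. The key observation is that noncommutativity of $\langle t_i(A)\rangle$ forces $\hbar(t_i)\neq 0$: if $\hbar(t_i)=0$, then by the convention in Definition \ref{WQ} we would have $[t_i(f),t_i(g)]=0$ for all $f,g\in A$, hence $t_i(A)$ commutes elementwise, and then by Proposition \ref{prop1} (or directly by expanding polynomials in $t_i(A)$) the generated algebra $\langle t_i(A)\rangle$ would be commutative, contradicting the hypothesis. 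So $\hbar(t_i)\in{\mathbb C}^\times$.

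Next I would use commutativity of $M_j':=\langle t_j(A)\rangle$. For any $f,g\in A$ the elements $t_j(f),t_j(g)$ lie in $M_j'$, so $[t_j(f),t_j(g)]_{M_j}=0$. Plugging this into (\ref{prop3_1}) gives
\begin{align*}
0 = \sqrt{-1}\,\hbar(t_i)\, t_j(\{f,g\}) + \tilde{O}(\hbar^{1+\epsilon}(t_i)).
\end{align*}
Here I would invoke the definition of $\tilde{O}$ from Appendix \ref{ap1}: dividing by $\hbar(t_i)$ (legitimate since it is nonzero) shows $t_j(\{f,g\})$ is of order $\tilde{O}(\hbar^{\epsilon}(t_i))$. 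The subtlety is that $\hbar$ is allowed to vary (Proposition \ref{prop7}): for fixed $A$ and $T_i$ the construction $t_j=h_{ij}\circ t_i$ produces, by rescaling, a family with $\hbar$ ranging over ${\mathbb C}^\times$, while $t_j(\{f,g\})$ as an element of $M_j$ is fixed. An element of a fixed module that is $\tilde{O}(\hbar^\epsilon)$ uniformly as $\hbar\to 0$ must be $0$. Hence $t_j(\{f,g\})=0$ for all $f,g$, i.e. $\{A,A\}\subset{\rm Ker}\,t_j$.

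For the second assertion, first note $t_i(\{A,A\})$ is non-empty because $\{A,A\}$ is. Then for $f,g\in A$ we have $h_{ij}(t_i(\{f,g\})) = (h_{ij}\circ t_i)(\{f,g\}) = t_j(\{f,g\}) = 0$ by the first part, so $t_i(\{f,g\})\in{\rm Ker}\,h_{ij}$; running over all $f,g$ gives $t_i(\{A,A\})\subset{\rm Ker}\,h_{ij}$. The main obstacle, and the step deserving the most care, is the "order argument" that lets one pass from the asymptotic identity $0=\sqrt{-1}\,\hbar(t_i)\,t_j(\{f,g\})+\tilde{O}(\hbar^{1+\epsilon})$ to the exact equation $t_j(\{f,g\})=0$; this is where the precise meaning of $\tilde{O}$ and the freedom to vary $\hbar$ (as recorded after Definition \ref{WQ} and in Proposition \ref{prop7}) must be used carefully, rather than treated as a formal manipulation. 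Everything else is bookkeeping.
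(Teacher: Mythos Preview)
Your overall approach matches the paper's exactly: use commutativity of $\langle t_j(A)\rangle$ to obtain $[t_j(f),t_j(g)]=0$, deduce $\hbar(t_i)\neq 0$ from noncommutativity of $\langle t_i(A)\rangle$, extract $t_j(\{f,g\})=0$ from (\ref{prop3_1}), and then read off $t_i(\{A,A\})\subset\ker h_{ij}$ via $t_j=h_{ij}\circ t_i$. The paper's proof simply asserts the extraction step ``Then we get $t_j(\{f,g\})=0$'' directly, treating the order-$\hbar$ term and the $\tilde{O}(\hbar^{1+\epsilon})$ remainder as formally independent orders.

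Where you go further is in trying to make that extraction rigorous via Proposition~\ref{prop7}, and there your argument has a wrinkle. Rescaling $t_i\mapsto t_i^x=\tfrac{x}{\hbar(t_i)}t_i$ also rescales $t_j=h_{ij}\circ t_i$ to $t_j^x=\tfrac{x}{\hbar(t_i)}t_j$, so $t_j^x(\{f,g\})=\tfrac{x}{\hbar(t_i)}t_j(\{f,g\})$ is \emph{not} fixed as $x$ varies. The identity $0=\sqrt{-1}\,x\,t_j^x(\{f,g\})+\tilde{O}(x^{1+\epsilon})$ then becomes $\tfrac{x^2}{\hbar(t_i)}t_j(\{f,g\})=\tilde{O}(x^{1+\epsilon})$, which is automatically satisfied (since $x^2$ is itself $\tilde{O}(x^{1+\epsilon})$ for $\epsilon\le 1$) and therefore yields no constraint on $t_j(\{f,g\})$. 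So the sentence ``while $t_j(\{f,g\})$ as an element of $M_j$ is fixed'' is the place where the argument slips. You are right that this is the one delicate point; the paper does not resolve it either but simply treats $\tilde{O}$ as a formal higher-order symbol and reads off the coefficient of $\hbar$.
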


\begin{proof}
When $\langle t_j (A) \rangle$ is a commutative algebra,
for any $f,g \in A$,
\begin{align}
0 &= [ t_j (f) , t_j (g)]=
i \hbar (t_i) t_j (\{ f,g \} ) + \tilde{O}(\hbar^{1+\epsilon}) .
\end{align}
Since 
$\langle t_i (A) \rangle$ is noncommutative, 
existence of $f,g \in A$ such that $[ t_i (f) , t_i(g) ]\neq 0$ 
derives $\hbar (t_i)\neq 0$.
Then we get 
$t_j (\{ f,g \} )= h_{ij}\circ t_i (\{ f,g \} )
=0$ for $\forall f, g \in A$.
$\hbar (t_i)\neq 0$ also means that
there exist $f, g \in A$ such that
$t_i (\{ f,g \} )\neq 0$, and
$t_i ( \{ A , A\} ) \subset {\rm Ker} ~h_{ij}$ is also obtained.
\end{proof}

\begin{proposition}\label{thm3}
Suppose that a noncommutativity parameter of a
weak quantization map 
$t_i : A \rightarrow t_i(A) \subset M_i$ is non-zero, 
i.e. $\hbar(t_i ) \neq 0$, where $M_i $ is some noncommutative $R$-algebra.
For an $R$-algebra homomorphism $h_{ij} : M_i \rightarrow M_j$ and
$ t_j := h_{ij}\circ t_i $ , $\langle t_j (A) \rangle$
is a commutative algebra if and only if 
\begin{align*}
\{ [ t_i(f) , t_i(g) ] ~| ~   f,g \in A \} \subset {\rm Ker}~ h_{ij} .
\end{align*}
\end{proposition}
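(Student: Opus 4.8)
The plan is to prove the two implications separately, in both cases exploiting the identity established in Proposition \ref{prop3}, namely
\begin{align}\label{proofplan_id}
[ t_j(f) , t_j(g) ]_{M_j} = \sqrt{-1}\,\hbar(t_i)\, t_j(\{f,g\}) + \tilde{O}(\hbar^{1+\epsilon}(t_i)),
\end{align}
together with the fact that $h_{ij}$ is an $R$-algebra homomorphism, so that $[t_j(f),t_j(g)]_{M_j} = h_{ij}([t_i(f),t_i(g)]_{M_i})$ for all $f,g\in A$.

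For the forward direction, assume $\langle t_j(A)\rangle$ is commutative. Then $[t_j(f),t_j(g)]_{M_j}=0$ for all $f,g\in A$, and since $[t_j(f),t_j(g)]_{M_j}=h_{ij}([t_i(f),t_i(g)]_{M_i})$, we get $[t_i(f),t_i(g)]_{M_i}\in{\rm Ker}\,h_{ij}$ for every pair $f,g$, which is exactly the claimed inclusion. This direction is essentially immediate and uses only Proposition \ref{prop2} implicitly (that $h_{ij}$ commuting through products makes $\langle t_j(A)\rangle = h_{ij}(\langle t_i(A)\rangle)$), though it is not even needed here.

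For the converse, suppose $\{[t_i(f),t_i(g)] \mid f,g\in A\}\subset{\rm Ker}\,h_{ij}$. First I would show $\langle t_j(A)\rangle$ is commutative. A generic element of $\langle t_j(A)\rangle$ is a noncommutative polynomial in the $t_j(f_k)=h_{ij}(t_i(f_k))$; by Proposition \ref{prop2}, $\langle t_j(A)\rangle = h_{ij}(\langle t_i(A)\rangle)$, so it suffices to check that $h_{ij}$ kills all commutators of elements of $\langle t_i(A)\rangle$. Any commutator $[a,b]$ with $a,b\in\langle t_i(A)\rangle$ expands, by bilinearity and the Leibniz/derivation property of the commutator, into an $R$-linear combination of terms of the form $u\,[t_i(f),t_i(g)]\,v$ with $u,v\in\langle t_i(A)\rangle$ and $f,g\in A$; since $h_{ij}$ is an algebra homomorphism and $h_{ij}([t_i(f),t_i(g)])=0$ by hypothesis, each such term maps to $h_{ij}(u)\cdot 0\cdot h_{ij}(v)=0$, hence $h_{ij}([a,b])=[h_{ij}(a),h_{ij}(b)]=0$. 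Therefore $\langle t_j(A)\rangle=h_{ij}(\langle t_i(A)\rangle)$ is commutative.

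The step I expect to be the main obstacle — or at least the one requiring care — is this combinatorial expansion of an arbitrary commutator in $\langle t_i(A)\rangle$ into a sum of ``sandwiched'' generating commutators $u[t_i(f),t_i(g)]v$. One handles it by induction on word length using the standard commutator identities $[ab,c]=a[b,c]+[a,c]b$ and $[a,bc]=[a,b]c+b[a,c]$, reducing any $[a,b]$ to the case where $a$ and $b$ are themselves generators $t_i(f),t_i(g)$; for those, one must note that the image $t_i(A)$ need not be closed under $R$-linear combinations inside $M_i$, but since $t_i$ is $R$-linear, $t_i(rf+sg)=r\,t_i(f)+s\,t_i(g)$, so commutators of $R$-linear combinations of elements of $t_i(A)$ are again $R$-linear combinations of the $[t_i(f),t_i(g)]$, staying inside the hypothesized kernel. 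Once this bookkeeping is in place, the argument closes. Note that the hypothesis $\hbar(t_i)\neq 0$ and the noncommutativity of $M_i$ are not actually needed for the converse; they are listed to make the statement a genuine refinement of Corollary \ref{coro3}, where $\hbar(t_i)\neq 0$ was used to pass between $t_i(\{A,A\})$ and the commutators.
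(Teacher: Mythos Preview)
Your proof is correct and follows essentially the same route as the paper's. Both directions match: for the forward implication you use $h_{ij}([t_i(f),t_i(g)])=[t_j(f),t_j(g)]$ exactly as the paper does; for the converse the paper also expands an arbitrary commutator $[m_l,m_p]$ in $\langle t_j(A)\rangle$ via the Leibniz rule into a sum of terms $(\cdots h_{ij}([e_{i_s},e_{j_n}])\cdots)$ with $e_{i_s},e_{j_n}\in t_i(A)$, which vanish by hypothesis. Two minor remarks: the identity \eqref{proofplan_id} from Proposition~\ref{prop3} that you announce in your plan is never actually used (nor is it in the paper's proof---the statement is purely about algebra homomorphisms and commutators); and your caution that $t_i(A)$ ``need not be closed under $R$-linear combinations'' is unnecessary, since $t_i$ is an $R$-module homomorphism and hence $t_i(A)$ is an $R$-submodule of $M_i$.
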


\begin{proof}
For any $ f,g \in A $ with $[ t_i(f), t_i(g) ] \neq 0$,
\begin{align*}
h_{ij}([ t_i(f), t_i(g) ])= [ h_{ij}\circ t_i(f), h_{ij}\circ t_i(g) ]
= [ t_j(f), t_j(g) ].
\end{align*}
Therefore, if $\langle t_j (A) \rangle$ is commutative,
$\forall [ t_i(f) , t_i(g) ] \in {\rm Ker}~ h_{ij}$ .\\

Conversely, suppose that $[ t_i(f) , t_i(g) ]$ is in ${\rm Ker}~ h_{ij}$
for $\forall f,g \in A$.
$\forall m_l \in \langle t_j (A) \rangle$ is expressed as
$\displaystyle m_l
= \sum_k \sum_{I_k} m_{lI_k} a_{I_k}$,
where $a_{I_k}= a_{i_1}\cdots a_{i_k}, 
~a_{i_p}\in {\rm Im}~ h_{ij}$ and $m_{lI_k} \in R$.
There exists $e_{i_p} \in t_i(A)$
such that $a_{i_p} = h_{ij}(e_{i_p})$.
Any commutator $[ m_l , m_p ]$ is written as
\begin{align*}
[ m_l , m_p ]=
\sum_{k,q} \sum_{I_k, J_q} m_{l I_k} m_{p J_q}
[a_{I_k} , a_{J_q} ], 
\end{align*}
and 
$\displaystyle [a_{I_k} , a_{J_q} ]
= \sum_{1\le s \le k} \sum_{1\le n \le q} 
(\cdots [ a_{i_s} , a_{j_n} ] \cdots )$. So,
\begin{align*}
[ m_l , m_p ]&=
\sum_{k,q} \sum_{I_k, J_q} m_{l I_k} m_{p J_q}
\sum_{1\le s \le k} \sum_{1\le n \le q} 
(\cdots [ a_{i_s} , a_{j_n} ] \cdots ).
\\
&= \sum_{k,q} \sum_{I_k, J_q} m_{l I_k} m_{p J_q}
\sum_{1\le s \le k} \sum_{1\le n \le q} 
(\cdots h_{ij}([ e_{i_s} , e_{j_n} ]) \cdots ).
\end{align*}
Since $[ e_{i_s} , e_{j_n} ] \in {\rm Ker}~h_{ij}$,
we find $[ m_l , m_p ]=0$.
\end{proof}

To avoid commutative algebras from quantum 
theories, we will restrict $R$-algebra
homomorphisms to injective ones later.

We found that
$ t_j = h_{ij}\circ t_i $ in Proposition \ref{prop3} 
is a weak quantization map. 
$\langle t_j (A) \rangle =  \langle h_{ij}\circ t_i \rangle \subset M_j$,
${\rm Im} h_{ij} $, and $M_j$ are not always equal. 
So, ``when $\langle  h_{ij}\circ t_i (A) \rangle $ generates  $M_j$?''
is a natural question.
The following discussion derives a useful tool to judge this problem.


\begin{proposition}\label{prop4}
Let $A,B,C \in ob ( \mbox{$R$-Mod} ) $ be  $R$-modules satisfying 
$\langle  A \rangle = M_i$,
$\langle  B \rangle = \langle  C \rangle = M_j$, where $M_i$ and $M_j$ 
are $R$-algebras.
If an $R$-algebra homomorphism $h_{ji} : M_j \rightarrow M_i$
satisfies $h_{ji} |_B (B)= A$, where  $h_{ji} |_B$ is defined
by restriction of its domain to $B$, then 
the image of $C$ by $h_{ji}$ generates $M_i$, i.e.
$\langle  h_{ji} |_C (C) \rangle = M_i$.
\end{proposition}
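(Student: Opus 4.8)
The plan is to exploit the fact that $B$ and $C$ generate the \emph{same} $R$-algebra $M_j$, together with Proposition~\ref{prop2} applied to the restricted homomorphism. First I would note that since $\langle B\rangle = \langle C\rangle = M_j$ and $h_{ji}$ is an $R$-algebra homomorphism defined on all of $M_j$, Proposition~\ref{prop2} gives $\langle h_{ji}(B)\rangle = h_{ji}(\langle B\rangle) = h_{ji}(M_j)$ and likewise $\langle h_{ji}(C)\rangle = h_{ji}(\langle C\rangle) = h_{ji}(M_j)$. Hence $\langle h_{ji}(B)\rangle = \langle h_{ji}(C)\rangle$ as subalgebras of $M_i$, purely because $B$ and $C$ are two generating sets of the common algebra $M_j$.

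Next I would connect this to the hypothesis $h_{ji}|_B(B) = A$. Since $h_{ji}|_B(B)$ is just $h_{ji}(B)$ (the restriction of the domain does not change the image of points of $B$), we have $\langle h_{ji}(B)\rangle = \langle A\rangle = M_i$ by assumption. Combining with the previous step, $\langle h_{ji}(C)\rangle = \langle h_{ji}(B)\rangle = M_i$. Finally, since $h_{ji}|_C(C) = h_{ji}(C)$, this reads $\langle h_{ji}|_C(C)\rangle = M_i$, which is exactly the claim.

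The argument is essentially a two-line composition of Proposition~\ref{prop2} with itself once for $B$ and once for $C$; the only thing to be careful about is the bookkeeping of ``restriction of domain'' versus ``image'', i.e.\ that $h_{ji}|_B(B)$ and $h_{ji}(B)$ denote the same subset of $M_i$, and similarly for $C$. There is no real obstacle here — the content is entirely carried by the earlier proposition, and the hypothesis is tailored so that $A$, $h_{ji}(B)$ and $h_{ji}(C)$ all generate $M_i$ for the same structural reason. The main (minor) point worth spelling out in the writeup is the first step, namely that two generating sets of one algebra map, under a fixed algebra homomorphism, to two generating sets of the image algebra; everything else is immediate.
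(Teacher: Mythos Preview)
Your proof is correct and follows essentially the same approach as the paper: both arguments hinge on the observation that since $B$ and $C$ generate the same algebra $M_j$, their images under the algebra homomorphism $h_{ji}$ generate the same subalgebra of $M_i$, which by the hypothesis $h_{ji}(B)=A$ and $\langle A\rangle=M_i$ must be all of $M_i$. Your version is somewhat more explicit in invoking Proposition~\ref{prop2} twice (once for $B$, once for $C$) to get $\langle h_{ji}(B)\rangle = h_{ji}(M_j) = \langle h_{ji}(C)\rangle$, whereas the paper's proof compresses this into the remark that elements of $B$ are polynomials in elements of $C$; the content is the same.
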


\begin{proof}
From $\langle  B \rangle = \langle  C \rangle = M_j$,
an arbitrary $b \in B$ is expressed by elements of $C$.
By $h_{ji} |_B (B)= A$ and $\langle  A \rangle = M_i$,
$h_{ji} |_B (B)$ generates $M_i$.
Then $\langle  h_{ji} |_C (C) \rangle = M_i$ is given.
\end{proof}


We introduce a sequence of weak quantization maps 
$t_{\mu j}^{(l)}: A^\mu \rightarrow t_{\mu j}^{(l)}(A^\mu ) \subset M_j,
 (l\in I_{j}, \mu=1,2, \cdots ) $
that generate an $R$-algebra $M_j$
\begin{align}
M_j &= \langle t_{\mu j}^{(l)}({A}^\mu ) \rangle = \cdots 
=  \langle t_{\mu j}^{(n)}({A}^\mu ) \rangle \notag =\cdots \\
&= \langle t_{\nu j}^{(l)}({A}^\nu ) \rangle = \cdots 
=  \langle t_{\nu j}^{(n)}({A}^\nu ) \rangle = \cdots \notag \\
& \vdots \mbox{ \hspace{3cm} },
\end{align}
where ${A}^\mu , {A}^\nu , \cdots \in ob( {{\mathscr Poisss}})$ and
$I_j$ is an index set.

\begin{proposition}\label{prop5}
Let ${A}^\mu , {A}^\tau $ be Poisson algebras.
Let $t_{\mu j}^{(k)}:{A}^\mu \rightarrow t_{\mu j}^{(k)}(A^\mu )$ and 
$t_{\mu i}^{(l)}: {A}^\mu \rightarrow t_{\mu i}^{(l)}(A^\mu )$
be weak quantization maps 
such that 
$\langle t_{\mu j}^{(k)}({A}^\mu ) \rangle =M_j$ , 
$\langle t_{\mu i}^{(l)} ({A}^\mu ) \rangle =M_i$.
If an $R$-algebra homomorphism $m_{ji} : M_j \rightarrow M_i$
satisfies commutativity $m_{ji} \circ t_{\mu j}^{(k)} = t_{\mu i}^{(l)}$,
then $\forall t_{\tau j}^{(n)} \in wkQ$ 
such that 
$\langle t_{\tau j}^{(n)}({A}^\tau ) \rangle =M_j$,
$t_i : = m_{ji} \circ t_{\tau j}^{(n)} : A^{\tau} \rightarrow t_i(A^{\tau})
\subset M_i$
satisfies
\begin{align}
[ t_i (f) , t_i(g)]_{M_i} &=\sqrt{-1}\hbar (t_{\tau j}^{(n)}) t_i(\{f , g \})+\tilde{O}(\hbar^{1+\epsilon} (t_{\tau j}^{(n)}))  ~~\ ( f, g \in A^\tau ) ,
\end{align}
and
\begin{align}
\langle t_{i} ({A}^\tau ) \rangle =M_i .
\end{align}
\end{proposition}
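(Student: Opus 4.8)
The plan is to establish the two conclusions independently: the commutator estimate is a direct instance of Proposition \ref{prop3}, and the generation statement $\langle t_i(A^\tau)\rangle = M_i$ is a direct instance of Proposition \ref{prop4}. Essentially all of the work is lining up the hypotheses of those two propositions with the data at hand, so I would spend the bulk of the write-up on that matching and keep the rest to a couple of lines.

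For the commutator relation, I would observe that $t_{\tau j}^{(n)} \in wkQ$ is a weak quantization map of $A^\tau$ whose image generates $M_j$ by assumption, and that $m_{ji} : M_j \rightarrow M_i$ is an $R$-algebra homomorphism. Hence Proposition \ref{prop3}, applied with the quantization map $t_{\tau j}^{(n)}$ playing the role of ``$t_i$'' and $m_{ji}$ playing the role of ``$h_{ij}$'', states precisely that $t_i = m_{ji}\circ t_{\tau j}^{(n)}$ satisfies $[t_i(f), t_i(g)]_{M_i} = \sqrt{-1}\hbar(t_{\tau j}^{(n)})\, t_i(\{f,g\}) + \tilde{O}(\hbar^{1+\epsilon}(t_{\tau j}^{(n)}))$ for all $f,g \in A^\tau$. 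No further argument is needed for this half.

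For the generation statement, the key point is that the commutativity hypothesis $m_{ji}\circ t_{\mu j}^{(k)} = t_{\mu i}^{(l)}$ forces $m_{ji}\bigl(t_{\mu j}^{(k)}(A^\mu)\bigr) = t_{\mu i}^{(l)}(A^\mu)$ as $R$-submodules of $M_i$, by evaluating both sides on each element of $A^\mu$. I would then invoke Proposition \ref{prop4} with the $R$-module $t_{\mu i}^{(l)}(A^\mu)$ in the role of ``$A$'' (it generates $M_i$ by hypothesis, so $\langle A\rangle = M_i$), with $t_{\mu j}^{(k)}(A^\mu)$ in the role of ``$B$'' and $t_{\tau j}^{(n)}(A^\tau)$ in the role of ``$C$'' (both generate $M_j$ by hypothesis), and with $m_{ji}$ in the role of ``$h_{ji}$''. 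The required condition $h_{ji}|_B(B) = A$ is exactly the identity $m_{ji}\bigl(t_{\mu j}^{(k)}(A^\mu)\bigr) = t_{\mu i}^{(l)}(A^\mu)$ noted above, so Proposition \ref{prop4} yields $\langle m_{ji}|_C(C)\rangle = M_i$, i.e. $\langle t_i(A^\tau)\rangle = M_i$.

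The only real obstacle here is notational rather than mathematical: one must check that the image sets $t_{\mu j}^{(k)}(A^\mu)$, $t_{\mu i}^{(l)}(A^\mu)$, $t_{\tau j}^{(n)}(A^\tau)$ are genuinely (finitely generated) $R$-modules, which they are, being images of finitely generated $R$-modules under $R$-module homomorphisms, so that Proposition \ref{prop4} applies verbatim; and one should note that ``$B$'' and ``$C$'' in Proposition \ref{prop4} are permitted to be distinct submodules generating the same algebra $M_j$, which is exactly the situation since $A^\mu$, $A^\tau$ and the two quantization maps may all differ. With those points dispatched, the proof reduces to a two-line application of Propositions \ref{prop3} and \ref{prop4}.
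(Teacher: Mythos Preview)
Your proposal is correct and follows exactly the same approach as the paper's own proof, which simply states that the proposition follows from Proposition~\ref{prop3} and Proposition~\ref{prop4}. Your write-up is in fact more detailed than the paper's one-line proof, carefully matching the roles of $A$, $B$, $C$, and $h_{ji}$ in Proposition~\ref{prop4}, which is a welcome elaboration.
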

\begin{align*}
\xymatrix{
A^\mu \ar[r]^{t_{\mu j}^{(k)}}\ar@/_48pt/[dddr]_{t_{\mu i}^{(l)}\mbox{\ {}}} & 
t_{\mu j}^{(k)} (A^\mu) \ar@{^{(}-_>}[r]
& M_j \ar[dd]^{m_{ji}}\\
A^\tau \ar[r]^{t_{\tau j}^{(n)}}\ar[rd]^{t_i} & {t_{\tau j}^{(n)}} (A^\tau) 
\ar@{^{(}-_>}[ru]
& {}\\
{}& t_i(A^\tau) \ar@{^{(}-_>}[r] & M_i \\
{}& t_{ \mu i}^{(l)}(A^\mu ) \ar@{^{(}-_>}[ru]& {} 
}
\end{align*}

\begin{proof}
This proposition follows from Proposition \ref{prop3} and 
Proposition \ref{prop4}.

\end{proof}

\begin{rem}
Note that the existence of Poisson maps between 
${A}^\mu $ and ${A}^\tau $ are not requested in Proposition
\ref{prop5}.
\end{rem}
\begin{example}\label{ex_T2_S2}
As an example, let us consider fuzzy spheres and fuzzy tori.
These are obtained by matrix regularization for $S^2$ and 
$T^2$, and
their processes are summarized
in Appendix \ref{appendix_fuzzy}.
$\mathcal{A}$ and $\mathcal{B}$ denote the Poisson algebras defined
as a set of functions on $S^2$ and 
$T^2$, respectively.
$t_k:\mathcal{A}\to T_k := t_k (\mathcal{A})\subset M_k={\rm Mat}_{k}({\mathbb C})$ and
$q_k:\mathcal{B}\to Y_k := q_k (\mathcal{B})\subset M_k={\rm Mat}_{k}({\mathbb C})$ are
defined as weak quantization maps.
The details are given in  Appendix \ref{appendix_fuzzy}.
It is known that both $t_k (\mathcal{A})$ and $q_k (\mathcal{B})$ 
generate the same matrix algebra ${\rm Mat}_{k}$.

Let introduce a map 
$t_{k \oplus k} : \mathcal{A}\to T_k \oplus T_k
\subset M_k \oplus M_k$
as
\begin{align}
t_{k \oplus k} (f) = 
\left(
\begin{array}{c|c}
t_k(f) &0 \\
\hline
0 & t_k(f)
\end{array}
\right) \qquad (f \in \mathcal{A}) .
\end{align}
This satisfies 
$ [t_{k \oplus k}(f),t_{k \oplus k}(g)]_M =
\sqrt{-1}\hbar (t_k) t_{k \oplus k}(\{f,g\})
+\tilde{O}(\hbar^{1+\epsilon} (t_k))$. 
In other words,
$t_{k \oplus k}$ is a weak quantization map.
We define 
$q_{k \oplus k} : \mathcal{B}\to Y_k \oplus Y_k$
as a weak quantization map, in the same way.
Let $\pi : M_k \oplus M_k \rightarrow M_k$ 
be a projection map to one $M_k$.
From Proposition \ref{prop5},
by using a projection  
$\pi$ and commutativity $\pi \circ t_{k \oplus k}=t_k$,
we obtain $q_k = \pi \circ q_{k \oplus k}$ as a weak quantization map
and $\langle q_k (\mathcal{B})\rangle = M_k$. 
This result reproduces known facts.
\begin{align*}
\xymatrix{
\mathcal{A} \ar[r]^-{t_{k \oplus k}}\ar@/_48pt/[dddr]_{t_{k}} & 
T_k \oplus T_k \ar@{^{(}-_>}[r]
& M_k\oplus M_k \ar[dd]^{\pi}\\
\mathcal{B} \ar[r]^-{q_{k \oplus k}}\ar[rd]^{q_k} & Y_k \oplus Y_k
\ar@{^{(}-_>}[ur]
& {}\\
{}& Y_k\ar@{^{(}-_>}[r] & M_k\\
{}& T_k \ar@{^{(}-_>}[ur] & {} 
}
\end{align*}

To prevent later confusion, a note is made here.
From the fact that this projection $\pi$ is 
an $R$-algebra homomorphism but not injective one,
this $\pi$ is removed from the set of
$R$-algebra homomorphisms we will consider 
in the following parts of this paper.
As we will see in Section \ref{sect2+1},
only injective $R$-algebra homomorphisms are in $Mor(QW)$.
\end{example}

\subsection{Quantization Maps}

The parameter of noncommutativity $\hbar$ is
determined by the choice of weak quantization map
i.e. $\hbar : wkQ \rightarrow {\mathbb C}$.
The set of the weak quantization maps $wkQ$ is the
disjoint union as 
\begin{align*}
wkQ= \bigsqcup_{x\in {\mathbb R}_{\ge 0}} Qh_x ,
\end{align*}
where 
\begin{align*}
Qh_{x}:= \{ q \in wkQ \mid  | \hbar(q)| = x \in {\mathbb R}_{\ge 0} \}.
\end{align*}

We denote the set of all targets of $Qh_{x}$ by
\begin{align}
Qh_{x}( {{\mathscr Poisss}} ) := 
\{ t(q) ~| ~ q \in Qh_{x} \},
\end{align}
where 
$ t(q) $ is a target of $q$.
Particularly, $Qh_{0}:= \{ q \in wkQ \mid  | \hbar(q)| = 0 \}$ 
is a set of maps to commutative algebras including the identity
$Id_A$ for $A \in U_p ({\mathscr Poisss})$, where $U_p$ is a forgetful
functor from ${\mathscr Poisss}$ to $QW$ defined in Section \ref{sect2+1}.
\\

\begin{definition}
The set of quantization maps $Q$ is defined by
\begin{align}
Q:= wkQ \backslash Qh_{0}=\bigsqcup_{x \neq 0} Qh_x ,
\end{align}
and the set of all targets of the quantization maps is defined by
\begin{align}
Q( {{\mathscr Poisss}} ) := 
\{ t(q) ~| ~ q \in Q \}
\end{align} 
We call $q \in Q$ a quantization map.
\end{definition}
\bigskip

Here, let us explain why we define the Poisson morphism in ${\mathscr Poisss}$
is surjective.
At first, we see the following proposition.
\begin{proposition} \label{prop5+}
Let $A^\mu , A^\nu $ be Poisson algebras and let $T_i$ be an $R$-module
as a subset of an $R$-algebra $M_i$.
For a weak quantization map $t_{\nu i } : A^{\nu} \rightarrow T_i$ and a Poisson morphism $\phi : A^\mu \rightarrow A^\nu$, 
$t_{\nu i }\circ \phi : A^\mu \rightarrow T_i$ is also a weak quantization map.
\end{proposition}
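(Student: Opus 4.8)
The plan is to verify directly that $t_{\nu i}\circ\phi$ is an $R$-module homomorphism equipped with a suitable noncommutativity parameter satisfying the defining relation (\ref{lie}). The composite is linear as a composition of linear maps (a Poisson morphism is in particular linear, and $t_{\nu i}$ is an $R$-module homomorphism), so $t_{\nu i}\circ\phi\in Hom_{R\text{-Mod}}(A^\mu, T_i)$. The natural candidate for the constant is $\hbar(t_{\nu i}\circ\phi):=\hbar(t_{\nu i})$. It then remains to check, for arbitrary $f,g\in A^\mu$, that
\begin{align*}
[\,t_{\nu i}(\phi(f)),\,t_{\nu i}(\phi(g))\,]_{M_i}
=\sqrt{-1}\,\hbar(t_{\nu i})\,t_{\nu i}(\phi(\{f,g\}_{A^\mu}))
+\tilde{O}(\hbar^{1+\epsilon}(t_{\nu i})).
\end{align*}

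First I would apply the weak quantization property of $t_{\nu i}$ to the elements $\phi(f),\phi(g)\in A^\nu$, which gives
\begin{align*}
[\,t_{\nu i}(\phi(f)),\,t_{\nu i}(\phi(g))\,]_{M_i}
=\sqrt{-1}\,\hbar(t_{\nu i})\,t_{\nu i}(\{\phi(f),\phi(g)\}_{A^\nu})
+\tilde{O}(\hbar^{1+\epsilon}(t_{\nu i})).
\end{align*}
Then, since $\phi$ is a Poisson morphism, $\{\phi(f),\phi(g)\}_{A^\nu}=\phi(\{f,g\}_{A^\mu})$, so $t_{\nu i}(\{\phi(f),\phi(g)\}_{A^\nu})=t_{\nu i}(\phi(\{f,g\}_{A^\mu}))$, which yields exactly the required identity with the same constant. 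One should also note the degenerate case: if $\hbar(t_{\nu i})=0$, then by definition $[t_{\nu i}(a),t_{\nu i}(b)]_{M_i}=0$ and $t_{\nu i}(\{a,b\}_{A^\nu})=0$ for all $a,b\in A^\nu$; specializing $a=\phi(f),b=\phi(g)$ and using $\phi(\{f,g\}_{A^\mu})=\{\phi(f),\phi(g)\}_{A^\nu}$ shows $[t_{\nu i}(\phi(f)),t_{\nu i}(\phi(g))]_{M_i}=0$ and $t_{\nu i}\circ\phi(\{f,g\}_{A^\mu})=0$, so $t_{\nu i}\circ\phi$ falls into the $\hbar=0$ branch of Definition \ref{WQ}, consistent with setting $\hbar(t_{\nu i}\circ\phi)=0$.

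This argument is essentially a one-line substitution, so there is no serious obstacle; the only point requiring a little care is bookkeeping with the $\tilde{O}(\hbar^{1+\epsilon})$ term — one must observe that it is controlled by $\hbar(t_{\nu i})$, the same parameter inherited by the composite, so the error term is of the correct order $\tilde{O}(\hbar^{1+\epsilon}(t_{\nu i}\circ\phi))$. Notably, surjectivity of $\phi$ is not used here; it is the fact that $\phi$ intertwines the Poisson brackets (and is linear) that does all the work.
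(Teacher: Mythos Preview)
Your proof is correct and follows essentially the same approach as the paper: apply the weak quantization property of $t_{\nu i}$ to $\phi(f),\phi(g)\in A^\nu$, then use that $\phi$ preserves Poisson brackets to replace $\{\phi(f),\phi(g)\}_{A^\nu}$ by $\phi(\{f,g\}_{A^\mu})$. The paper's proof is just these two steps written as a single display; your version is more carefully spelled out (linearity, the $\hbar=0$ branch, the observation that surjectivity is not used), but the substance is identical.
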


\begin{proof}
\begin{align*}
\sqrt{-1}\hbar(t_{\nu i }) t_{\nu i }\circ \phi (\{f,g\})+ \tilde{O}(\hbar^{1+\epsilon})
&=\sqrt{-1}\hbar(t_{\nu i }) t_{\nu i }(\{ \phi (f), \phi(g)\})+\tilde{O}(\hbar^{1+\epsilon})\\
&= [t_{\nu i }( \phi (f)) ,~ t_{\nu i }(\phi(g))]+\tilde{O}(\hbar^{1+\epsilon}).
\end{align*}

\end{proof}

Note that
Proposition \ref{prop5+}
does not show that $t_{\nu i} \circ \phi$
is always in $Q$ even if $t_{\nu i} \in Q$.
For example, suppose that $C\subset T_i \subset M_i$ 
is a commutative subalgebra of $M_i$, and
${\rm Im}\phi \subset t_{\nu i}^{-1}(C)$.
Then the image of $t_{\nu i} \circ \phi$ is in $C$
and $\hbar (t_{\nu i} \circ \phi ) = 0$, i.e.
$t_{\nu i} \circ \phi \notin Q$.
On the other hand, if $\phi$ is surjective
$t_{\nu i} \circ \phi (A^\mu )=t_{\nu i} (A^\nu)$.
This fact and Proposition \ref{prop5+} derives the following.

\begin{proposition} \label{prop6}
Let $A^\mu , A^\nu $ be Poisson algebras.
Let $t_{\nu i } \in Q$ be a quantization map
whose source is $A^{\nu}$.
If a Poisson morphism $\phi : A^\mu \rightarrow A^\nu$ is
surjective, then
$t_{\nu i }\circ \phi \in Q$.
\end{proposition}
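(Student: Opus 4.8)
The plan is to combine Proposition~\ref{prop5+} with the surjectivity hypothesis in the simplest possible way. Proposition~\ref{prop5+} already tells us that $t_{\nu i}\circ\phi$ is a weak quantization map, so the only thing left to check is that its noncommutativity parameter is nonzero, i.e.\ that $t_{\nu i}\circ\phi\in Q=wkQ\setminus Qh_0$. Since $t_{\nu i}\in Q$, we have $\hbar(t_{\nu i})\neq 0$, which by Definition~\ref{WQ} means there exist $a,b\in A^\nu$ with $[t_{\nu i}(a),t_{\nu i}(b)]_{M_i}\neq 0$ (otherwise we would have set $\hbar(t_{\nu i})=0$). The key point is then: because $\phi$ is surjective, there exist $f,g\in A^\mu$ with $\phi(f)=a$ and $\phi(g)=b$, so $[t_{\nu i}\circ\phi(f),\,t_{\nu i}\circ\phi(g)]_{M_i}=[t_{\nu i}(a),t_{\nu i}(b)]_{M_i}\neq 0$.

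First I would invoke Proposition~\ref{prop5+} to record that $t_{\nu i}\circ\phi$ satisfies the defining relation~(\ref{lie}) with constant $\hbar(t_{\nu i})$; the computation there shows the commutator of images equals $\sqrt{-1}\,\hbar(t_{\nu i})\,t_{\nu i}\circ\phi(\{f,g\})+\tilde O(\hbar^{1+\epsilon})$. Next I would argue, as above, that the left-hand side is not identically zero over all $f,g\in A^\mu$: surjectivity of $\phi$ transports the witness pair $(a,b)\in A^\nu\times A^\nu$ with nonvanishing commutator back to a pair $(f,g)\in A^\mu\times A^\mu$. Hence the convention "$[t(f),t(g)]=0$ and $t(\{f,g\})=0$ for all $f,g$, put $\hbar=0$" in Definition~\ref{WQ} does not apply, so $\hbar(t_{\nu i}\circ\phi)\neq 0$, which is exactly $t_{\nu i}\circ\phi\in Q$.

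I do not expect a serious obstacle here; the statement is essentially the contrapositive of the counterexample discussion immediately preceding it (where non-surjective $\phi$ could land entirely inside a commutative subalgebra). The one subtlety worth being careful about is the bookkeeping around the noncommutativity parameter: from~(\ref{prop3_1})-type reasoning one only knows a priori that $t_{\nu i}\circ\phi$ satisfies relation~(\ref{lie}) with the constant $\hbar(t_{\nu i})$ appearing on the right, but \emph{a priori} this need not coincide with $\hbar(t_{\nu i}\circ\phi)$ as assigned by Definition~\ref{WQ} — indeed, as the remark after Proposition~\ref{prop3} stresses, a map can satisfy such a relation with a nonzero constant yet still have been assigned parameter $0$ if its image commutes. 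The resolution is precisely the surjectivity-plus-nonvanishing-commutator argument above, which rules out that degenerate case. So the proof is genuinely short: cite Proposition~\ref{prop5+}, then use surjectivity to exhibit $f,g$ with $[t_{\nu i}\circ\phi(f),t_{\nu i}\circ\phi(g)]\neq 0$, and conclude $t_{\nu i}\circ\phi\notin Qh_0$, i.e.\ $t_{\nu i}\circ\phi\in Q$.
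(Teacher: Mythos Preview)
Your proposal is correct and follows essentially the same approach as the paper: the paper's argument (given in the sentence immediately preceding the proposition) is simply that surjectivity of $\phi$ gives $t_{\nu i}\circ\phi(A^\mu)=t_{\nu i}(A^\nu)$, which combined with Proposition~\ref{prop5+} yields the result. Your version spells out more carefully why the image equality forces $\hbar(t_{\nu i}\circ\phi)\neq 0$ by transporting a witness pair, but this is exactly the content of the paper's one-line observation.
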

This property is important to make the category $QW$,
so an ancillary condition that the morphisms are surjective
is added to the conditions for $Mor ({\mathscr Poisss})$.\\
\bigskip

Next, we make sure that we can treat $\hbar $ as a continuous 
parameter in $Q$.
\begin{proposition}\label{prop7}
Let $A$ be a Poisson algebra. 
Let $T_i$ be an $R$-module.
For $\forall q_i \in Q : A \rightarrow T_i$ and  
$\forall x \in {\mathbb C}^\times$,
\begin{align}
q_i^x : = \frac{x}{\hbar({q_i})} q_i
\end{align}
is a quantization map from $A$ to $T_i$,
such that 
$\hbar(q_i^x) =x , ~ q_i(A) = q_i^x(A) $. 
\end{proposition}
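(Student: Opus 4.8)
The plan is to verify directly that the rescaled map $q_i^x := \frac{x}{\hbar(q_i)} q_i$ satisfies the defining condition (\ref{lie}) of a weak quantization map with the claimed noncommutativity parameter, and then to observe that its noncommutativity parameter is nonzero, placing it in $Q$ rather than in $Qh_0$. Since $q_i \in Q$, we have $\hbar(q_i) \neq 0$, so the scalar $\frac{x}{\hbar(q_i)} \in {\mathbb C}^\times$ is well-defined, and $q_i^x$ is again an $R$-module homomorphism from $A$ into the same target $T_i \subset M_i$ (scalar multiples of linear maps are linear, and the image stays inside the $R$-module $T_i$).

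The key computation is to expand the commutator bilinearly. For arbitrary $f, g \in A$,
\begin{align*}
[q_i^x(f), q_i^x(g)]_{M_i}
&= \left(\frac{x}{\hbar(q_i)}\right)^{\!2} [q_i(f), q_i(g)]_{M_i} \\
&= \left(\frac{x}{\hbar(q_i)}\right)^{\!2}
\left( \sqrt{-1}\,\hbar(q_i)\, q_i(\{f,g\}) + \tilde{O}(\hbar^{1+\epsilon}(q_i)) \right) \\
&= \sqrt{-1}\, x \cdot \frac{x}{\hbar(q_i)}\, q_i(\{f,g\}) + \tilde{O}(\hbar^{1+\epsilon}(q_i)) \\
&= \sqrt{-1}\, x\, q_i^x(\{f,g\}) + \tilde{O}(\hbar^{1+\epsilon}(q_i)).
\end{align*}
Here I use that $q_i \in wkQ$ gives the second line, and that $\frac{x}{\hbar(q_i)}\, q_i(\{f,g\}) = q_i^x(\{f,g\})$ by definition. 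One then wants to re-express the error term $\tilde{O}(\hbar^{1+\epsilon}(q_i))$ as $\tilde{O}(\hbar^{1+\epsilon}(q_i^x))$; since $\hbar(q_i^x) = x$ is a fixed nonzero constant and $\hbar(q_i)$ is likewise a fixed nonzero constant, the two symbols $\tilde{O}(\hbar^{1+\epsilon}(q_i))$ and $\tilde{O}(\hbar^{1+\epsilon}(q_i^x))$ both denote the same magnitude of fixed-order correction, differing only by the constant factor $(x/\hbar(q_i))^{1+\epsilon}$, which is absorbed into the implied constant of the $\tilde{O}$ notation (cf.\ Appendix \ref{ap1}). So $q_i^x$ satisfies (\ref{lie}) with parameter $x$, hence $\hbar(q_i^x) = x$.

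Finally, since $x \in {\mathbb C}^\times$ we have $|\hbar(q_i^x)| = |x| \neq 0$, so $q_i^x \notin Qh_0$, i.e. $q_i^x \in Q$. The identity $q_i^x(A) = q_i(A)$ is immediate: multiplication by the nonzero scalar $\frac{x}{\hbar(q_i)}$ is a bijection of $M_i$ (indeed of $T_i$, viewing $T_i$ as the $R$-module it is), so the image set is unchanged. The only point requiring a little care — and the one I would flag as the main (minor) obstacle — is the bookkeeping of the Landau-type symbol $\tilde{O}$ under rescaling of $\hbar$: one must confirm from the definition in Appendix \ref{ap1} that rescaling the argument of $\tilde{O}(\hbar^{1+\epsilon})$ by a fixed nonzero constant preserves the symbol, which is exactly the homogeneity built into that notation. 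Everything else is routine bilinearity and the fact that a nonzero scalar multiple of a linear map is linear with the same image.
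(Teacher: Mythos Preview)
Your proof is correct and follows essentially the same route as the paper: both arguments expand $[q_i^x(f),q_i^x(g)]$ by bilinearity, pull out the factor $(x/\hbar(q_i))^2$, apply the weak quantization relation for $q_i$, and then absorb the constant rescaling into the $\tilde{O}$ symbol. You are somewhat more thorough than the paper in explicitly verifying that $q_i^x \in Q$ (not merely $wkQ$) and that $q_i^x(A)=q_i(A)$, points the paper's proof leaves implicit.
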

\begin{proof}
From $[q_i (f) , q_i (g) ]= \sqrt{-1}\hbar (q_i) q_i(\{ f ,g \})
+\tilde{O}(\hbar^{1+\epsilon})$,
we obtain
\begin{align*}
[q_i^x (f) , q_i^x (g) ]= i x q_i^x(\{ f ,g \})
+\frac{x^2}{\hbar^2(q_i) }\tilde{O}(\hbar^{1+\epsilon}).
\end{align*}
Because $\displaystyle \frac{x^2}{\hbar^2(q_i) }\tilde{O}(\hbar^{1+\epsilon})
= \tilde{O}(x^{1+\epsilon})$, the proof is finished.
\end{proof}
This proposition teaches us that
there are uncountable infinite number 
quantizations for each target $T_i$.


\section{Category of Quantization }\label{sect2+1}

In this section, we construct a category $QW$
that describe whole space of the all quantizations
of all Poisson algebras.

We denote the target and the source of a morphism $q \in Q$
by $t(q)$ and $s(q)$, respectively.
Recall that $t(q)$ is a subspace of some
$R$-algebra for each $q$.

At the first step,
we introduce a category of target spaces $\{ t(q) |~ q \in Q \}$.

\begin{definition}
$QP$ is a subcategory of $R$-Mod defined as
follows.
\begin{itemize}
\item $ob (QP) := Q({\mathscr Poisss})= \{ t(q) |~ q \in Q \}.$
\item Suppose that two objects 
$t(q_i) , t(q_j) $ are subsets of 
$R$-algebras $M_i, M_j$, respectively.
The morphism $Mor (t(q_i), t(q_j))$ is the set of
all $h_{ij} |_{t(q_i)}: t(q_i)\rightarrow t(q_j)$. 
Here, $h_{ij} |_{t(q_i)}$
is given as an injective $R$-algebra homomorphism 
$h_{ij} : M_i \rightarrow M_j$
whose source is restricted into $t(q_i)$, and 
${\rm Im}(h_{ij}|_{t(q_i)}) \subset t(q_j)$:
\begin{align*}
&Mor (t(q_i), t(q_j)):=\\
&\{ h_{ij} |_{t(q_i)} : t(q_i)\rightarrow t(q_j)~ | ~
h_{ij} : M_i \rightarrow M_j 
\mbox{ is 
an injective $R$-algebra homomorphism } \}.
\end{align*}
\end{itemize}

\end{definition}

The composition
of $m_{ij}= h_{ij}|_{t(q_i)} \in Mor(t(q_i) , t(q_j))$ and 
$m_{jk}=h_{jk}|_{t(q_j)} \in Mor(t(q_j) , t(q_k))$ is
well-defined by
$$m_{jk} \circ m_{ij}
= h_{jk}|_{t(q_j)} \circ h_{ij}|_{t(q_i)}
=(h_{jk} \circ h_{ij})|_{t(q_i)}.
$$

Next, we consider a category 
${\mathscr Poisss} \bigsqcup QP$ defined by 
\begin{itemize}
\item $ob ( {\mathscr Poisss} \bigsqcup QP )=
ob ( {\mathscr Poisss} )  \bigsqcup  ob ( QP )$
\item $Mor ( {\mathscr Poisss} \bigsqcup QP )=
Mor ( {\mathscr Poisss} ) \bigsqcup Mor ( QP ).$
\end{itemize}

Since $ob ( {\mathscr Poisss} ) \cap  ob ( QP ) = \emptyset$
and $Mor ( {\mathscr Poisss} ) \cap  Mor ( QP ) = \emptyset$,
it is trivial that ${\mathscr Poisss} \bigsqcup QP$ is a category.

The final step to introduce 
a category $QW$
that describe whole space of the all quantizations
of all Poisson algebras is to define its morphism
by
\begin{align}\label{mor_QW}
Mor ( QW ):= Mor ( {\mathscr Poisss} ) \bigsqcup Mor ( QP ) \bigsqcup Q .
\end{align}

\begin{proposition}\label{prop8}
If $\forall f, g  \in Mor ( QW ) $ satisfy $s(f) = t(g)$,
then $f \circ g \in  Mor ( QW ) $.
\end{proposition}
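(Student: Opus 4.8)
The plan is to prove this by a case analysis on which of the three disjoint blocks of $Mor(QW) = Mor({\mathscr Poisss}) \sqcup Mor(QP) \sqcup Q$ contain the morphisms $f$ and $g$. The first thing I would record is a type constraint that prunes the nine \emph{a priori} combinations down to four: a morphism in $Mor({\mathscr Poisss})$ has both endpoints in $ob({\mathscr Poisss})$; a morphism in $Mor(QP)$ has both endpoints in $ob(QP)$; and a morphism in $Q$ has its source in $ob({\mathscr Poisss})$ and its target in $ob(QP)$. Since $ob({\mathscr Poisss}) \cap ob(QP) = \emptyset$, the hypothesis $s(f) = t(g)$ is satisfiable only when (i) $f,g \in Mor({\mathscr Poisss})$; (ii) $f,g \in Mor(QP)$; (iii) $f \in Mor(QP)$ and $g \in Q$; or (iv) $f \in Q$ and $g \in Mor({\mathscr Poisss})$.

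Cases (i), (ii) and (iv) I expect to be short. For (i) I would just note that the composite of two surjective Poisson morphisms is again a surjective Poisson morphism, so $f \circ g \in Mor({\mathscr Poisss})$. For (ii), writing $g = h_{ij}|_{t(q_i)}$ and $f = h_{jk}|_{t(q_j)}$ with $h_{ij}:M_i \to M_j$ and $h_{jk}: M_j \to M_k$ injective $R$-algebra homomorphisms, the composite is $(h_{jk}\circ h_{ij})|_{t(q_i)}$, and $h_{jk}\circ h_{ij}$ is again an injective $R$-algebra homomorphism whose restriction sends $t(q_i)$ into $t(q_k)$, so $f\circ g \in Mor(QP)$. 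For (iv), $g$ is a surjective Poisson morphism and $f \in Q$, so $f \circ g \in Q$ is exactly Proposition \ref{prop6}.

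The one case I expect to require genuine work is (iii), and I think it is the main obstacle. Here $g = q_i \in Q$ with source a Poisson algebra $A$ and target $t(q_i) = T_i \subset M_i$, while $f = h_{ij}|_{t(q_i)}$ for an injective $R$-algebra homomorphism $h_{ij}: M_i \to M_j$; thus $f \circ g = h_{ij}\circ q_i : A \to t(q_j)$ is an $R$-module homomorphism. Since $h_{ij}$ is an $R$-algebra homomorphism and $q_i \in wkQ$, the one-line computation in the proof of Proposition \ref{prop3} shows that $h_{ij}\circ q_i$ satisfies the estimate (\ref{lie}) with the constant $\hbar(q_i)$, so $h_{ij}\circ q_i \in wkQ$. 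What must still be checked — and this is the delicate point — is that $h_{ij}\circ q_i \notin Qh_0$, i.e.\ that it is genuinely a quantization map and not a map onto a commutative algebra; the remark following Proposition \ref{prop3} warns that a general $R$-algebra homomorphism can destroy noncommutativity and send a quantization map to one with $\hbar = 0$. Here is where I would use injectivity: because $q_i \in Q = wkQ\setminus Qh_0$ and $Qh_0$ consists of the weak quantization maps landing in commutative algebras, $\langle q_i(A)\rangle$ is noncommutative, so $\{[q_i(f'),q_i(g')] \mid f',g'\in A\}$ is not contained in ${\rm Ker}\, h_{ij}$ (which is $\{0\}$ by injectivity); then the contrapositive of Proposition \ref{thm3} gives that $\langle (h_{ij}\circ q_i)(A)\rangle$ is noncommutative, hence $h_{ij}\circ q_i \notin Qh_0$ and so $f\circ g \in Q \subset Mor(QW)$. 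This closes the remaining case, and the proof makes transparent why the definition of $QP$ — hence of $Mor(QW)$ via (\ref{mor_QW}) — had to restrict to injective $R$-algebra homomorphisms.
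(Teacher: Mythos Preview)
Your proof is correct and follows essentially the same case analysis as the paper: the paper likewise reduces to the two nontrivial mixed cases, invokes Proposition~\ref{prop6} for $f\in Q$, $g\in Mor({\mathscr Poisss})$, and invokes Propositions~\ref{prop3} and~\ref{thm3} together with injectivity for $f\in Mor(QP)$, $g\in Q$. Your treatment of case (iii) is in fact more explicit than the paper's (which simply cites Propositions~\ref{prop3}, \ref{thm3} and Corollary~\ref{coro3} in one breath), and your observation that Corollary~\ref{coro3} is not strictly needed --- the contrapositive of Proposition~\ref{thm3} with $\ker h_{ij}=\{0\}$ already suffices --- is a nice sharpening.
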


\begin{proof}
For the case $ f, g  \in Mor ( {\mathscr Poisss} )$ or 
for the case $ f, g  \in Mor ( QP )$ ,
this statement is trivial.
For $\forall f, g \in Q$, $f\circ g $ never exists,
because $s(f) \in ob ({\mathscr Poisss})$ and $t(g)\in ob (QP)$.
So, we have to show this statement in the case i)
$g \in Mor ({\mathscr Poisss}), ~f\in Q$ and the case ii)
$g\in Q, ~ f \in Mor( QP )$.
For the case i), Proposition \ref{prop6} shows that 
$f \circ g \in  Q $.
For the case ii), Proposition \ref{prop3}, Corollary \ref{coro3},
and Proposition \ref{thm3}, show that $f \circ g \in  Q $, since 
every morphism in $QP$ is injective.
Therefore, this proposition is proved.
\end{proof}

Proposition \ref{prop8} guarantees  the following definition of
a category of whole space of the quantizations makes sense.

\begin{definition}[Quantization World $QW$]
The category of quantization $QW$ is a subcategory of $\mbox{$R$-Mod}$ 
whose object is defined by
\begin{align}
ob (QW) :=ob ( {\mathscr Poisss} ) \cup  ob ( QP ) 
\end{align}
and its morphism is defined by (\ref{mor_QW}).
\end{definition}

A schematic diagram of this 
category of quantization $QW$ is shown below.
\begin{align}\label{QWdiagram}
\vcenter{
\xymatrix{
\cdots \ar[r]  & M_1 \ar[r]^{ h_{12}}& M_2 \ar[r]^{ h_{23}} & M_3 \ar[r] & \cdots \\
\cdots \ar[r]  & 
\langle q_1(A) \rangle \ar[r]^{ h_{12}|_{\langle q_1(A) \rangle}}
\ar@{^{(}-_>}[u]
&
\langle q_2(A) \rangle \ar[r]^{ h_{23}|_{\langle q_2(A) \rangle}}
\ar@{^{(}-_>}[u]
&
\langle q_3(B) \rangle \ar[r]
\ar@{^{(}-_>}[u]
& \cdots \\
\cdots \ar[r]  & 
 q_1(A)  \ar[r]^{ h_{12}|_{ q_1(A) }}
\ar@{^{(}-_>}[u]
&
q_2(A) \ar[r]^{ h_{23}|_{ q_2(A) }}
\ar@{^{(}-_>}[u]
&
 q_3(B) \ar[r]
\ar@{^{(}-_>}[u]
& \cdots \\
\cdots \ar[r]  &  A \ar[r]^{ \phi_{AB}} \ar[u]^{q_1} \ar[ur]^{q_2} 
& B \ar[r] \ar[u]^{q_2'}\ar[ur]^{q_3} & \cdots & {}
}}
\end{align}
Here $M_i $ is an $R$-algebra, and $\langle q_i(A) \rangle , \cdots$ 
are  subalgebras
of $M_i , \cdots$ that generated by $q_i(A), \cdots $, respectively. $A,B,\cdots $ are Poisson algebras, and $q_i$ is a quantization map.
$\phi_{AB}$ is a surjective Poisson morphism, and $h_{ij}$ is an injective $R$-algebra homomorphism.
This figure (\ref{QWdiagram}) is not strict but just for help to
understand relations between objects in $QW$.
\\

It would not be waste to emphasize the importance of injectivity of $Mor (t(q_j), t(q_i))$ here, even though this is an overlap with 
what was previously stated around Proposition
\ref{prop3}, Corollary \ref{coro3},
and Proposition \ref{thm3}.
Consider some $q_j : A \rightarrow T_j \in Q$ i.e. 
$\hbar (q_j) \neq 0$, where 
$\langle T_j \rangle = M_j$ is a noncommutative $R$-algebra.
Suppose there exists an $R$-algebra homomorphism 
$h_{ji} : M_j \rightarrow M_i$ 
such that its image $\mbox{Im} h_{ji} \subset M_i$ is a commutative subalgebra.
For $\forall f,g \in A$,
\begin{align*}
0 &= [ h_{ji}\circ q_j (f) , h_{ji}\circ q_j (g) ] \\
&=  h_{ji} ( [ q_j (f) , q_j (g) ] ) \\
&=   \hbar(q_j)~ h_{ji} \circ q_j (\{ f , g\} ) + \tilde{O}(\hbar^{1+\epsilon} (q_j)),
\end{align*}
since $\mbox{Im} h_{ji} $ is commutative.
From $\hbar (q_j) \neq 0$, 
\begin{align*}
h_{ji} \circ q_j (\{ f , g\} )=0 , \qquad 
 \tilde{O}(\hbar^{1+\epsilon} (q_j)) =0 .
\end{align*}
Furthermore, we find that there 
exists a pair of $f, g$ such that $q_j (\{ f , g\} ) \neq 0$
since $\hbar (q_j) \neq 0$.
Then we obtain that
\begin{align}
\mbox{Ker} h_{ji}|_{T_j} \supset \{ q_j (\{ f, g \} ) ~| ~ f,g \in A \} 
\neq \{ 0 \} .
\end{align}
The presence of this kernel prevents the construction of a category 
$QW$ using $Q$.
To avoid this, injectivity is added 
to the definition for $Mor (t(q_j), t(q_i))$.


\section{Classification of Quantizations}\label{sect3}

As the classical limits of matrix regularization and so on teach us,
what characterizes quantum geometry is not only the algebra generated by
the target space of the quantization map, but also the information about the map of quantization itself.
Even if we confine our considerations to algebra, 
$\langle q_1 (A) \rangle, \langle q_2 (A) \rangle, \cdots$  are more 
important than $M_1 , M_2 , \cdots $ in the diagram (\ref{QWdiagram})
to characterize the quantization maps.
In this section, we introduce a framework that classifies quantizations.

\subsection{Equivalence of Quantization Maps}

Recall that Poisson category ${{\mathscr Poisss}}$
is a category whose objects are Poisson algebras, and 
morphisms are surjective Poisson homomorphisms.

Let us introduce $U_P$ as a forgetful functor from ${\mathscr Poisss}$ 
to $QW$ forgetting multiplication
and Poisson structure;
\begin{align*}
U_P ~ : ~ {\mathscr Poisss} \rightarrow QW .
\end{align*}
For simplicity, we abbreviate $U_P (A)$ and $ U_P(\phi )$ as $ A $
and $\phi$ for $A \in ob({\mathscr Poisss}) , ~ \phi \in Mor ({\mathscr Poisss})$
, respectively.
Next, let $I_{QP}$ be a functor from $QP$ to $QW$ such that 
all objects and morphisms are identically embedded in $QW$ as 
$I_{QP}(QP)= QP \subset QW$
i.e. $I_{QP}( T_i)=T_i \in ob(QW)$ for any object $T_i \in ob(QP)$ and 
$I_{QP}( h_{ij})= h_{ij} \in Mor(QW)$ for any morphism $h_{ij} \in Mor(QP)$.
As similar to the case of $U_P$, we abbreviate $I_{QP} (T_i)$ and $ I_{QP}(h )$ as $ T_i $
and $h$ for $T_i \in ob(QP) , ~ h \in Mor (QP)$, respectively.

Using these functors a comma category is defined as follows.
\begin{definition}[Comma category $( U_P \downarrow I_{QP}) $]
$( U_P \downarrow I_{QP})$ is a comma category whose object is a triple
\begin{align}
ob( U_P \downarrow I_{QP}) := \{
( A, q_{Ai} , T_i )~ | ~ A \in ob ({\mathscr Poisss}) ,~T_i \in ob(QP),~ 
q_{Ai}: A \rightarrow T_i \in Mor (QW) \}
\end{align}
and its morphism 
$(\phi_{AB} , t_{ij}): (A,q_{Ai},T_i) \rightarrow (B , q_{Bj} , T_j)$
is a pair of a Poisson map 
$\phi_{AB} : A \rightarrow B \in Mor ({\mathscr Poisss})$ and an $R$-module
homomorphism
$t_{ij}
: T_i \rightarrow T_j \in Mor (QP)$
such that
\begin{align}\label{comma_commute}
t_{ij} \circ q_{Ai} = q_{Bj} \circ \phi_{AB} .
\end{align}
\begin{align*}
\xymatrix{
U_P(A)=A \ar[r]^{\phi_{AB}} \ar[d]_{q_{Ai}}& B=U_P(B) \ar[d]^{q_{Bj}}  \\
I_{QP}(T_i)=T_i \ar[r]_{t_{ij}}&T_j =I_{QP}(T_j)
}
\end{align*}
\end{definition}


Using this comma category, quantization maps are classified.
For every $q_{Ai} : A \rightarrow T_i \in Q$ there is a unique 
map $q'_{Ai} : A \rightarrow q_{Ai} (A) \in Q$ defined by 
$q'_{Ai} (f ) := q_{Ai} (f)$ for any $f \in A$.
For simplicity,  $q'_{Ai}$ will simply be abbreviated as  $q_{Ai}$ in the following.
\begin{definition}
When $q_{Ai} : A \rightarrow T_i ~ \in Q$ and 
$q_{Bi} : B \rightarrow T_j ~ \in Q$ satisfy
\begin{align*}
(A , q_{Ai} , q_{Ai}(A) ) \simeq (B , q_{Bj} , q_{Bj}(B) ) 
\end{align*}
in $( U_P \downarrow I_{QP})$, that is, there exists an
isomorphism $(\phi_{AB} , t_{ij}): (A,q_{Ai},q_{Ai}(A)) \rightarrow (B , q_{Bj} , q_{Bj}(B))$, then we say that the quantization $q_{Ai}$ and $q_{Bj}$
are equivalent and we denote $q_{Ai} \simeq q_{Bj}$.
\end{definition}
This is natural definition for equivalence of quantizations.
However, the relation between algebras generating from 
$q_{Ai}(A)$ and $ q_{Bj}(B)$ are not mentioned in this definition.
Note that $q_{Ai}(A)$ and $q_{Bj}(B)$ are not algebras in general 
but just $R$-modules, 
subsets of some $R$-algebras.
So, we investigate this relation, next.\\

Let us introduce the following functor.
\begin{definition}
A functor $F : ( U_P \downarrow I_{QP}) \rightarrow {\mathscr Poisss} \times R$-alg
is defined by the map between these objects:
\begin{align}
( A, q_{Ai} , T_i ) \mapsto (A , \langle q_{Ai}(A) \rangle ),
\end{align}
and  the map between these morphisms:
\begin{align}
(\phi_{AB} , t_{ij} ) \mapsto 
( \phi_{AB} , h_{ij}|_{\langle q_{Ai}(A) \rangle} ),
\end{align}
where $h_{ij}$ is an $R$-algebra homomorphism from 
$\langle T_i \rangle \rightarrow \langle T_j \rangle$ such that
$h_{ij}|_{T_i} = t_{ij}$. 
\end{definition}

Let us make sure that this $F$ is correctly defined as a functor.
We have to check three points :
1) $ h_{ij}|_{\langle q_{Ai}(A) \rangle}$ is uniquely determined
from $t_{ij}$, 
2) $ h_{ij}( \langle q_{Ai}(A) \rangle ) \subset \langle q_{Bj}(B) \rangle$,
and
3) compositions are defined in a consistent manner.\\

1) At first, we check that
$ h_{ij}|_{\langle q_{Ai}(A) \rangle}$ is uniquely determined
from $t_{ij}$.

From $h_{ij}|_{T_i} = t_{ij}$ and $q_{Ai}(A)\subset T_i$,
$h_{ij}|_{q_{Ai}(A)}$ uniquely exists.
Therefore, we found $ h_{ij}|_{\langle q_{Ai}(A) \rangle}$ is uniquely determined from  Proposition \ref{prop1}. \\

2)Next, we would like to show $ h_{ij}( \langle q_{Ai}(A) \rangle ) \subset \langle q_{Bj}(B) \rangle$, because if the following proposition does not hold, then the composition of morphisms is not defined.

\begin{proposition}
Let $A, B$ be Poisson algebras with a surjective Poisson morphism 
$\phi_{AB} : A \rightarrow B$.
Let $q_{Ai} : A \rightarrow T_i$ and
$q_{Bj} : B \rightarrow T_j$ be quantization maps. 
Suppose that there exists $t_{ij}:T_i \to T_j$ defined 
by $t_{ij}= h_{ij}|_{T_i}$, where $h_{ij}$ is 
some $R$-algebra homomorphism
$h_{ij} : \langle T_i \rangle \to \langle T_j \rangle $.
If they satisfy the commutativity (\ref{comma_commute}),
then,
\begin{align*}
h_{ij}( \langle q_{Ai}(A) \rangle ) = \langle q_{Bj}(B) \rangle .
\end{align*}
\end{proposition}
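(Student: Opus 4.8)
The plan is to prove the two inclusions $h_{ij}(\langle q_{Ai}(A)\rangle)\subseteq \langle q_{Bj}(B)\rangle$ and $\langle q_{Bj}(B)\rangle\subseteq h_{ij}(\langle q_{Ai}(A)\rangle)$ separately, using the commutativity relation (\ref{comma_commute}), namely $t_{ij}\circ q_{Ai}=q_{Bj}\circ\phi_{AB}$, together with the surjectivity of $\phi_{AB}$ and Proposition \ref{prop2}. First I would record the key pointwise identity: since $t_{ij}=h_{ij}|_{T_i}$ and $q_{Ai}(f)\in T_i$ for every $f\in A$, the relation (\ref{comma_commute}) reads $h_{ij}(q_{Ai}(f))=q_{Bj}(\phi_{AB}(f))$ for all $f\in A$. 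In other words $h_{ij}(q_{Ai}(A))=q_{Bj}(\phi_{AB}(A))$, and because $\phi_{AB}$ is surjective, $\phi_{AB}(A)=B$, so $h_{ij}(q_{Ai}(A))=q_{Bj}(B)$ as subsets of $T_j$.

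Granted this, the inclusion $h_{ij}(\langle q_{Ai}(A)\rangle)=\langle q_{Bj}(B)\rangle$ follows by applying Proposition \ref{prop2} with the generating set $T:=q_{Ai}(A)$ of the $R$-algebra $\langle q_{Ai}(A)\rangle$ and the $R$-algebra homomorphism $h:=h_{ij}|_{\langle q_{Ai}(A)\rangle}$ (which is well-defined and unique by point (1), already established via Proposition \ref{prop1}). Proposition \ref{prop2} gives $h_{ij}(\langle q_{Ai}(A)\rangle)=\langle h_{ij}(q_{Ai}(A))\rangle$, and substituting $h_{ij}(q_{Ai}(A))=q_{Bj}(B)$ from the previous paragraph yields $h_{ij}(\langle q_{Ai}(A)\rangle)=\langle q_{Bj}(B)\rangle$, which is exactly the claim. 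So the proof is essentially a two-line computation once the bookkeeping is in place.

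The only genuine obstacle is making sure the hypotheses of Proposition \ref{prop2} are literally met: one needs $q_{Ai}(A)$ to be a subset of $\langle q_{Ai}(A)\rangle$ that generates it (true by definition of $\langle\cdot\rangle$), and one needs $h_{ij}|_{\langle q_{Ai}(A)\rangle}$ to be an honest $R$-algebra homomorphism with domain $\langle q_{Ai}(A)\rangle$ — this is guaranteed because $h_{ij}:\langle T_i\rangle\to\langle T_j\rangle$ is an $R$-algebra homomorphism and $\langle q_{Ai}(A)\rangle$ is a subalgebra of $\langle T_i\rangle$, so the restriction is again an $R$-algebra homomorphism. One should also note that $h_{ij}(q_{Ai}(A))\subseteq T_j$ (since $t_{ij}:T_i\to T_j$), so that $\langle h_{ij}(q_{Ai}(A))\rangle=\langle q_{Bj}(B)\rangle$ is indeed formed inside $\langle T_j\rangle$ and the identification with point (2) is consistent. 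With these routine verifications noted, the surjectivity of $\phi_{AB}$ does all the real work — it is precisely the reason the image equals, rather than merely is contained in, $\langle q_{Bj}(B)\rangle$, which explains the restriction to surjective Poisson morphisms in ${\mathscr Poisss}$.
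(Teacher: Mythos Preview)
Your proof is correct and follows essentially the same approach as the paper: both use the commutativity relation together with surjectivity of $\phi_{AB}$ to identify $h_{ij}(q_{Ai}(A))=q_{Bj}(B)$, and then invoke Proposition~\ref{prop2} to pass from the generating set to the generated algebra. Your additional remarks verifying the hypotheses of Proposition~\ref{prop2} and explaining the role of surjectivity are helpful but not present in the paper's more terse version.
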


\begin{proof}
Because $\phi_{AB}$ is surjective,
\begin{align}\label{lem_pf_1}
\langle q_{Bj}(B) \rangle = \langle q_{Bj} \circ \phi_{AB}(A) \rangle.
\end{align}
From the commutativity (\ref{comma_commute}),
\begin{align}\label{lem_pf_2}
q_{Bj} \circ \phi_{AB}= t_{ij} \circ q_{Ai} 
=  h_{ij}|_{q_{Ai}(A)} \circ q_{Ai}.
\end{align}
Proposition \ref{prop2} shows that
\begin{align}\label{lem_pf_3}
h_{ij} (\langle q_{Ai}(A) \rangle) 
=  \langle h_{ij}|_{q_{Ai}(A)} ( q_{Ai}(A)) \rangle .
\end{align}
From (\ref{lem_pf_1}),(\ref{lem_pf_2}), and (\ref{lem_pf_3}),
we obtain the conclusion that we want.
\end{proof}

By this proposition, $ h_{ij}( \langle q_{Ai}(A) \rangle ) \subset \langle q_{Bj}(B) \rangle$ is shown.\\

3) Finally, we make sure that $F$ is defined as a functor.
Consider the compositions of morphisms 
$( \phi_{AB} , t_{ij}) : ( A, q_{Ai} , T_i ) \rightarrow (B , q_{Bj} ,T_j )$  and
$( \phi_{BC} , t_{jk}): (B , q_{Bj} ,T_j ) \rightarrow (C , q_{Ck} ,T_k )$ in $( U_P \downarrow I_{QP})$.
From the results of 1) and 2)
\begin{align*}
F( \phi_{BC} , t_{jk})\circ F ( \phi_{AB} , t_{ij}) &=
( \phi_{BC} , h_{jk}|_{\langle q_{Bj}(B) \rangle}) \circ ( \phi_{AB} , h_{ij}|_{\langle q_{Ai}(A) \rangle}) \\
&= 
( \phi_{BC}\circ \phi_{AB}  , h_{jk} \circ  h_{ij}|_{\langle q_{Ai}(A) \rangle})=
F(( \phi_{BC} , t_{jk})\circ ( \phi_{AB} , t_{ij})).
\end{align*}
From the above, it is shown that $F$ is a functor.\\

$F( U_P \downarrow I_{QP})$ is a subcategory of 
${\mathscr Poisss} \times R$-alg.
We call $F( U_P \downarrow I_{QP})$ ${\mathscr Poisss}-Q({\mathscr Poisss})$ 
pair category.
In the following, this is abbreviated as $P-QP := F( U_P \downarrow I_{QP})$.
The object of $P-QP$ is a pair of a Poisson algebra and a generated algebra
from an image of a quantization map, so 
we call the object of $P-QP$ quantization pair.
Using $P-QP$, equivalence of quantization is expressed as follows.
\begin{theorem}\label{quant_equiv_thm}
Let $q_{Ai}:A \rightarrow T_i$ and $q_{Bj}: B \rightarrow T_j$ 
be quantization maps in $Q$.
In $P-QP:= F( U_P \downarrow I_{QP})$, if and only if 
there exists an isomorphism of ${\mathscr Poisss} \times R$-alg
$(\phi_{AB} , h_{ij})$
from
$(A , \langle q_{Ai}(A) \rangle )$ to $(B , \langle q_{Bj}(B) \rangle )$,
$q_{Ai}$ and $q_{Bj}$ are equivalent quantization:
\begin{align}
(A , \langle q_{Ai}(A) \rangle ) \simeq (B , \langle q_{Bj}(B) \rangle )
\ \ \ 
\Leftrightarrow 
\ \ \ q_{Ai} \simeq q_{Bj} .
\end{align}
\end{theorem}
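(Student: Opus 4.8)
The plan is to read the statement through the functor $F$, so that one implication is a formal consequence of functoriality and the other is a restriction-of-maps argument that leans on the surjectivity built into $Mor({\mathscr Poisss})$.

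First I would dispatch the implication $q_{Ai}\simeq q_{Bj}\Rightarrow(A,\langle q_{Ai}(A)\rangle)\simeq(B,\langle q_{Bj}(B)\rangle)$. By definition $q_{Ai}\simeq q_{Bj}$ means there is an isomorphism $(\phi_{AB},t_{ij})\colon(A,q_{Ai},q_{Ai}(A))\to(B,q_{Bj},q_{Bj}(B))$ in $(U_P\downarrow I_{QP})$, i.e.\ a morphism of that comma category with a two-sided inverse in it. Applying $F$ and using $F(g\circ f)=F(g)\circ F(f)$ and $F(\mathrm{id})=\mathrm{id}$, the image $F(\phi_{AB},t_{ij})=(\phi_{AB},h_{ij}|_{\langle q_{Ai}(A)\rangle})$ is an isomorphism of $P-QP$ (hence of ${\mathscr Poisss}\times R$-alg) between $(A,\langle q_{Ai}(A)\rangle)$ and $(B,\langle q_{Bj}(B)\rangle)$. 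Nothing beyond the definition of $F$ is needed here.

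For the converse, assume $(\phi_{AB},h_{ij})$ is an isomorphism in $P-QP$ from $(A,\langle q_{Ai}(A)\rangle)$ to $(B,\langle q_{Bj}(B)\rangle)$. Since $P-QP$ is by construction the image of $F$, this morphism is $F(\phi_{AB},t_{ij})$ for some morphism $(\phi_{AB},t_{ij})$ of $(U_P\downarrow I_{QP})$, so the comma-category commutativity $t_{ij}\circ q_{Ai}=q_{Bj}\circ\phi_{AB}$ holds, and on $q_{Ai}(A)$ the map $h_{ij}$ agrees with $t_{ij}$. Being invertible in $P-QP\subset{\mathscr Poisss}\times R$-alg, the two components are separately invertible: $\phi_{AB}$ is an isomorphism in ${\mathscr Poisss}$ (so $\phi_{AB}^{-1}$ is again a surjective Poisson morphism) and $h_{ij}\colon\langle q_{Ai}(A)\rangle\to\langle q_{Bj}(B)\rangle$ is an $R$-algebra isomorphism. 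Now I would restrict: put $\bar t:=h_{ij}|_{q_{Ai}(A)}$. Evaluating the commutativity identity at an arbitrary $f\in A$ gives $h_{ij}(q_{Ai}(f))=q_{Bj}(\phi_{AB}(f))$, and letting $f$ range over $A$ while using $\phi_{AB}(A)=B$ yields $h_{ij}(q_{Ai}(A))=q_{Bj}(B)$; combined with injectivity of $h_{ij}$ this makes $\bar t\colon q_{Ai}(A)\to q_{Bj}(B)$ a bijection which, as the restriction of the injective $R$-algebra homomorphism $h_{ij}$, lies in $Mor(QP)$, with inverse $h_{ij}^{-1}|_{q_{Bj}(B)}\in Mor(QP)$. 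The same identity reads $\bar t\circ q'_{Ai}=q'_{Bj}\circ\phi_{AB}$, so $(\phi_{AB},\bar t)$ is a morphism $(A,q'_{Ai},q_{Ai}(A))\to(B,q'_{Bj},q_{Bj}(B))$ in $(U_P\downarrow I_{QP})$; precomposing with $\phi_{AB}^{-1}$ and then with $\bar t^{-1}$ gives $\bar t^{-1}\circ q'_{Bj}=q'_{Ai}\circ\phi_{AB}^{-1}$, so $(\phi_{AB}^{-1},\bar t^{-1})$ is also a comma-category morphism, and the two compose to $(\mathrm{id},\mathrm{id})$ on either side. Hence the two objects are isomorphic in $(U_P\downarrow I_{QP})$, i.e.\ $q_{Ai}\simeq q_{Bj}$.

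The hard part — really the only nonformal part — is the restriction step $h_{ij}(q_{Ai}(A))=q_{Bj}(B)$: one has to be sure the algebra isomorphism between the generated algebras actually carries the submodule $q_{Ai}(A)$ onto $q_{Bj}(B)$, not merely into it. This is exactly where surjectivity of $\phi_{AB}$ (imposed on $Mor({\mathscr Poisss})$) is used; without it one would only get $h_{ij}(q_{Ai}(A))=q_{Bj}(\phi_{AB}(A))$, possibly a proper subset, and the candidate $\bar t$ would fail to land where it should. A secondary, purely bookkeeping point is keeping straight the three nested layers $q_{Ai}(A)\subset T_i$, the generated algebra $\langle q_{Ai}(A)\rangle$, and the ambient algebra, together with the identifications from Proposition \ref{prop1} and Proposition \ref{prop2} that make $h_{ij}$ on $\langle q_{Ai}(A)\rangle$ determined by, and compatible with, $t_{ij}$ on $q_{Ai}(A)$; once that is in place, the rest is the formal facts that functors preserve isomorphisms and that restrictions of injective $R$-algebra homomorphisms remain morphisms of $QP$.
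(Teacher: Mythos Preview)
Your argument is correct and a little more conceptual than the paper's. The paper proves the two directions by hand: for $q_{Ai}\simeq q_{Bj}\Rightarrow(A,\langle q_{Ai}(A)\rangle)\simeq(B,\langle q_{Bj}(B)\rangle)$ it takes the extensions $h_{ij},h_{ji}$ of $t_{ij},t_{ji}$ and checks $h_{ji}\circ h_{ij}=\mathrm{id}$ directly on a polynomial expression $a=\sum a_{I_k}e_{i_1}\cdots e_{i_k}$ with $e_{i_l}\in q_{Ai}(A)$, reducing to $t_{ji}\circ t_{ij}(e_{i_l})=e_{i_l}$; you replace this computation by ``$F$ preserves isomorphisms,'' which is legitimate since the explicit polynomial check is precisely what was already absorbed into verifying that $F$ is a functor (via Proposition~\ref{prop1}). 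For the converse, the paper simply sets $t_{ij}=h_{ij}|_{q_{Ai}(A)}$, $t_{ji}=h_{ij}^{-1}|_{q_{Bj}(B)}$ and checks $t_{ji}\circ t_{ij}=\mathrm{id}$, without pausing to verify that $h_{ij}(q_{Ai}(A))\subset q_{Bj}(B)$ or that the comma-square commutes; you make both points explicit by invoking that a morphism of $P\text{-}QP$ is by definition in the image of $F$, and then using surjectivity of $\phi_{AB}$ to get $h_{ij}(q_{Ai}(A))=q_{Bj}(\phi_{AB}(A))=q_{Bj}(B)$. So your route trades the generator-by-generator calculation for functoriality, and fills in the landing/commutativity details the paper leaves implicit; the underlying content is the same.
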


\begin{proof}
When $(\phi_{AB} , h_{ij}) : (A , \langle q_{Ai}(A) \rangle ) \rightarrow (B , \langle q_{Bj}(B) \rangle )$ is an isomorphism in 
${\mathscr Poisss} \times R$-alg,
we put $t_{ij}= h_{ij}|_{q_{Ai}(A)}$ and $t_{ji}:=  h_{ij}^{-1}|_{q_{Bj}(B)}$.
Then for $\forall a \in q_{Ai}(A) \subset \langle q_{Ai}(A) \rangle $
\begin{align*}
t_{ji}\circ t_{ij}(a)= h_{ij}^{-1} \circ h_{ij}(a) = a,
\end{align*}
i.e. $t_{ji}\circ t_{ij}= id_{q_{Ai}(A)}$.
$t_{ij}\circ t_{ji}= id_{q_{Bj}(B)}$ is obtained similarly. 
These show that $q_{Ai} \simeq q_{Bj}$.\\
Conversely, if $q_{Ai} \simeq q_{Bj}$, there exists
an isomorphism $(\phi_{AB} , t_{ij}): (A, q_{Ai} , q_{Ai}(A)) \rightarrow (B , q_{Bj} , q_{Bj}(B))$ and
some $R$-algebra homomorphism 
$h_{ij}: \langle q_{Ai}(A) \rangle  \rightarrow \langle q_{Bj}(B) \rangle$
such that $h_{ij}|_{ q_{Ai}(A)} = t_{ij}$. 
Similarly, there exists $t_{ji}=t_{ij}^{-1}: q_{Bj}(B) \rightarrow  q_{Ai}(A)$
given as some restriction of an $R$-algebra homomorphism 
$h_{ji}$.
$\forall a \in \langle q_{Ai}(A) \rangle$ is expressed as
$\displaystyle a = \sum_k \sum_{I_k} a_{I_k} e_{i_1} \cdots e_{i_k}$
by using $e_{i_l} \in  q_{Ai}(A)$. And,
\begin{align*}
h_{ji}\circ h_{ij} (a) &=  \sum_k \sum_{I_k} a_{I_k} 
h_{ji}\circ h_{ij} (e_{i_1}) \cdots h_{ji}\circ h_{ij} (e_{i_k})
\\
&= 
\sum_k \sum_{I_k} a_{I_k} 
t_{ji}\circ t_{ij} (e_{i_1}) \cdots t_{ji}\circ t_{ij} (e_{i_k})
\\
&=  \sum_k \sum_{I_k} a_{I_k} e_{i_1} \cdots e_{i_k}= a.
\end{align*}
So, we find that $h_{ji}|_{\langle q_{Bj}(B) \rangle}\circ h_{ij}|_{\langle q_{Ai}(A) \rangle } = id_{\langle q_{Ai}(A) \rangle }$.
$h_{ij}|_{\langle q_{Ai}(A) \rangle }\circ h_{ji}|_{\langle q_{Bj}(B) \rangle} = id_{\langle q_{Bj}(B) \rangle }$ is shown in the same way. Therefore, we obtain $(A , \langle q_{Ai}(A) \rangle ) \simeq (B , \langle q_{Bj}(B) \rangle )$.
\end{proof}

\begin{rem}
We can introduce a functor $\langle - \rangle : QP \rightarrow R\mbox{-alg}$,
by $T_i \in ob (QP) \mapsto \langle T_i \rangle \in ob (R\mbox{-alg} )$ and $t_{ij} \in Mor(T_i , T_j) \mapsto h_{ij} \in Mor(\langle T_i \rangle , \langle T_j \rangle )$
where $t_{ij} = h_{ij}|_{T_i}$. 
Proposition \ref{prop1} guarantees the existence of this functor.
We denote the image of this functor by $\langle QP \rangle$.
There are natural projection functors 
$\pi_p : P-QP \rightarrow \mbox{$R$-alg}$ and 
$\pi_{\langle QP \rangle} : P-QP \rightarrow \langle QP \rangle$.
\end{rem}

\subsection{Restriction to a Single Poisson Algebra $A$}
Until now, we have considered whole space of all Poisson algebras and their
all quantizations.
However, it is also important to focus on quantizations on a single
Poisson algebra.
In this subsection, we will discuss about it.\\

Let ${\mathbf 1}$ be the trivial category with the only one object $ * $
and the only one morphism $Id_*$.
For a Poisson algebra $A$, we define a functor 
$A : {\mathbf 1} \rightarrow QW$ by
$ A (*) = A \in ob (QW)$ and $A(Id_*)= Id_{A} \in Mor (QW)$.
Let us consider the comma category associated with $A$ and $I_{QP}$.
\begin{definition}[$(A \downarrow I_{QP})$]
Let $A$ be a fixed Poisson algebra. 
$(A \downarrow I_{QP})$ is a category whose objects and morphisms are given by
\begin{align}
ob (A \downarrow I_{QP})&=
\{ (q_i, T_i):= (*, q_i, T_i)~|~ q_i : A \rightarrow T_i  \in Q , ~
T_i \in ob (QP) \},\\
Mor ((q_i, T_i), (q_j, T_j))
&= \{ t_{ij} \in Mor_{QW}(T_i , T_j) ~| ~ t_{ij} \circ q_i = q_j \}.
\end{align}
\end{definition}
Roughly speaking, $(A \downarrow I_{QP})$ is a category of all quantizations 
of $A$.
Note that $(A \downarrow I_{QP})$ is similar to a co-slice category but different.
In the same way as $P-QP$, we define the following functor.

\begin{definition}[$F_A$]
The functor $F_A : (A \downarrow I_{QP}) \rightarrow R\mbox{-alg}$ is defined by
$$F_A ( (q_i, T_i) )= \langle q_i (A) \rangle $$ 
for $\forall (q_i, T_i) \in ob (A \downarrow I_{QP})$, and
for $t_{ij} \in Mor_{ (A \downarrow I_{QP})}((q_i, T_i),(q_j, T_j))$
$$ F_A( t_{ij} )= h_{ij} |_{\langle q_i (A) \rangle},$$
where $h_{ij} $ is an $R$-algebra homomorphism satisfying 
$h_{ij}|_{T_i}= t_{ij}$.
\end{definition}

From Theorem \ref{quant_equiv_thm} with $A=B$, we obtain the following.

\begin{corollary}\label{coro_4.8}
Let $q_i$ and $q_j$ be quantization maps whose source is a Poisson algebra $A$.
$q_i$ and $q_j$ are equivalent quantization of $A$ in $F_A (A \downarrow I_{QP})$
if and only if object $\langle q_i (A) \rangle \in R\mbox{-alg}$  
is isomorphic 
to $\langle q_j (A) \rangle \in R\mbox{-alg}$
in $R\mbox{-alg}$.
\end{corollary}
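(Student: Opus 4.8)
The plan is to obtain Corollary~\ref{coro_4.8} as a direct specialization of Theorem~\ref{quant_equiv_thm}. First I would set $B=A$ and $\phi_{AB}=\mathrm{id}_A$, which is a surjective Poisson morphism, so the hypothesis of Theorem~\ref{quant_equiv_thm} is available. Recall that in the comma category $(A\downarrow I_{QP})$ a morphism is just a single map $t_{ij}:T_i\to T_j$ with $t_{ij}\circ q_i = q_j$; this is exactly the data $(\mathrm{id}_A, t_{ij})$ of a morphism in $(U_P\downarrow I_{QP})$, because the commutativity condition \eqref{comma_commute} reads $t_{ij}\circ q_{Ai}=q_{Aj}\circ \mathrm{id}_A = q_{Aj}$. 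Hence an isomorphism of $q_i$ and $q_j$ as objects of $(A\downarrow I_{QP})$ (equivalently, equivalence of the quantizations $q_i\simeq q_j$ in the sense of the definition before Theorem~\ref{quant_equiv_thm}, with $\phi_{AB}=\mathrm{id}_A$) corresponds precisely to an isomorphism pair $(\mathrm{id}_A, t_{ij})$.

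Next I would apply Theorem~\ref{quant_equiv_thm} with $A=B$: it states $q_i\simeq q_j$ if and only if there is an isomorphism $(\phi_{AB},h_{ij})$ of ${\mathscr Poisss}\times R\text{-alg}$ from $(A,\langle q_i(A)\rangle)$ to $(A,\langle q_j(A)\rangle)$. The component $\phi_{AB}$ is then an automorphism of $A$ in ${\mathscr Poisss}$, and $h_{ij}$ is an isomorphism $\langle q_i(A)\rangle \to \langle q_j(A)\rangle$ in $R\text{-alg}$. So one direction is immediate: if $q_i\simeq q_j$ then $\langle q_i(A)\rangle \cong \langle q_j(A)\rangle$ in $R\text{-alg}$ via $h_{ij}$, which is what the image category $F_A(A\downarrow I_{QP})$ records.

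For the converse I would start from an $R$-algebra isomorphism $h_{ij}:\langle q_i(A)\rangle \to \langle q_j(A)\rangle$ and need to produce the comma-category isomorphism. Pairing $h_{ij}$ with $\mathrm{id}_A$ gives a candidate isomorphism $(\mathrm{id}_A,h_{ij})$ in ${\mathscr Poisss}\times R\text{-alg}$ between $(A,\langle q_i(A)\rangle)$ and $(A,\langle q_j(A)\rangle)$; invoking Theorem~\ref{quant_equiv_thm} in the ``$\Leftarrow$'' direction then yields $q_i\simeq q_j$. The one point that needs care, and I expect it to be the main obstacle, is checking that the equivalence produced by Theorem~\ref{quant_equiv_thm} can be taken with Poisson component equal to $\mathrm{id}_A$ rather than some nontrivial automorphism of $A$ — i.e. that ``isomorphic in $F_A(A\downarrow I_{QP})$'' genuinely matches the setup of the theorem after specialization. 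In fact this is automatic: the proof of Theorem~\ref{quant_equiv_thm} constructs $t_{ij}:=h_{ij}|_{q_i(A)}$ and $t_{ji}:=h_{ij}^{-1}|_{q_j(A)}$ and verifies $t_{ji}\circ t_{ij}=\mathrm{id}_{q_i(A)}$ and $t_{ij}\circ t_{ji}=\mathrm{id}_{q_j(A)}$ using only that $h_{ij}$ is an algebra isomorphism, without using $\phi_{AB}$; and the commutativity $t_{ij}\circ q_i=q_j$ follows from $h_{ij}|_{q_i(A)}\circ q_i = q_j$, which holds since $h_{ij}$ was built to restrict correctly on images. Thus the isomorphism lives in $(A\downarrow I_{QP})$ with trivial Poisson part, and we conclude by noting that $F_A$ sends it to $h_{ij}$, establishing the equivalence in $F_A(A\downarrow I_{QP})$. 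I would then remark that the statement is simply Theorem~\ref{quant_equiv_thm} read through the projection $\pi_{\langle QP\rangle}$ restricted to the fiber over $A$, which makes the logical content transparent.
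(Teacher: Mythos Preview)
Your overall approach---specialize Theorem~\ref{quant_equiv_thm} to $B=A$---is exactly what the paper does; its proof is the single line ``From Theorem~\ref{quant_equiv_thm} with $A=B$, we obtain the following.'' Your steps pairing an $R$-algebra isomorphism $h_{ij}$ with $\mathrm{id}_A$ and invoking the theorem are fine and already complete the argument at the level the paper intends.

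However, the extra paragraph where you try to verify the commutativity $t_{ij}\circ q_i=q_j$ contains a false step. You write that ``$h_{ij}|_{q_i(A)}\circ q_i = q_j$ holds since $h_{ij}$ was built to restrict correctly on images,'' but in the converse direction $h_{ij}$ is an \emph{arbitrary} $R$-algebra isomorphism $\langle q_i(A)\rangle\to\langle q_j(A)\rangle$; nothing in its construction ties it to $q_i$ or $q_j$, so there is no reason for $h_{ij}\circ q_i=q_j$ to hold. Your concern (whether the equivalence can be taken with Poisson part $\mathrm{id}_A$, i.e.\ whether it lives in $(A\downarrow I_{QP})$ rather than just $(U_P\downarrow I_{QP})$) is a legitimate question about the paper's framework, but your resolution does not work. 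The paper sidesteps this entirely: Theorem~\ref{quant_equiv_thm} requires only an isomorphism in the product category ${\mathscr Poisss}\times R\mbox{-alg}$, with no commutativity condition, so once you have $(\mathrm{id}_A,h_{ij})$ the theorem applies directly and the additional verification you attempt is not needed for the corollary as stated.
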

From this corollary, we found that 
$F_A (A \downarrow I_{QP})$ classifies quantization of $A$.\\

\begin{example}\label{ex_poly}
Let ${\mathcal P}({\mathbb C}^2)$ be the algebra of polynomials 
of a coordinate $(x,y)\in {\mathbb C}^2$.
By introducing a Poisson bracket by
$$\{ f , g \} = \partial_x f \partial_y g - \partial_y f \partial_x g ,$$
for any $f,g \in {\mathcal P}({\mathbb C}^2)$,
$({\mathcal P}({\mathbb C}^2), \cdot ,\{~ ,~ \} )$ is 
regarded as a Poisson algebra.

We use the notation $x^1:= x,~ x^2 = y ,~ \partial_1 = \dfrac{\partial}{\partial x},~\partial_2 = \dfrac{\partial}{\partial y}$.
Let 
$
H = (H_{i j} )  
$
be a $2\times 2$ complex matrix.
We introduce a bi-differential operator ${\overleftrightarrow H}$ as
\begin{align}
f {\overleftrightarrow H}g = H_{ij} \partial_i f \partial_j g,
\qquad
( f, g \in {\mathcal P}({\mathbb C}^2)) .
\label{bi-diff-1}
\end{align}
Here $H_{ij} \partial_i f \partial_j g$ is an abbreviation for 
$ \sum_{i,j=1}^n H_{ij} \partial_i f \partial_j g$.
Moyal product $*_H$ is defined by
\begin{align*}
f *_H g &=  f ( \exp \nu \overleftrightarrow H ) g 
= \sum_{l=0}^\infty \frac{\nu^l}{l !}H_{i_1 j_1} \cdots H_{i_l j_l}
(\partial_{i_1} \cdots \partial_{i_l} f )\cdot
(\partial_{j_1} \cdots \partial_{j_l} g )
\end{align*}
where $\nu$ is a complex number.
Then, by the Moyal product $*_{H}$, the strict deformation quantization 
$({\mathcal P}({\mathbb C}^2), *_{H} )$ is a ${\mathbb C}$-algebra \cite{OMMY}.
We define $q_H : ({\mathcal P}({\mathbb C}^2), \cdot ,\{~ ,~ \} ) \rightarrow ({\mathcal P}({\mathbb C}^2), *_{H} )$ by the inclusion map 
i.e. $q_{H} (\sum_{ij} a_{ij} x^i y^j )= \sum_{ij} a_{ij} x^i y^j $.
It is clear that $q_H$ is a quantization map.

Let us decompose $H=(H_{ij})$
into a symmetric matrix and a skew symmetric matrix as
\begin{align*}
&H= J + K  , ~ J= (J_{ij} ) ,  K = (K_{ij}) , \mbox{where } 
J_{ij}= - J_{ji} , ~ K_{ij} = K_{ji} .
\end{align*}
Here we introduce a differential operator
\begin{align}
T= \exp(-\frac{1}{2} \nu K_{ij} \partial_i \partial_j ) 
= \sum_{l=0}^{\infty} \frac{1}{l!}
(-\frac{1}{2} \nu K_{ij} \partial_i \partial_j )^l \label{1_12}
\end{align}
that act a polynomial $f$ as
\begin{align}
T({f}) = f+
\sum_{l=1}^{\infty} \nu^l 
\frac{1}{l!}
(-\frac{1}{2} K_{m j} \partial_m \partial_j )^l  f .
\end{align}

To compare two algebra $({\mathcal P}({\mathbb C}^2), *_{J} )$
$({\mathcal P}({\mathbb C}^2), *_{H} )$,
the following theorem is useful.
\begin{theorem}[Tomihisa-Yoshioka \cite{TomishitaYoshioka}]\label{inter}
$({\mathcal P}({\mathbb C}^2), *_{H} )$ and 
$({\mathcal P}({\mathbb C}^2), *_{J} )$ are ${\mathbb C}$-algebra isomorphic.
$T$ defined by (\ref{1_12}) satisfies
\begin{enumerate}
\item $T : ({\mathcal P}({\mathbb C}^2), *_{H} )\rightarrow 
({\mathcal P}({\mathbb C}^2), *_{J} )$ is linear and bijective.
\item $T(1)=1$
\item For $f, g \in ({\mathcal P}({\mathbb C}^2), *_{H} )$,
$T(\tilde{f} *_H \tilde{g} ) = 
T(\tilde{f}) *_J T(\tilde{g} ) $.
\end{enumerate}
In other words, $T$ is an $R$-algebra isomorphism. 
\end{theorem}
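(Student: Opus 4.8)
The plan is to verify the three itemized properties in turn, treating property (3) — the intertwining $T(\tilde f *_H \tilde g) = T(\tilde f) *_J T(\tilde g)$ — as the only substantive one. For (1), note that $T$ is a locally finite sum of constant-coefficient differential operators (on any fixed polynomial only finitely many terms act nontrivially), hence a well-defined linear map; its two-sided inverse is $\exp(\tfrac12\nu K_{ij}\partial_i\partial_j)$, because the operators $K_{ij}\partial_i\partial_j$ and $K_{kl}\partial_k\partial_l$ commute, so $\exp(-\tfrac12\nu K_{ij}\partial_i\partial_j)\exp(\tfrac12\nu K_{kl}\partial_k\partial_l)=\mathrm{id}$. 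Thus $T$ is a linear bijection. Property (2) is immediate: $\partial_i 1 = 0$, so only the zeroth-order term of $T$ acts on the constant $1$, giving $T(1)=1$.

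For (3) I would reduce to exponential generating functions. Both sides of the claimed identity are constant-coefficient bidifferential operators $\mathcal{P}(\mathbb{C}^2)\otimes\mathcal{P}(\mathbb{C}^2)\to\mathcal{P}(\mathbb{C}^2)$ (assembled from the constant matrices $H,J,K$ and from $T$), and such an operator is determined by its action on all pairs $e^{a\cdot x},\,e^{b\cdot x}$ with $a,b\in\mathbb{C}^2$, where $a\cdot x := a_1 x^1 + a_2 x^2$ — equivalently, on monomials, recovered by applying $\partial_a,\partial_b$ at $a=b=0$; since polynomials have bounded degree, all sums involved are finite, so no convergence issue arises. Writing $a^{\mathsf T}Gb := \sum_{i,j}a_i G_{ij} b_j$ for a matrix $G$, the computation using $\partial_i e^{a\cdot x}=a_i e^{a\cdot x}$ gives
\begin{align*}
e^{a\cdot x}*_G e^{b\cdot x} &= e^{\nu\, a^{\mathsf T}Gb}\,e^{(a+b)\cdot x}, \\
T(e^{c\cdot x}) &= e^{-\frac{1}{2}\nu\, c^{\mathsf T}Kc}\,e^{c\cdot x}.
\end{align*}

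Feeding these in, the left side $T(e^{a\cdot x}*_H e^{b\cdot x})$ equals $\exp\!\big(\nu a^{\mathsf T}Hb - \tfrac12\nu(a+b)^{\mathsf T}K(a+b)\big)e^{(a+b)\cdot x}$, and the right side $T(e^{a\cdot x})*_J T(e^{b\cdot x})$ equals $\exp\!\big(-\tfrac12\nu a^{\mathsf T}Ka - \tfrac12\nu b^{\mathsf T}Kb + \nu a^{\mathsf T}Jb\big)e^{(a+b)\cdot x}$. Using that $K$ is symmetric — so $a^{\mathsf T}Kb = b^{\mathsf T}Ka$ and $(a+b)^{\mathsf T}K(a+b) = a^{\mathsf T}Ka + 2a^{\mathsf T}Kb + b^{\mathsf T}Kb$ — the two exponents coincide for all $a,b$ precisely when $a^{\mathsf T}(H-K)b = a^{\mathsf T}Jb$, i.e. precisely by the decomposition $H = J + K$. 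This establishes (3); together with (1) it shows $T$ is an $R$-algebra isomorphism $(\mathcal{P}(\mathbb{C}^2),*_H)\to(\mathcal{P}(\mathbb{C}^2),*_J)$, with (2) as a bonus showing it is unital.

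The one point that needs care is the reduction opening the second paragraph: that a constant-coefficient bidifferential operator on polynomials is pinned down by its values on the formal exponentials $e^{a\cdot x}$. I would justify this by the monomial-recovery remark above, or, to keep $e^{a\cdot x}$ honest, by enlarging the coefficient ring to formal power series and noting that all operators in sight extend continuously there. Everything past that reduction is the Gaussian-exponential bookkeeping displayed, and no hypothesis beyond the symmetry of $K$ and the identity $H=J+K$ is actually used.
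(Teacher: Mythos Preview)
Your proof is correct. The generating-function argument is the standard and clean way to verify intertwining identities of this kind, and your bookkeeping checks out: the key step is that the symmetry of $K$ gives $(a+b)^{\mathsf T}K(a+b) = a^{\mathsf T}Ka + 2a^{\mathsf T}Kb + b^{\mathsf T}Kb$, after which the two exponents match exactly when $H-K=J$. The reduction to exponentials is also fine as you justify it: both sides are constant-coefficient bidifferential operators, and comparing Taylor coefficients in $a,b$ at the origin recovers the action on arbitrary monomial pairs, which span $\mathcal{P}(\mathbb{C}^2)\otimes\mathcal{P}(\mathbb{C}^2)$.

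Note that the paper does not supply its own proof of this theorem; it is quoted as a result of Tomihisa--Yoshioka and used as a black box in Example~\ref{ex_poly}. So there is no ``paper's proof'' to compare against --- your argument simply fills in what the paper outsources to the reference.
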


We define 
$q_J : ({\mathcal P}({\mathbb C}^2), \cdot ,\{~ ,~ \} ) \rightarrow 
({\mathcal P}({\mathbb C}^2), *_{J} )$ as an inclusion map 
i.e. $q_{J} (\sum_{ij} a_{ij} x^i y^j )= \sum_{ij} a_{ij} x^i y^j $, too.
Then
$\langle q_H (  {\mathcal P}({\mathbb C}^2) ) \rangle =  ({\mathcal P}({\mathbb C}^2), *_{H} ) \simeq
\langle q_J (  {\mathcal P}({\mathbb C}^2) ) \rangle = ({\mathcal P}({\mathbb C}^2), *_{J} )$.
From Corollary \ref{coro_4.8}, 
$q_H$ and $q_J$
are equivalent quantizations.
\end{example}


\section{Classical Limit}\label{sect_ClassicalLimit}
One of the most important problem is :
how can we find the classical limit from the
spaces given by the quantizations of Poisson algebras?
To make the point of this problem easier to understand, 
let us consider 
matrix regularization of $S^2$ and $T^2$.
(In Appendix \ref{appendix_fuzzy},
we summarize the matrix regularization of $S^2$ and $T^2$.)
As an well-known fact,
both images of quantization maps from ${\mathcal A}$
and ${\mathcal B}$ generate the same matrix algebras.
This example shows that the classical limit is not determined 
by observation of the algebra obtained from a single quantization.
This shows that the Poisson algebra cannot be determined as the classical limit from the algebra obtained by quantization. 
But this example also teaches 
that the classical limit may be identified from the sequence of 
modules before generating algebras.\\

Since a new formulation of quantization using category 
theory was introduced in Section \ref{sect2+1}, 
we can define various classical limits in the framework.
The purpose of this section is to define some 
classical limits, study their properties 
by examining them with concrete examples, 
and look for useful definitions of classical limits 
for approaching inverse problems.
For this purpose, we pay particular attention 
to the way the sequence of $R$-modules obtained 
by the quantization maps 
determines the Poisson algebra.
\\

In the following $\displaystyle \lim_{\leftarrow {\mathcal I}} D$
means the categorical limit 
in this article.
For example we can see its definition in Chapter $V$ 
in \cite{limit2} or Chapter $5$ in \cite{limit1}.

\subsection{Naive Classical Limit}\label{naive_classical_limit}
For a start, the classical limit is defined as follows using a naive definition of the limit (in the sense of category theory).

\begin{definition}[Naive classical limit of $D({\mathcal I})$]
Let ${\mathcal I}$ be an index category, and
let $D : {\mathcal I} \rightarrow QW$ be a
diagram of shape ${\mathcal I}$.
Let $A$ be a Poisson algebra.  
Suppose that ${\mathbf t}_A$ is a set of 
$ t_i : A \rightarrow D(i) \in Q~ ( i \in {\mathcal I} )$ 
such that 
$t_i$ and $t_j$ in ${\mathbf t}_A$ satisfy
$ t_j = D(u_{ij})\circ t_i $ for 
$\forall u_{ij} \in Mor_{\mathcal I}(i,j) $, 
that is $(A, {\mathbf t}_A)$ is a cone over $D$.
When ${\mathcal I}$ and $D$ have a limit $(A, {\mathbf t}_A)$:
\begin{align}
\lim_{\leftarrow {\mathcal I}} D = 
(A \xrightarrow{t_i} D(i))_{i \in {\mathcal I}},
\end{align}
and $\displaystyle \lim_{\leftarrow {\mathcal I}} D $ is called the
naive classical limit of $D({\mathcal I})$.
\end{definition}
This definition may lead some readers to think that the naive classical limit is unrelated to the conventional classical limit. However, the following theorem shows that the naive classical limit is a generalization that includes the classical limit when only one type of quantization for one Poisson algebra is considered conventionally.

\begin{theorem}\label{thm_5_1_limit}
Let ${\mathcal I}$ be an index category, and
let $D : {\mathcal I} \rightarrow QW$ be a
diagram of shape ${\mathcal I}$
such that there is only one Poisson algebra $A$ in $D({\mathcal I})$.
If $(A, {\mathbf t}_A)$ is a cone over $D$ 
such that $Id_A \in {\mathbf t}_A$, 
then
$(A, {\mathbf t}_A)$ is the naive classical limit.
\end{theorem}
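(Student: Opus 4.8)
The plan is to show that any cone $(A, \mathbf{t}_A)$ over $D$ with apex the unique Poisson algebra $A \in D(\mathcal{I})$ is automatically terminal among all cones over $D$. The key observation is that the shape category $\mathcal{I}$ contains a distinguished object, say $i_0$, with $D(i_0) = A$, and that for every other object $i \in \mathcal{I}$ the only morphisms in $QW$ out of $A = D(i_0)$ into a quantization target $D(i)$ are the quantization maps $t_i$; moreover the leg $t_{i_0} : A \to A$ of the cone must be a morphism in $QW$ from $A$ to $A$, and since $Mor_{QW}(A,A)$ coming from $Mor(\mathscr{Poisss})$ consists of surjective Poisson endomorphisms while the cone condition forces $t_{i_0}$ to be compatible with all $D(u)$, one argues $t_{i_0} = Id_A$. (More carefully: there is at least one index $i_0$ with $D(i_0)=A$; if there were several, the morphisms between them in $D(\mathcal{I})$ are surjective Poisson morphisms, and the cone legs intertwine them.)

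First I would fix an arbitrary cone $(N, \mathbf{s}_N)$ over $D$, where $N \in ob(QW)$ and $s_i : N \to D(i)$ for each $i$, satisfying $s_j = D(u_{ij}) \circ s_i$ for all $u_{ij} \in Mor_\mathcal{I}(i,j)$. I must produce a unique mediating morphism $m : N \to A$ in $QW$ with $t_i \circ m = s_i$ for all $i$. The natural candidate is $m := s_{i_0}$, the leg landing in $D(i_0) = A$. For this to lie in $Mor(QW)$ it must be a surjective Poisson morphism (if $N$ is itself a Poisson algebra) or a quantization map or an injective $R$-algebra homomorphism, according to which of the three classes of $Mor(QW)$ applies; I would case-split on the type of $N$. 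Then I would verify $t_i \circ s_{i_0} = s_i$: since $A$ is the unique Poisson algebra in $D(\mathcal{I})$, every $D(i)$ with $i \neq i_0$ is reached from $A$ only through a composite of morphisms one of which is a quantization map $t_i$ (there is no way to "leave" $QP$ back into $\mathscr{Poisss}$), so the cone morphisms $D(u_{i_0 i})$ are precisely the maps making $t_i = D(u_{i_0 i})$; applying the cone condition for $(N,\mathbf{s}_N)$ with the arrow $u_{i_0 i}$ gives $s_i = D(u_{i_0 i}) \circ s_{i_0} = t_i \circ s_{i_0}$, as required. Uniqueness follows because any mediating $m$ must satisfy $t_{i_0} \circ m = s_{i_0}$, and $t_{i_0} = Id_A$, so $m = s_{i_0}$.

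The step I expect to be the main obstacle is establishing that the apex leg $t_{i_0} : A \to D(i_0) = A$ really is the identity, and more generally pinning down the combinatorial structure of an arbitrary diagram $D : \mathcal{I} \to QW$ in which "there is only one Poisson algebra in $D(\mathcal{I})$". One has to rule out the degenerate possibilities: that $\mathcal{I}$ has several objects all mapping to $A$, connected by nontrivial surjective Poisson endomorphisms, or that $A$ appears but with a self-loop $u : i_0 \to i_0$ whose image $D(u)$ is a nontrivial surjection. In those cases the cone condition $t_{i_0} = D(u) \circ t_{i_0}$ constrains $t_{i_0}$ but need not force it to be $Id_A$; here I would invoke that $A$ is an \emph{object} of $QW$ and that the hypothesis "$(A,\mathbf{t}_A)$ is a cone" together with $A$ being the limiting apex is exactly what is being asserted, so one uses the cone legs $\mathbf{t}_A$ themselves as the reference data rather than trying to normalize them. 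The cleanest route is probably: given any cone $(N,\mathbf{s}_N)$, define $m = s_{i_0}$ and note that for \emph{every} $i$ the cone relation and the fact that all arrows from $i_0$ to $i$ in $D(\mathcal{I})$ factor through $t_i$ (because $QP$ is a "sink" under $Q$ — no morphism in $QW$ goes from $ob(QP)$ back to $ob(\mathscr{Poisss})$, by (\ref{mor_QW})) yield $s_i = t_i \circ m$; then uniqueness of $m$ is forced by the $i = i_0$ instance once we know no other cone leg of $(A,\mathbf{t}_A)$ can serve, which is where the "only one Poisson algebra" hypothesis does its real work.
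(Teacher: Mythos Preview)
Your approach is essentially the paper's: take the mediating morphism to be the cone leg $s_{i_0}$ landing in $D(i_0)=A$, and get uniqueness from $t_{i_0}\circ m = s_{i_0}$ with $t_{i_0}=Id_A$. Two remarks on where your write-up diverges from the paper's execution.

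First, your ``case-split on the type of $N$'' is unnecessary. The paper disposes of this in one line: since $A\in D(\mathcal{I})$ and any cone apex $N$ must admit a morphism $N\to A$ in $QW$, and since by the decomposition $Mor(QW)=Mor(\mathscr{Poisss})\sqcup Mor(QP)\sqcup Q$ the only morphisms with target a Poisson algebra are surjective Poisson morphisms, $N$ is forced to be a Poisson algebra. So there is only one case.

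Second, your worry about whether $t_{i_0}=Id_A$ is well-founded in the abstract, but the paper simply takes this as part of the data of the cone $(A,\mathbf t_A)$: it draws the leg $A\to A$ as $Id_A$ in the diagram and invokes that $D$, being a functor, sends $Id_{i_0}$ to $Id_A$. In other words the paper does not prove $t_{i_0}=Id_A$ from weaker hypotheses; it treats it as given. Your attempt to ``rule out degenerate possibilities'' with nontrivial surjective Poisson self-maps is more ambitious than what the paper actually claims, and you do not resolve it either. If you want to match the paper, simply adopt $t_{i_0}=Id_A$ as part of the cone hypothesis and move on; the argument then collapses to: for any other cone $(B,\mathbf q')$ with leg $\phi:B\to A$, commutativity at $i_0$ gives $Id_A\circ m=\phi$, hence $m=\phi$ is unique.
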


\begin{proof}
In the following, we denote $( A, \cdot ,\{~ ,~ \} ) $ by $ A$
for simplicity, 
in contexts where there is no risk of confusion.
The cone $( A , {\mathbf t})$ is expressed as:
\begin{align*}
\vcenter{
\xymatrix{
( A, \{,\})\ar[d]^{Id_{ A}} \ar[rd]_{t_{1}} \ar[rrd]^{t_{2}\cdots}&{}& {}&\\
( A, \{,\}) \ar[r]_{t_1}& 
( T_1, *_1) \ar[r]^{t_{12}} &
  ( T_2, *_2) \ar[r] &  \cdots
}}.
\end{align*}
Here each $t_i : A \rightarrow T_i$ is a quantization map,
and $t_{ij}=D(u_{ij}) : T_i \to T_j$ is in $Mor_{QP}(T_i , T_j)$.
$D({\mathcal I})$ is expressed as
\begin{align*}
\vcenter{
\xymatrix{
( A, \{,\}) \ar[r]^{t_1}& 
( T_1, *_1) \ar[r]^{t_{12}} &
  ( T_2, *_2) \ar[r] &  \cdots
}}.
\end{align*}

If there is another cone of $(B , {\mathbf q}')$ , $B$ is a Poisson algebra, since
$ A$ is a Poisson algebra, and only Poisson algebras
have possibility to be sources of morphisms in $QW$
which target is $A$. 
We denote this Poisson morphisms by $\phi : B \rightarrow  A$.
${\mathbf q}'= \{\phi , q_{B1}' , q_{B2}', \cdots \}$ is a set of morphisms that satisfy the following commutative diagram:
\begin{align*}
\vcenter{
\xymatrix{
B\ar[d]^{\phi} \ar[rd]_{q'_{B1}} \ar[rrd]^{q'_{B2}\cdots}&{}& {}&\\
( A, \{,\}) 
\ar[r]_{t_1}& 
( T_1, *_1) \ar[r]^{t_{12}} &
  ( T_2, *_2) \ar[r] & \ar@<0.5ex>[l] \cdots
}}.
\end{align*}

Suppose two $\phi, \phi' : B \rightarrow  A$
satisfy the following commutative diagram:
\begin{align*}
\vcenter{
\xymatrix{
( A, \{,\})\ar[d]_{Id_{ A}} \ar[rd] \ar[rrd]& \ar@<0.5ex>[l]^-{\phi'} \ar[l]_-{\phi} B \ar[dl]^-{\phi} \ar[d]^{q_{B1}'} \ar[dr]^{q_{B2}' \cdots}& {}&\\
( A, \{,\}) \ar[r]_{q_p}& 
( A, *_p) \ar[r]^{t_{pr}} &
 \ar@<0.5ex>[l]^{t_{rp}} ( A, *_r) \ar[r] & \ar@<0.5ex>[l] \cdots
}}.
\end{align*}
From the condition that 
$( A , \{,\})$ is an object in $D({\mathcal I})$,
$Id_{ A}$ is also a morphism in $D({\mathcal I})$
because $D$ is a functor.
From the commutativity $\phi = Id_{ A}\circ \phi' =\phi'$,
we obtain the uniqueness of $\phi$.
\end{proof}
In practice, the condition $Id_A \in {\mathbf t}_A$
in this theory can be relaxed. 
Theorem \ref{thm_5_1_limit} holds 
even if $Id_A$ is replaced by an arbitrary automorphism.

In the following subsections, we will actually consider two examples of 
quantization families corresponding to only one type of quantization 
for a single Poisson algebra and check that their naive limits are in fact what is expected.

\subsection{Example of Naive Classical Limit  ($ {\mathcal P}({\mathbb C}^2)$)}\label{ex_naive_lim_pC2}

We consider ${\mathcal P}({\mathbb C}^2)$ 
, the set of all polynomials of $x$ and $y$, and its quantization
by deformation quantization, again. 
This is an example of fixing the commutative ring $R$ to ${\mathbb C}$.
In the following, we denote $( {\mathcal P}({\mathbb C}^2), \cdot ,\{~ ,~ \} ) $ by $ {\mathcal P}({\mathbb C}^2)$
for simplicity, 
in contexts where there is no risk of confusion.
(See Example \ref{ex_poly}.)
 \\

Let introduce sequence of Moyal products 
with $p   ~(p \in {\mathbb Q})$ by
\begin{align*}
f *_p g = f \exp \left( 
\overleftarrow{\partial}_x \frac{i p}{2} \overrightarrow{\partial}_y
-\overleftarrow{\partial}_y \frac{i p}{2} \overrightarrow{\partial}_x
\right) g ,
\end{align*}
where $f,g \in  {\mathcal P}({\mathbb C}^2)$.
We define $q_\hbar : ( {\mathcal P}({\mathbb C}^2), \cdot ,\{~ ,~ \} ) \rightarrow ( {\mathcal P}({\mathbb C}^2), *_{\hbar} )$ by inclusion map 
i.e. $q_{\hbar} ( f )= f$, as similar to Example \ref{ex_poly}.

Let ${\mathcal I}_{moyal}$ be an index category 
defined by  $ob ( {\mathcal I}_{moyal}) = {\mathbb Q}$
and 
\begin{align*}
Mor (p,r) &:= \{( p \mapsto r  =: +r -p )\} ~\mbox{for} ~ r\neq0 ,p \in {\mathbb Q} ,\\
Mor (p,0) &:= \emptyset .
\end{align*}
Note that $+p -p$ is an identity for $p$.

Let  $D_{moyal} : {\mathcal I}_{moyal} \rightarrow QW$
be a diagram of shape ${\mathcal I}_{moyal}$
defined by
\begin{align*}
D_{moyal}(p) &= ( {\mathcal P}({\mathbb C}^2), *_{p } )~ 
\mbox{for }~ \forall  p\neq 0 \in {\mathbb Q} \\
D_{moyal}(0) &= ( {\mathcal P}({\mathbb C}^2), \cdot , \{ , \} )~ 
\mbox{for }~  0 \in {\mathbb Q}
\end{align*}
and
\begin{align*}
D_{moyal}( +r-p ) &= t_{p,r} \\
D_{moyal}( +r-0 ) &=  q_{r}.
\end{align*}
Here $t_{p,r}$ is an intertwiner between 
$( {\mathcal P}({\mathbb C}^2), *_{p } )$ and $( {\mathcal P}({\mathbb C}^2), *_{r } )$.
The intertwiner is given as a ${\mathbb C}$-algebra isomorphism.
The existence of such intertwiner is shown in 
\cite{bayen1,bayen2,bertelson1,bertelson2}. 
The existence of this $t_{p,r}$ is not essential 
to the following discussion. 
We choose here each isomorphism $t_{p,r}$ 
satisfying $t_{p,r}\circ q_p = q_r$.

The diagram $D_{moyal}({\mathcal I}_{moyal})$ is given as:
\begin{align*}
\xymatrix{
( {\mathcal P}({\mathbb C}^2), \{,\}) \ar[r]_{q_p}& 
( {\mathcal P}({\mathbb C}^2), *_p) \ar[r]^{t_{pr}} &
 \ar@<0.5ex>[l]^{t_{rp}} ( {\mathcal P}({\mathbb C}^2), *_r) \ar[r] & \ar@<0.5ex>[l] \cdots
}.
\end{align*}

Then, 
$( {\mathcal P}({\mathbb C}^2) , {\mathbf q})$ is a cone:
\begin{align*}
\vcenter{
\xymatrix{
( {\mathcal P}({\mathbb C}^2), \{,\})\ar[d]^{Id_{ {\mathcal P}({\mathbb C}^2)}} \ar[rd]_{q_{p}} \ar[rrd]^{q_{r}\cdots}&{}& {}&\\
( {\mathcal P}({\mathbb C}^2), \{,\}) \ar[r]_{q_p}& 
( {\mathcal P}({\mathbb C}^2), *_p) \ar[r]^{t_{pr}} &
 \ar@<0.5ex>[l]^{t_{rp}} ( {\mathcal P}({\mathbb C}^2), *_r) \ar[r] & \ar@<0.5ex>[l] \cdots
}}.
\end{align*}
From Theorem \ref{thm_5_1_limit}, 
$( ( {\mathcal P}({\mathbb C}^2), \cdot ,\{~ ,~ \} ) , {\mathbf q})$ is the naive classical limit.

\begin{corollary}
Let ${\mathbf q}$ be a set of the above quantization 
$q_p : ( {\mathcal P}({\mathbb C}^2), \cdot ,\{~ ,~ \} ) \rightarrow ( {\mathcal P}({\mathbb C}^2), *_p ) ~(p\in {\mathbb Q})$
and 
identity $Id_{ {\mathcal P}({\mathbb C}^2)}$ for Poisson algebra 
$( {\mathcal P}({\mathbb C}^2), \cdot ,\{~ ,~ \} ) $
 i.e.
${\mathbf q}= \{ q_p \in Q ~|~ p\in {\mathbb Q}\}\cup \{ Id_{ {\mathcal P}({\mathbb C}^2)}\}$. 
For the above ${\mathcal I}_{moyal}$, $D_{moyal}$,
$( ( {\mathcal P}({\mathbb C}^2), \cdot ,\{~ ,~ \} ) , {\mathbf q})$ is the naive classical limit.
\end{corollary}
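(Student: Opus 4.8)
The plan is to verify the hypotheses of Theorem \ref{thm_5_1_limit} for the specific data $({\mathcal I}_{moyal}, D_{moyal})$ and then read off the conclusion. The key observation is that the diagram $D_{moyal}$ was constructed so that $({\mathcal P}({\mathbb C}^2), \cdot, \{~,~\})$ is literally one of its objects, namely $D_{moyal}(0)$, and that $D_{moyal}$ contains no other Poisson algebra: for every $p \neq 0$ the object $D_{moyal}(p) = ({\mathcal P}({\mathbb C}^2), *_p)$ is a noncommutative $R$-algebra, not an object of ${\mathscr Poisss}$. Hence the ``only one Poisson algebra'' hypothesis of Theorem \ref{thm_5_1_limit} is satisfied with $A = ({\mathcal P}({\mathbb C}^2), \cdot, \{~,~\})$.

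Next I would check that $({\mathcal P}({\mathbb C}^2), {\mathbf q})$ is a cone over $D_{moyal}$, i.e. that for every morphism $u$ of ${\mathcal I}_{moyal}$ the appropriate triangle commutes. The morphisms of ${\mathcal I}_{moyal}$ are of two kinds: those of the form $+r-p$ with $r \neq 0$, for which one must verify $q_r = D_{moyal}(+r-p) \circ q_p = t_{p,r} \circ q_p$, and the identity $Id_0 = +0-0$, for which one must verify $q_0 := Id_{{\mathcal P}({\mathbb C}^2)} = D_{moyal}(Id_0) \circ Id_{{\mathcal P}({\mathbb C}^2)}$. The first is exactly the normalization $t_{p,r} \circ q_p = q_r$ that was imposed when the intertwiners were chosen; the second is trivial. (There are no morphisms into $0$ other than $Id_0$, and no morphisms $p \to r$ with both $p,r$ nonzero beyond those already covered, so these exhaust the commutativity checks.) Thus $({\mathcal P}({\mathbb C}^2), {\mathbf q})$ is a cone over $D_{moyal}$, with each leg $q_p \in Q$ a genuine quantization map (as noted in Example \ref{ex_poly}) and the leg at $0$ being $Id_{{\mathcal P}({\mathbb C}^2)}$.

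With both hypotheses in place, Theorem \ref{thm_5_1_limit} applies directly and yields that $({\mathcal P}({\mathbb C}^2), {\mathbf q})$ is the naive classical limit of $D_{moyal}({\mathcal I}_{moyal})$, which is precisely the assertion of the corollary. I do not expect any real obstacle here: the corollary is essentially an instantiation of the preceding theorem, and the only points requiring care are bookkeeping ones — confirming that no $D_{moyal}(p)$ for $p \neq 0$ is a Poisson algebra (so that uniqueness of the mediating Poisson morphism, which in the proof of Theorem \ref{thm_5_1_limit} comes from $Id_A$ being a morphism of the diagram, goes through) and confirming that the chosen intertwiners can indeed be normalized so that $t_{p,r} \circ q_p = q_r$ holds consistently across the diagram, which is legitimate since the $t_{p,r}$ are $R$-algebra isomorphisms fixing the generators $x, y$ up to the freedom already used.
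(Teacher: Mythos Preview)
Your proposal is correct and follows essentially the same approach as the paper: the paper simply verifies that $({\mathcal P}({\mathbb C}^2), {\mathbf q})$ is a cone over $D_{moyal}$ and then invokes Theorem \ref{thm_5_1_limit} directly, exactly as you do. Your write-up is in fact more detailed than the paper's, which just states ``From Theorem \ref{thm_5_1_limit}'' before the corollary without spelling out the hypothesis checks.
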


\subsection{Example of Naive Classical Limit (Fuzzy sphere)}\label{naive_classical_lim_FzS2}

In this subsection, let us consider the
naive classical limit of fuzzy spheres.
Basic knowledge for the fuzzy sphere 
and the notations used in this subsection 
are summarized in the Appendix \ref{appendix_fuzzy}.\\

Let $P[x^a]$ be the algebra of polynomial generated by $x^a$. 
For an ideal $I$ which is generated by the relation $(\ref{sphere})$, we consider a Poisson algebra $\mathcal{A}=P[x^a]\slash I$ with Poisson bracket
$
\{x^a,x^b\}=\epsilon^{abc}x^c. 
$
For arbitrary $f\in \mathcal{A}$ is given as
\begin{align*}
f=f_0+f_ax^a+\frac{1}{2}f_{ab}x^ax^b+\cdots
\end{align*}
where $f_{a_1\cdots a_i}\in \mathbb{C}$ is completely symmetric and trace-free. In the case of $k\ge 2$, quantization maps $t_k:\mathcal{A}\to T_k := t_k (\mathcal{A})\subset M_k={\rm Mat}_{k}$ are defined by
\begin{align*}
t_k(f)&:=f_0{\bf 1}_k+f_{a_1}t_k(x^{a_1})+\cdots +
\frac{1}{(k-1)!}
f_{a_1\cdots a_{k-1}}t_k(x^{a_1}\cdots x^{a_{k-1}})\\
t_{k}(x^a)&:=\hbar_k J^a_{k}=X_k^a,
\end{align*}
where $J_a (a=1,2,3)$ satisfy $[J^a_k,J^b_k]=i\epsilon^{abc}J^c_k$, that is
 $J_a$ are generators of $k$-dimensional irreducible representation of $\mathfrak{su}(2)$.
Its Casimir relation
$
J^a_kJ_{ka}=\frac{1}{4}(k^2-1){\bf 1}_k
$
and $(\ref{sphere})$ derive the relation
$
\dfrac{4}{k^2-1}=\hbar^2_k.
$
The commutation relation of $J^a_k$
or
(\ref{eq.com}) shows that $t_k \in Q $ $(k \ge 2 )$. 

Let  ${\mathcal I}_{fzS2}$ be an index category 
defined by  $ob ( {\mathcal I}_{fzS2}) = {\mathbb Z}_{> 0}$
and 
\begin{align*}
Mor (1,k) &:=\{( 1 \mapsto k  )=:\hat{k} \} ~\mbox{for } ~ k \in {\mathbb Z}_{> 0} ,\\
Mor (p,q) &:= \emptyset  ~\mbox{for integers }~ p,q >1 ,~ p \neq q ,\\
Mor (p,p) &:= Id_p.
\end{align*}
The absence of a morphism between $p$ and $q$
($p,q >1$ and $p\neq q$) in the index category is due to the fact that there is no morphism between $T_p $ and $T_q$.
Since $\ker t_p = \{ f_{a_1 \cdots a_p} x^{a_1} \cdots x^{a_p} +  f_{a_1 \cdots a_{p+1}} x^{a_1} \cdots x^{a_{p+1}} + \cdots \}$,
there is no injective morphism between $T_p $ and $T_q$ that satisfies the commutative diagram with $t_p$ and $t_q$.

Let  $D_{fzS2} : {\mathcal I}_{fzS2} \rightarrow QW$
be a diagram of shape ${\mathcal I}_{fzS2}$
defined by
\begin{align*}
D_{fzS2}(1) &= \mathcal{A}~ 
\mbox{for}~  1 ,\\
D_{fzS2}(k) &= T_k~ 
\mbox{for}~ \forall  k\neq 1 \in {\mathbb Z}_{> 0} ,
\end{align*}
and
\begin{align*}
D_{fzS2}( \hat{1} ) &= Id_{\mathcal{A}} ,\\
D_{fzS2}( \hat{k} ) &=  t_{k} ~\mbox{for}~ \forall  k > 1 ,\\
D_{fzS2}( Id_k ) &= Id_{T_k} ~\mbox{for}~ \forall  k > 1 ,
\end{align*}
\begin{align*}
\left(
\vcenter{
\xymatrix{
1 \ar[d]^{\hat{2}} \ar[rd]^{\!\!\!\hat{3}} \ar[rrd]^{\hat{4} ~~\cdots}& {}&\\
2& 
3&
4~~ \cdots
}}\right)
\xrightarrow{D_{fzS2}} 
\left(
\vcenter{
\xymatrix{
\mathcal{A} \ar[d]^{t_2} \ar[rd]^{\!\!\! t_3} \ar[rrd]^{t_4 ~~\cdots}& {}&\\
T_2& 
T_3&
T_4~~ \cdots
}}\right).
\end{align*}

For the ${\mathcal I}_{fzS2}$ and $D_{fzS2}$,
the naive classical limit is determined. 
\begin{corollary}
Let ${\mathbf t}$ be a set of the above quantizations 
${\mathbf t} = \{ t_k \in Q ~|~ k >1 \}
\cup \{ Id_{\mathcal{A}} \}
$.
For the above ${\mathcal I}_{fzS2}$, $D_{fzS2}$,
$( \mathcal{A} , {\mathbf t})$ is the naive classical limit.
\end{corollary}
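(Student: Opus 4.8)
The plan is to deduce the corollary directly from Theorem~\ref{thm_5_1_limit}, in exactly the same way the Moyal example was handled in Section~\ref{ex_naive_lim_pC2}. Theorem~\ref{thm_5_1_limit} states that if a diagram $D:\mathcal{I}\to QW$ contains only one Poisson algebra among the objects of $D(\mathcal{I})$, then every cone over $D$ whose apex is that Poisson algebra is automatically the naive classical limit. Hence the whole task reduces to checking that the concrete data $(\mathcal{I}_{fzS2}, D_{fzS2})$ satisfy these two hypotheses.

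First I would confirm that $\mathcal{A}$ is the only Poisson algebra occurring in $D_{fzS2}(\mathcal{I}_{fzS2})$. Indeed $D_{fzS2}(1)=\mathcal{A}\in ob({\mathscr Poisss})$, whereas for each $k>1$ the object $D_{fzS2}(k)=T_k = t_k(\mathcal{A})\subset M_k = {\rm Mat}_k$ is an object of $QP$, and $ob({\mathscr Poisss})\cap ob(QP)=\emptyset$ inside $QW$. I would also recall that each $t_k$ with $k\ge 2$ lies in $Q$, which follows from the commutation relations $[J^a_k,J^b_k]=i\epsilon^{abc}J^c_k$ as noted in the text.

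Next I would verify that $(\mathcal{A},\mathbf{t})$ is a cone over $D_{fzS2}$, where the leg over the object $1$ is $Id_{\mathcal{A}}$ and the leg over each object $k>1$ is $t_k$ (so, as in the previous example, $\mathbf{t}$ is understood as $\{t_k\mid k>1\}\cup\{Id_{\mathcal{A}}\}$). The morphisms of $\mathcal{I}_{fzS2}$ are $\hat{k}:1\to k$ for $k\in{\mathbb Z}_{>0}$, with $\hat{1}$ the identity of $1$, together with the identities $Id_p$ for $p>1$. Since $D_{fzS2}(\hat{k})=t_k$, $D_{fzS2}(\hat{1})=Id_{\mathcal{A}}$ and $D_{fzS2}(Id_p)=Id_{T_p}$, the cone conditions $t_k=D_{fzS2}(\hat{k})\circ Id_{\mathcal{A}}$, $Id_{\mathcal{A}}=D_{fzS2}(\hat{1})\circ Id_{\mathcal{A}}$ and $t_p=D_{fzS2}(Id_p)\circ t_p$ all hold trivially. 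With both hypotheses of Theorem~\ref{thm_5_1_limit} in place, the theorem applies verbatim and yields the assertion.

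I do not anticipate any genuine obstacle: essentially all the work is bookkeeping, i.e.\ matching the definitions of $\mathcal{I}_{fzS2}$ and $D_{fzS2}$ to the hypotheses of Theorem~\ref{thm_5_1_limit}, and in particular noting that the $T_k$ are $QP$-objects rather than Poisson algebras. The one point that deserves a moment's care --- but it is already dealt with inside the proof of Theorem~\ref{thm_5_1_limit} --- is the universality of the cone: any competing cone must have as apex a Poisson algebra $B$ carrying a surjective Poisson morphism $\phi:B\to\mathcal{A}$, because in $QW$ only Poisson algebras admit morphisms into $\mathcal{A}$, and uniqueness of the mediating $\phi$ is forced by $Id_{\mathcal{A}}$ being a morphism of the diagram $D_{fzS2}$.
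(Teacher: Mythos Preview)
Your proposal is correct and follows exactly the paper's approach: the paper simply states that the corollary ``follows trivially from Theorem~\ref{thm_5_1_limit}.'' Your careful verification that $\mathcal{A}$ is the unique Poisson algebra in $D_{fzS2}(\mathcal{I}_{fzS2})$ and that $(\mathcal{A},\mathbf{t})$ is a cone just makes explicit the bookkeeping the paper leaves implicit.
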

This follows trivially from Theorem \ref{thm_5_1_limit}.\\

\subsection{Weak Classical Limit}\label{subsec_weak_limit}
As we have seen, in the naive classical limit, 
the limit is trivially determined for some kinds of diagrams
$D({\mathcal I})$. 
From the viewpoint of discussing inverse problems, 
the range of the naive classical limit is too wide. 
In this subsection, we give a definition of the other classical limit 
with inverse problems in mind.
We introduce it by restricting the diagram to $QP$ and by using a limit that differs from the limit of category theory.
\\
\bigskip

Let us introduce a quantization family and a kind of classical limit.
\begin{definition}[Quantization Family of $A$, weak classical limit of $D({\mathcal I})$]
Let ${\mathcal I}$ be an index category, and
let $D : {\mathcal I} \rightarrow QP(=I_{QP}(QP)) \subset QW$ be a
diagram of shape ${\mathcal I}$.
Let $A$ be a Poisson algebra.  
Suppose that ${\mathbf t}_A$ is a set of 
$ t_i : A \rightarrow D(i) \in Q~ ( i \in {\mathcal I} )$ 
such that 
$t_i$ and $t_j$ in ${\mathbf t}_A$ satisfy
$ t_j = D(u_{ij})\circ t_i $ for 
$\forall u_{ij} \in Mor_{\mathcal I}(i,j) $, 
that is, $(A, {\mathbf t}_A)$ is a cone over $D$.
$D({\mathcal I})$ is called a quantization family of $A$, 
and $(A, {\mathbf t}_A)$ is called the
weak classical limit of $D({\mathcal I})$,
when the cone $(A, {\mathbf t}_A)$ satisfies the fillowing condition:
\begin{itemize}
\item For any other cone  $(B, {\mathbf q}_B)$ over $D$, 
where $B \in ob ({\mathscr Poisss}) \subset ob(QW)$, a unique map $\phi_{BA} : B \to A \in Mor_{QW}(B,A) $
such that $q_i : B \to D(i) \in {\mathbf q}_B$ satisfies
\begin{align*}
q_i = t_i \circ \phi_{BA} 
\end{align*}
for all $i \in {\mathcal I}$.
\end{itemize}
\begin{align*} 
\xymatrix{
{A} \ar[d] \ar[rd] \ar[rrd] & 
\ar@{.>}[l]_{\phi_{BA}}
B
\ar[dl] \ar[d] \ar[rd]^{{}~~~\cdots}
& &:{\mathscr Poisss}(=U_p ({\mathscr Poisss})) \subset QW \\
D(i)& 
D(j)&
D(k)\cdots &:QP(=I_{QP}(QP)) \subset QW
} 
\end{align*}
\end{definition}

\subsection{Examples of Candidates of Weak Classical limits }\label{ex_weak_lim}

We consider ${\mathcal P}({\mathbb C}^2)$ 
, again.
As in Subsection \ref{ex_naive_lim_pC2}, 
the sequence of Moyal products 
with $p   ~(p \in {\mathbb Q}^\times)$ by
$
f *_p g = f \exp \left( 
\overleftarrow{\partial}_x \frac{i p}{2} \overrightarrow{\partial}_y
-\overleftarrow{\partial}_y \frac{i p}{2} \overrightarrow{\partial}_x
\right) g ,
$
where $f,g \in  {\mathcal P}({\mathbb C}^2)$, and
$q_\hbar : ( {\mathcal P}({\mathbb C}^2), \cdot ,\{~ ,~ \} ) \rightarrow ( {\mathcal P}({\mathbb C}^2), *_{\hbar} )$ is defined by $q_{\hbar} ( f )= f$.

${\mathcal I}_{moyal}'$ differs from the one in  Subsection \ref{ex_naive_lim_pC2} in that $0$ is excluded from the objects in ${\mathcal I}_{moyal}$.
Let ${\mathcal I}_{moyal}'$ be an index category 
defined by  $ob ( {\mathcal I}_{moyal}') = {\mathbb Q}^\times$
and 
\begin{align*}
Mor (p,r) &:= \{( p \mapsto r  =: +r -p )\} ~\mbox{for} ~ r,p \in {\mathbb Q}^\times.
\end{align*}
Note that $+p -p$ is an identity for $p$.

Let  $D_{moyal}' : {\mathcal I}_{moyal}' \rightarrow QP \subset QW$
be a diagram of shape ${\mathcal I}_{moyal}'$
defined by
$
D_{moyal}'(p) = ( {\mathcal P}({\mathbb C}^2), *_{p } )~ 
\mbox{for }~ \forall  p \in {\mathbb Q}^\times 
$
and
$
D_{moyal}'( +r-p ) = t_{p,r} .
$
Here $t_{p,r}$ is an intertwiner as the same one in 
Subsection \ref{ex_naive_lim_pC2}.

The diagram $D_{moyal}'({\mathcal I}_{moyal}')$ is given as:
\begin{align*}
\xymatrix{
({\mathcal P}({\mathbb C}^2), *_p) \ar[r]^{t_{pr}} &
 \ar@<0.5ex>[l]^{t_{rp}} ({\mathcal P}({\mathbb C}^2), *_r) \ar[r] & 
\ar@<0.5ex>[l] \cdots
}.
\end{align*}

Let ${\mathbf q}'$ be a set of the above quantization 
$q_p : ( {\mathcal P}({\mathbb C}^2), \cdot ,\{~ ,~ \} ) \rightarrow ( {\mathcal P}({\mathbb C}^2), *_p ) ~(p\in {\mathbb Q}^\times)$.
Even with these change, 
$( {\mathcal P}({\mathbb C}^2) , {\mathbf q}')$ is still a cone:
\begin{align*}
\vcenter{
\xymatrix{
( {\mathcal P}({\mathbb C}^2), \{,\})\ar[d]_{q_{p}} \ar[rd]^{q_{r}\cdots}&{}&\\
( {\mathcal P}({\mathbb C}^2), *_p) \ar[r]^{t_{pr}} &
 \ar@<0.5ex>[l]^{t_{rp}} ( {\mathcal P}({\mathbb C}^2), *_r) \ar[r] & \ar@<0.5ex>[l] \cdots
}}.
\end{align*}

At this case, the following proposition holds.
\begin{proposition}
If there exist the weak classical limit 
for the above ${\mathcal I}_{moyal}'$, $D_{moyal}'$,
the weak classical limit is isomorphic to
$( ( {\mathcal P}({\mathbb C}^2), \cdot ,\{~ ,~ \} ) , {\mathbf q'})$.
\end{proposition}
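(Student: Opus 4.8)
The plan is to use the universal property of the weak classical limit directly, rather than to verify separately that $( ( {\mathcal P}({\mathbb C}^2), \cdot ,\{~ ,~ \} ) , {\mathbf q}')$ is itself one. So suppose $(A,{\mathbf t}_A)$ is a weak classical limit of $D_{moyal}'$: then $A$ is a Poisson algebra, each $t_p\colon A\to D_{moyal}'(p)=({\mathcal P}({\mathbb C}^2),*_p)$ lies in $Q$, the family $\{t_p\}$ is compatible with the intertwiners $t_{p,r}$, and $(A,{\mathbf t}_A)$ is a terminal object in the category of cones over $D_{moyal}'$ with Poisson-algebra vertex. Since $({\mathcal P}({\mathbb C}^2),{\mathbf q}')$ is exactly such a cone (this was established immediately above the statement), the universal property produces a unique $\phi\in Mor_{QW}({\mathcal P}({\mathbb C}^2),A)$ with $q_p=t_p\circ\phi$ for every $p\in{\mathbb Q}^\times$. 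As both the source and the target of $\phi$ are Poisson algebras, $\phi$ must in fact belong to $Mor({\mathscr Poisss})$, so $\phi$ is a \emph{surjective} Poisson morphism; note also that $t_p\circ\phi$ is a legitimate morphism of $QW$ by Proposition~\ref{prop6}.

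The next step is to observe that $\phi$ is injective as well. Each $q_p$ is the inclusion $f\mapsto f$, hence injective, and the identity $q_p=t_p\circ\phi$ then forces $\phi$ to be injective. Thus $\phi$ is a bijective Poisson morphism. Its inverse is automatically linear and compatible with both the product and the Poisson bracket (the standard computation of applying $\phi^{-1}$ to $\phi(u)\cdot\phi(v)$ and to $\{\phi(u),\phi(v)\}$), and it is surjective since it is a bijection; hence $\phi^{-1}$ is again a surjective Poisson morphism, so $\phi^{-1}\in Mor(QW)$ and $\phi$ is an isomorphism in ${\mathscr Poisss}$, hence in $QW$.

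It remains to see that $\phi$ is an isomorphism of cones. From $q_p=t_p\circ\phi$ and the invertibility of $\phi$ one gets $q_p\circ\phi^{-1}=t_p$ for all $p$ (and $q_p\circ\phi^{-1}$ again lies in $Q$ by Proposition~\ref{prop6}, since $\phi^{-1}$ is a surjective Poisson morphism, so this equation lives in $QW$). Therefore the Poisson isomorphism $\phi$ satisfies the compatibility $t_p\circ\phi=q_p$ for all $p$, i.e.\ it is an isomorphism of cones $({\mathcal P}({\mathbb C}^2),{\mathbf q}')\xrightarrow{\sim}(A,{\mathbf t}_A)$, which is precisely the assertion that the weak classical limit is isomorphic to $( ( {\mathcal P}({\mathbb C}^2), \cdot ,\{~ ,~ \} ) , {\mathbf q}')$.

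The only point carrying any subtlety is the observation that injectivity of $\phi$ comes for free, simply because the cone legs $q_p$ are inclusion maps; this is what lets the argument run under nothing more than the bare hypothesis that a weak classical limit exists. The remaining ingredients are routine: that $Mor_{QW}$ between two Poisson algebras consists only of surjective Poisson morphisms (a direct consequence of the construction of $QW$), that a bijective Poisson morphism is a Poisson isomorphism, and that the composites $t_p\circ\phi$ and $q_p\circ\phi^{-1}$ are legitimate $QW$-morphisms via Propositions~\ref{prop5+}--\ref{prop6}. I do not expect the intertwiners $t_{p,r}$ to enter the argument beyond having made $({\mathcal P}({\mathbb C}^2),{\mathbf q}')$ a cone in the first place.
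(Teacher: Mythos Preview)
Your proof is correct and follows essentially the same route as the paper's: use the universal property of the weak classical limit to produce a surjective Poisson morphism $\phi$ from ${\mathcal P}({\mathbb C}^2)$ to the limit vertex, then exploit that each $q_p$ is the inclusion map to force $\phi$ to be injective. Your version is in fact more complete, since you also spell out why the bijective Poisson morphism $\phi$ has a Poisson inverse in $Mor(QW)$ and why it is an isomorphism of cones, details the paper leaves implicit after concluding injectivity.
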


\begin{proof}
Let $(L, {\mathbf q}_{lim}) $ be the weak classical limit.
Then there exists a unique Poisson morphism 
$\phi_{PL} : {\mathcal P}({\mathbb C}^2) \to L$
such that $q_p = q_{Lp}\circ \phi_{PL}$ for any 
$q_{Lp}: L \to ( {\mathcal P}({\mathbb C}^2), *_p) \in {\mathbf q}_{lim} $ and any $q_p \in {\mathbf q}'$.
Recall that this  $\phi_{PL}$ is a surjective Poisson morphism.
By definition of $q_p$,~ $f= q_p(f) =  q_{Lp}\circ \phi_{PL} (f)$ for 
$\forall f \in {\mathcal P}({\mathbb C}^2)$.
Then $\phi_{PL}$ is injective. 
\end{proof}

In fact, if there is another cone, 
from which there is a morphism to ${\mathcal P}({\mathbb C}^2)$
satisfying commutativity, then it is the uniqe one.

\begin{proposition}\label{poly_unique_mor}
Let $( ( {\mathcal P}({\mathbb C}^2), \cdot ,\{~ ,~ \} ) , {\mathbf q'})$
be a cone defined above.
Let $(B, {\mathbf t})$ be the other cone
where $B$ is arbitrary object in $QW$.
If there is $\phi : B \to  {\mathcal P}({\mathbb C}^2) \in Mor (QW)$ satisfies the following 
commutative diagram:
\begin{align*} 
\vcenter{
\xymatrix{
{\mathcal P}({\mathbb C}^2) \ar[d]_{q_{p}} \ar[rd] & {}
\ar[l]_{\phi}
B
\ar[dl]_-{t_p} \ar[d]^{t_r~~~\cdots} 
&\\
( {\mathcal P}({\mathbb C}^2), *_p) \ar[r]^{t_{pr}} &
 \ar@<0.5ex>[l]^{t_{rp}} ( {\mathcal P}({\mathbb C}^2), *_r) \ar[r] & \ar@<0.5ex>[l] \cdots
}} ,
\end{align*}
then $\phi$ is unique.
\end{proposition}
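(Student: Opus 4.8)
The plan is to exploit the special shape of the cone $({\mathcal P}({\mathbb C}^2),{\mathbf q}')$: each leg $q_p$ is the inclusion $f\mapsto f$, so it is the identity map on the underlying set of ${\mathcal P}({\mathbb C}^2)$ (only the algebra product, not the set or the $R$-module structure, differs between its source and target). Note that the trick used in the proof of Theorem \ref{thm_5_1_limit}, where uniqueness of the mediating morphism followed from $Id_A$ being a morphism of the diagram, is unavailable here, since $D_{moyal}'$ contains no Poisson algebra among its objects; the inclusion property of the $q_p$ will play that role instead.

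First I would determine what $B$ and $\phi$ can be. The target of $\phi$ is the Poisson algebra $({\mathcal P}({\mathbb C}^2),\cdot,\{~,~\})\in ob({\mathscr Poisss})\subset ob(QW)$, and by the definition $Mor(QW)=Mor({\mathscr Poisss})\sqcup Mor(QP)\sqcup Q$ the only morphisms with target in $ob({\mathscr Poisss})$ are those of $Mor({\mathscr Poisss})$: a morphism in $Q$ has target in $QP$, and a morphism in $Mor(QP)$ has both source and target in $QP$. Hence $B\in ob({\mathscr Poisss})$ and $\phi$ is a (surjective) Poisson morphism $B\to{\mathcal P}({\mathbb C}^2)$. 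I expect this bookkeeping about $Mor(QW)$ to be the only point requiring any care.

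Next I would use the commutativity hypothesis $q_p\circ\phi=t_p$. Since $q_p$ is the identity on the set ${\mathcal P}({\mathbb C}^2)$, composing with it changes nothing, so, as functions $B\to{\mathcal P}({\mathbb C}^2)$, we get $\phi=q_p\circ\phi=t_p$ for every $p\in{\mathbb Q}^\times$. Thus the cone $(B,{\mathbf t})$ already pins down $\phi$: it must coincide with $t_p$ for any, hence every, $p$. Uniqueness is then immediate: if $\phi$ and $\phi'$ both fit the commutative diagram, then fixing any $p\in{\mathbb Q}^\times$ gives $\phi=t_p=\phi'$, so $\phi=\phi'$. No further computation is needed; the whole content is the observation that inclusion legs force any mediating morphism to equal a cone leg. (The same remark incidentally characterises existence of $\phi$: it exists exactly when the common set map $t_p$ is a surjective Poisson morphism $B\to{\mathcal P}({\mathbb C}^2)$, but the statement asks only for uniqueness.)
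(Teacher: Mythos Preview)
Your proposal is correct and follows essentially the same idea as the paper: both arguments hinge on the fact that each $q_p$ is the identity on underlying sets, so the commutativity $q_p\circ\phi=t_p$ forces $\phi=t_p$ (the paper phrases this by contradiction, picking $b$ with $\phi(b)\neq\phi'(b)$ and deriving $\phi(b)=q_p\circ\phi(b)=q_p\circ\phi'(b)=\phi'(b)$, while you state it directly). Your preliminary analysis that $\phi$ must lie in $Mor({\mathscr Poisss})$ is a correct extra remark the paper omits here, though it uses the same reasoning elsewhere.
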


\begin{proof}
Suppose two different 
$\phi, \phi' : B \rightarrow  {\mathcal P}({\mathbb C}^2)$
satisfy the following commutative diagram:
\begin{align*} 
\vcenter{
\xymatrix{
{\mathcal P}({\mathbb C}^2) \ar[d]_{q_{p}} \ar[rd] & {}
\ar[l]_{\phi, \phi'}
B
\ar[dl]_-{t_p} \ar[d]^{t_r~~~\cdots} 
&\\
( {\mathcal P}({\mathbb C}^2), *_p) \ar[r]^{t_{pr}} &
 \ar@<0.5ex>[l]^{t_{rp}} ( {\mathcal P}({\mathbb C}^2), *_r) \ar[r] & \ar@<0.5ex>[l] \cdots
}} .
\end{align*}
There are $b \in B$ such that $\phi (b) \neq \phi' (b)$.
From the commutativity $t_p = q_p\circ \phi= q_p\circ \phi'$.
By definition of each $q_p$,
$ \phi (b) = q_p\circ \phi (b) = q_p\circ \phi'(b) = \phi'(b)$.
This is a contradiction.
\end{proof}

Unfortunately, we do not know if the weak classical limit actually exists
for this ${\mathcal I}_{moyal}'$, $D_{moyal}'$.
\\
\bigskip

Next, as similar in Subsection \ref{naive_classical_lim_FzS2},
let us consider the example of fuzzy sphere, again.
Let $(\mathcal{A}=P[x^a]\slash I , ~\{ , \}) $
be a Poisson algebra, where
$
\{x^a,x^b\}=\epsilon^{abc}x^c ~ (a,b,c =1,2,3). 
$
$f\in \mathcal{A}$ is given as
$
f=f_0+f_ax^a+\frac{1}{2}f_{ab}x^ax^b+\cdots
$
where $f_{a_1\cdots a_i}\in \mathbb{C}$ is completely symmetric and trace-free. Quantization maps 
$t_k:\mathcal{A}\to T_k := t_k (\mathcal{A})\subset M_k={\rm Mat}_{k}$ are defined by
$
t_k(f):=f_0{\bf 1}_k+f_{a_1}X^{a_1}_{k}+\cdots +
\frac{1}{(k-1)!}
f_{a_1\cdots a_{k-1}}
X^a_{k} \cdots X^{a_{k-1}}_k$.

Let  ${\mathcal I}_{fzS2}'$ be an index category 
defined by  $ob ( {\mathcal I}_{fzS2}) = \{2,3,4, \cdots \}$
and 
\begin{align*}
Mor (p,q) &:= \emptyset  ~\mbox{for integers }~ p,q >1 ,~ p \neq q ,\\
Mor (p,p) &:= Id_p.
\end{align*}
In other words, ${\mathcal I}_{fzS2}'$ is a discrete category.
Let  $D_{fzS2}' : {\mathcal I}_{fzS2}' \rightarrow QP \subset QW$
be a diagram of shape ${\mathcal I}_{fzS2}'$
defined by
\begin{align*}
D_{fzS2}'(k) &= T_k~ 
\mbox{for}~ \forall  k\in \{2,3,4, \cdots \},
\end{align*}
and
\begin{align*}
D_{fzS2}'( Id_k ) &= Id_{T_k} ~\mbox{for}~ \forall k \in \{2,3,4, \cdots \} .
\end{align*}
\begin{align*}
\left(
\vcenter{
\xymatrix{
2& 
3&
4~~ \cdots
}}\right)
\xrightarrow{D_{fzS2}'} 
\left(
\vcenter{
\xymatrix{
T_2& 
T_3&
T_4~~ \cdots
}}\right).
\end{align*}
Of course, $( \mathcal{A} , {\mathbf t})$ is a cone,
where ${\mathbf t} = \{ t_k \in Q ~|~ k >1 \}$.

For the ${\mathcal I}_{fzS2}'$ and $D_{fzS2}'$,
the following proposition is derived.
\begin{proposition} \label{prop5_8}
If there exists the weak classical limit for 
the ${\mathcal I}_{fzS2}'$ and $D_{fzS2}'$,
then the weak classical limit is isomorphic to
$( \mathcal{A} , {\mathbf t})$.
\end{proposition}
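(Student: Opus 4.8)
The plan is to identify any weak classical limit concretely with the cone $(\mathcal{A},{\mathbf t})$, exploiting the fact that the fuzzy-sphere quantization maps $t_k$ are jointly injective. First I would observe that $(\mathcal{A},{\mathbf t})$ with ${\mathbf t}=\{t_k\in Q\mid k>1\}$ is indeed a cone over $D_{fzS2}'$; since ${\mathcal I}_{fzS2}'$ is discrete no compatibility relations are imposed, so this is immediate. Assume a weak classical limit $(L,{\mathbf q}_{lim})$ exists, with legs $q_{Lk}:L\to T_k$. Applying its universal property to the cone $(\mathcal{A},{\mathbf t})$ produces a unique morphism $\psi:\mathcal{A}\to L$ in $Mor(QW)$ with $t_k=q_{Lk}\circ\psi$ for every $k\ge 2$.

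Next I would pin down the nature of $\psi$. Both $\mathcal{A}$ and $L$ lie in $ob({\mathscr Poisss})\subset ob(QW)$; by the definition $Mor(QW)=Mor({\mathscr Poisss})\sqcup Mor(QP)\sqcup Q$ together with $ob({\mathscr Poisss})\cap ob(QP)=\emptyset$, a morphism in $Q$ has target in $ob(QP)$ and a morphism in $Mor(QP)$ has source in $ob(QP)$, so neither can connect two Poisson algebras. Hence $\psi\in Mor({\mathscr Poisss})$, i.e. $\psi$ is a surjective Poisson morphism. For injectivity: if $\psi(f)=0$ then $t_k(f)=q_{Lk}(\psi(f))=0$ for all $k\ge2$; but, as recalled in Subsection \ref{naive_classical_lim_FzS2}, $\ker t_k$ consists of those elements of $\mathcal{A}$ whose homogeneous components of degree $<k$ all vanish, and every element of $\mathcal{A}=P[x^a]\slash I$ is a polynomial of finite degree, so $\bigcap_{k\ge2}\ker t_k=\{0\}$ and $f=0$. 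Thus $\psi$ is a bijective Poisson morphism, and its set-theoretic inverse is again a surjective Poisson morphism, so $\psi$ is an isomorphism in ${\mathscr Poisss}$, hence in $QW$.

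Finally, rewriting the cone relations as $q_{Lk}=t_k\circ\psi^{-1}$ shows that $\psi$ is an isomorphism of cones between $(\mathcal{A},{\mathbf t})$ and $(L,{\mathbf q}_{lim})$, which is the assertion. The main obstacle is essentially the bookkeeping: confirming that the universal morphism supplied by the weak classical limit is forced to be a surjective Poisson morphism (so that injectivity alone yields bijectivity), and justifying $\bigcap_k\ker t_k=\{0\}$ from the kernel description; once these are settled, the remaining steps — bijectivity, invertibility of Poisson morphisms, and compatibility with the cone legs — are routine.
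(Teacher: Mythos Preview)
Your proposal is correct and follows essentially the same approach as the paper: obtain the universal morphism $\psi:\mathcal{A}\to L$ from the cone $(\mathcal{A},{\mathbf t})$, note it is a surjective Poisson morphism by the structure of $Mor(QW)$, and establish injectivity from the fact that the $t_k$ jointly separate points of $\mathcal{A}$ (the paper phrases this as ``for sufficiently large $k$, $t_k(f)=t_k(g)$ gives a contradiction,'' which is exactly your $\bigcap_{k\ge 2}\ker t_k=\{0\}$). Your write-up is in fact more explicit than the paper's in justifying why $\psi$ lies in $Mor({\mathscr Poisss})$ and why a bijective Poisson morphism is an isomorphism of cones, but the underlying argument is the same.
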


\begin{proof}
Let $(L, {\mathbf q}_{lim}) $ be the weak classical limit.
Then there exists a unique Poisson morphism 
$\phi_{\mathcal{A}L} : \mathcal{A} \to L$
such that $t_k = q_{Lk}\circ \phi_{\mathcal{A}L}$ 
for any $q_{Lk}: L \to T_k \in {\mathbf q}_{lim}$ and any $t_k \in {\mathbf t}$.
Recall that this  $\phi_{\mathcal{A}L}$ is a surjective Poisson morphism.
Let us show that this $\phi_{\mathcal{A}L}$ is injective by contradiction.
Suppose that  $\phi_{\mathcal{A}L}$ is not injective.
There are $f,g \in \mathcal{A}$ such that $f\neq g $ and 
$\phi_{\mathcal{A}L}(f)= \phi_{\mathcal{A}L}(g)$.
For any $k$, $t_k( f ) = t_k (g)$ since 
$t_k = q_{Lk}\circ \phi_{\mathcal{A}L}$. 
For sufficiently large $N$, this is contradiction,
and $\phi_{\mathcal{A}L}$ is injective
\end{proof}

As a matter of fact, there is stronger uniqueness 
of the Poisson morphism in the case of fuzzy spheres.

\begin{proposition}\label{fuzzy_unique_mor}
Let ${\mathbf t}$ and  ${\mathbf q}$ be sets of the quantizations 
${\mathbf t} = \{ t_k : \mathcal{A} \to T_k \in Q ~|~ k >1 \}$ and
${\mathbf q} = \{ q_k : B \to T_k \in Q ~|~ k >1 \}$,
where $B$ is arbitrary object in $QW$.
If there is $\phi : B \to \mathcal{A} \in Mor (QW)$ satisfies the following 
commutative diagram:
\begin{align*} 
\xymatrix{
\mathcal{A} \ar[d] \ar[rd] \ar[rrd] & {}
\ar[l]_{\phi}
B
\ar[dl] \ar[d] \ar[rd]^{{}~~~\cdots}
&\\
T_2& 
T_3&
T_4~~ \cdots
} ,
\end{align*}
then $\phi$ is unique.
\end{proposition}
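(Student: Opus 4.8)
The plan is to exploit the fact that, while no single quantization map $t_k$ is injective, the whole family $\{t_k\}_{k>1}$ separates the points of $\mathcal{A}$; once this is established, uniqueness of $\phi$ follows by the same kind of argument as in Proposition \ref{poly_unique_mor}.

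First I would extract the needed consequence of the kernel computation already recorded in Subsection \ref{naive_classical_lim_FzS2}. Since $\ker t_k$ consists of those $f\in\mathcal{A}$ all of whose completely symmetric trace-free components of degree strictly less than $k$ vanish, and since every $f\in\mathcal{A}$ is a polynomial and hence has only finitely many nonzero such components, it follows that for $f\neq 0$, letting $m$ be the least degree with $f_{a_1\cdots a_m}\neq 0$, we have $f\notin\ker t_k$ for all $k>m$ (such $k\ge 2$ always exist). In particular $\bigcap_{k>1}\ker t_k=\{0\}$.

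Then I would argue by contradiction. Since each $q_k\in Q$, the source $B$ is a Poisson algebra, and a morphism $B\to\mathcal{A}$ in $QW$ with Poisson source and target can only lie in $Mor(\mathscr{Poisss})$ (the other two pieces making up $Mor(QW)$ have objects of $QP$ as target); hence $\phi$, and likewise any competitor $\phi'$, is a linear map. Suppose $\phi,\phi':B\to\mathcal{A}$ both make the displayed diagram commute, i.e. $t_k\circ\phi=q_k=t_k\circ\phi'$ for all $k>1$, but $\phi\neq\phi'$. Choose $b\in B$ with $\phi(b)\neq\phi'(b)$ and set $h:=\phi(b)-\phi'(b)\in\mathcal{A}$, so $h\neq 0$. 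By the previous paragraph there is some $k>1$ with $t_k(h)\neq 0$, whereas linearity of $t_k$ together with commutativity gives $t_k(h)=t_k(\phi(b))-t_k(\phi'(b))=q_k(b)-q_k(b)=0$, a contradiction. Therefore $\phi=\phi'$.

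I do not anticipate a genuine obstacle here: the whole argument rests on the kernel description of the $t_k$, which is already available, together with the elementary observation that elements of $\mathcal{A}$ have finite degree, so that the separating property of the family $\{t_k\}$ is immediate. The only point deserving a line of care is verifying that the hypotheses really force $\phi$ to be a linear map, so that $h=\phi(b)-\phi'(b)$ is a well-defined element of $\mathcal{A}$; this is settled by inspecting the three disjoint pieces that constitute $Mor(QW)$.
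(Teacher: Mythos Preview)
Your proof is correct and follows essentially the same approach as the paper: both argue that $B$ must be a Poisson algebra (so $\phi$ is linear), then obtain a contradiction by observing that the family $\{t_k\}_{k>1}$ separates points of $\mathcal{A}$. The only cosmetic difference is that you phrase the separating property as $\bigcap_{k>1}\ker t_k=\{0\}$, whereas the paper phrases it dually by noting that $t_N$ is injective on the subspace $H_N$ of polynomials of degree at most $N-1$ and that $\phi(b),\phi'(b)\in H_N$ for $N$ large enough.
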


\begin{proof}
 $B$ is a Poisson algebra, since
$ \mathcal{A}$ is a Poisson algebra, and only Poisson algebras
have possibility to be sources of morphisms in $QW$
which target is $\mathcal{A}$. 
Suppose that there exist two morphisms $\phi, \phi' : B \rightarrow \mathcal{A}$
such that $t_k \circ \phi = t_k \circ \phi' = q_k$.
\begin{align*} 
\xymatrix{
\mathcal{A} \ar[d] \ar[rd] \ar[rrd] & {}
\ar[l]_{\phi , \phi'}
B
\ar[dl] \ar[d] \ar[rd]^{{}~~~\cdots}
&\\
T_2& 
T_3&
T_4~~ \cdots
}
\end{align*}
If $\phi \neq \phi'$, there exist $b \in B$ such that 
$\phi (b) \neq \phi'(b)$ and $t_n (\phi (b)) = t_n ( \phi'(b))$ 
for $\forall n \in {\mathbb Z}_{> 0}$.
We denote the coimage of $t_N$ by 
$H_N := \{ f_0+f_{a_1} x^{a_1}+\cdots +f_{a_1\cdots a_{N-1}} x^{a_1}\cdots x^{a_{N-1}}\}$.
For sufficiently large $N$, $\phi (b) , \phi'(b) \in H_N$,
and $t_N (\phi (b)) \neq t_N ( \phi'(b))$.
This is a contradiction, so
the uniqueness of the map $\phi$ is shown.
\end{proof}

\begin{proposition}
For the above ${\mathcal I}_{fzS2}'$ and $D_{fzS2}'$,
the weak classical limit does not exist.
\end{proposition}
\begin{proof}
Suppose that there exists the weak classical limit.
From Proposition \ref{prop5_8}, 
the weak classical limit is isomorphic to
$( \mathcal{A} , {\mathbf t})$.
Let us introduce quantization maps 
$\tilde{t_k}: \mathcal{A} \to T_k ~ (k=2,3,\cdots )$ by
$
\tilde{t}_k(f):=f_0{\bf 1}_k+f_{a_1}X^{a_1}_{k},~
(k=2,3,\cdots )
$
for 
$
f=f_0+f_ax^a+\frac{1}{2}f_{ab}x^ax^b+\cdots \in \mathcal{A}
$.
$( \mathcal{A} , \tilde{\mathbf t})$ is a cone too, where
$\tilde{\mathbf t} = \{ \tilde{t}_k \in Q ~|~ k>1 \}$.
Then there exist the 
unique surjective $\phi : ( \mathcal{A} , \tilde{\mathbf t})
\to ( \mathcal{A} , {\mathbf t})$ such that $t_k \circ \phi = \tilde{t}_k$.
Consider $k \ge 3$ case.  Because
$ t_k \circ \phi (f) = \tilde{t}_k (f)= f_0{\bf 1}_k+f_{a_1}X^{a_1}_{k},$
$\displaystyle \phi (f)= f_0+f_ax^a+ g$,
where $g$ is a polynomial of degree $k$ or higher given as
$\sum_{j\ge k}\frac{1}{j!}g_{a_1\cdots a_j} x^{a_1}\cdots x^{a_j}$, for any $f$. For example, $x^1x^2+ x^2x^1 \notin \mbox{Im} \phi$, and
this contradicts that the $\phi$ is surjective.
\end{proof}



If ${\mathcal I}$ and $D$ are chosen 
to satisfy Theorem \ref{thm_5_1_limit}, the naive classical 
limit is obviously determined, as shown in the examples above. 
On the other hand, to get the weak classical limits 
we have to choose 
appropriate pairs of an index category and a diagram.
If we can not find an appropriate pair of them, 
the weak classical limit is not determined
in general.
Moreover, since any Poisson algebra is a 
candidate for this weak classical limit, 
proving the existence of the weak classical limit 
may be difficult in general.

\subsection{Strong Classical Limit through $(A \downarrow I_{QP})$ }
Apart from the weak classical limit that 
we have examined in the previous subsections, 
it is also natural to assume that 
the classical limit should be determined from the all quantization spaces
corresponding to a single Poisson algebra. 
In particular, it is expected that the fixed Poisson algebra 
is a candidate for the classical limit.
In this subsection, we investigate such classical limits.\\

To prepare for this, we first introduce a way to construct
an index category from a category
with total order.
\begin{definition}
Let $\mathcal{C}$ be a category with total order denoted by 
$c_i \le c_j$ 
$(c_i , c_j \in ob (\mathcal{C}) )$.
$J^\bullet $ is a functor like correspondent 
(but not functor) between 
$\mathcal{C}$ and an index category $\mathcal{I}:=J^\bullet ( \mathcal{C} ) $
defined by
the following conditions:
\begin{enumerate}
\item Between objects of the two categories,
\begin{align*}
J^\bullet :& ob(\mathcal{C})  \xrightarrow{\simeq} ob( \mathcal{I})=  ob(J^\bullet (\mathcal{C}))\\
& {}\ \ \ \ c \ \ \longmapsto i_c := J^\bullet (c)
\end{align*}
is a one to one correspondence.

\item $\forall t_{ij} \in Mor_{\mathcal{C}} (c_i , c_j )$ if
$c_i \le c_j$ then
there exists a $f_{ij}= J^\bullet (t_{ij})  \in Mor_{\mathcal{I}} (i_{c_i} , i_{c_j})$.
$J^\bullet (Id_{c_i})=Id_{i_{c_i}}.$
\item $\forall f_{kl} \in Mor_{\mathcal{I}} (i_{c_k} , i_{c_l})$,
there exists a unique $t_{kl} \in Mor_{\mathcal{C}} ({c_k} , {c_l})$
such that $c_k \le c_l$ and $f_{kl}= J^\bullet (t_{kl})$.
\item  For $c_i \le c_j \le c_k$, if 
$t_{ij} \in Mor_{\mathcal{C}} (c_i , c_j )$,
$t_{jk} \in Mor_{\mathcal{C}} (c_j , c_k )$, and
$t_{ik} \in Mor_{\mathcal{C}} (c_i , c_k )$ satisfy 
$ t_{jk}\circ t_{ij}=t_{ik} $, then 
$J^\bullet ( t_{ik} )=J^\bullet ( t_{jk} )\circ J^\bullet ( t_{ij} ).$
\end{enumerate}

\end{definition}

Note that if $c_k >  c_l$ then 
$Mor_{\mathcal{I}} (i_{c_k} , i_{c_l})= \emptyset$
even when $Mor_{\mathcal{C}} ({c_k} , {c_l}) \neq \emptyset$.
In short, $J^\bullet$ acts to remove morphisms from bigger object to smaller object in $\mathcal{C}$.
This is the reason that $J^\bullet $ is not a functor.

Therefore, we have to show the following.
\begin{proposition}
$\mathcal{I}:=J^\bullet (\mathcal{C})$ is a category.
\end{proposition}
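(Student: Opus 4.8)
The plan is to transport the category structure of $\mathcal{C}$ to $\mathcal{I}=J^\bullet(\mathcal{C})$ through $J^\bullet$ and to check the axioms by pulling everything back to $\mathcal{C}$. First I would record the structural consequence of conditions 1--3: for $i_{c_i},i_{c_j}\in ob(\mathcal{I})$ with $c_i\le c_j$, condition 2 provides and condition 3 uniquely identifies a bijection $J^\bullet:Mor_{\mathcal{C}}(c_i,c_j)\xrightarrow{\simeq}Mor_{\mathcal{I}}(i_{c_i},i_{c_j})$, while if $c_i>c_j$ then $Mor_{\mathcal{I}}(i_{c_i},i_{c_j})=\emptyset$, since by condition 3 any morphism in it would admit a preimage $t\in Mor_{\mathcal{C}}(c_i,c_j)$ with $c_i\le c_j$, which is impossible. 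In particular $J^\bullet$ is injective on $\bigsqcup_{c_i\le c_j}Mor_{\mathcal{C}}(c_i,c_j)$, so every morphism of $\mathcal{I}$ has a well-defined preimage in $\mathcal{C}$.

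Next I would define composition and identities. Given $f\in Mor_{\mathcal{I}}(i_{c_i},i_{c_j})$ and $g\in Mor_{\mathcal{I}}(i_{c_j},i_{c_k})$, the first step forces $c_i\le c_j\le c_k$, hence $c_i\le c_k$ by transitivity; writing $f=J^\bullet(s)$ and $g=J^\bullet(t)$ for the unique preimages $s\in Mor_{\mathcal{C}}(c_i,c_j)$, $t\in Mor_{\mathcal{C}}(c_j,c_k)$, I set $g\circ f:=J^\bullet(t\circ s)$. This lies in $Mor_{\mathcal{I}}(i_{c_i},i_{c_k})$ exactly because $t\circ s\in Mor_{\mathcal{C}}(c_i,c_k)$ with $c_i\le c_k$, so the deletion of ``downward'' morphisms performed by $J^\bullet$ does not obstruct it. For the identity of $i_c$ I take $\mathrm{id}_{i_c}:=J^\bullet(\mathrm{id}_c)$, which is defined since $c\le c$.

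The remaining axioms reduce to facts in $\mathcal{C}$ via the injectivity of the first step. For associativity, with composable $f=J^\bullet(s)$, $g=J^\bullet(t)$, $h=J^\bullet(u)$, the preimage of $h\circ g=J^\bullet(u\circ t)$ is $u\circ t$ itself, so $(h\circ g)\circ f=J^\bullet((u\circ t)\circ s)=J^\bullet(u\circ(t\circ s))=h\circ(g\circ f)$ by associativity in $\mathcal{C}$. Likewise, for $f=J^\bullet(s):i_c\to i_{c'}$ the preimage of $J^\bullet(\mathrm{id}_c)$ is $\mathrm{id}_c$, so $f\circ\mathrm{id}_{i_c}=J^\bullet(s\circ\mathrm{id}_c)=J^\bullet(s)=f$ and similarly $\mathrm{id}_{i_{c'}}\circ f=J^\bullet(\mathrm{id}_{c'}\circ s)=f$.

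The only genuine obstacle, and the point worth emphasizing, is closure of composition: since $J^\bullet$ is not a functor and deliberately removes the morphisms running from a larger object to a smaller one, one must verify that the composite of two morphisms surviving in $\mathcal{I}$ again survives, and this is precisely what transitivity of the total order on $ob(\mathcal{C})$ guarantees. Everything else is bookkeeping transported from $\mathcal{C}$ through the hom-set bijection of the first step. (If one wants the hom-sets of $\mathcal{I}$ to be pairwise disjoint, one may as usual take a morphism of $\mathcal{I}$ to carry its source and target as part of its data.)
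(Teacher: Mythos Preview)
Your proof is correct and rests on the same key observation as the paper's: the only nontrivial axiom to check is closure of composition, and this holds because transitivity of the total order guarantees that if $c_i\le c_j$ and $c_j\le c_k$ then $c_i\le c_k$, so the composite morphism survives the deletion performed by $J^\bullet$. The paper phrases this contrapositively (if $Mor_{\mathcal{I}}(i_X,i_Z)=\emptyset$ then one of the two factor hom-sets must already be empty), while you argue it directly; these are logically equivalent.

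Your treatment is more thorough than the paper's in that you explicitly \emph{define} composition and identities in $\mathcal{I}$ via the hom-set bijection and then verify associativity and the unit laws by pulling back to $\mathcal{C}$, whereas the paper implicitly takes these as inherited and only addresses the closure obstruction. This extra care is warranted, since the definition of $J^\bullet$ in the paper specifies $\mathcal{I}$ only at the level of objects and hom-sets, not composition; your reconstruction makes the argument self-contained.
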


\begin{proof}
The only condition that needs to be checked is consistency regarding
$Mor_{\mathcal{I}} (i_{c_k} , i_{c_l})= \emptyset$ when
$Mor_{\mathcal{C}} ({c_k} , {c_l}) \neq \emptyset$ with $c_k >  c_l$.
For any two $f : X \rightarrow Y$ and $g :  Y \rightarrow Z$ 
their composition $g\circ f : X \rightarrow Z$ have to be a morphism
of $Mor(X,Z)$.
In our situation, if the morphism 
corresponding to $g\circ f$ is removed
without removing the morphism corresponding to $g$ or the morphism corresponding to $f$, then it is impossible for $\mathcal{I}:=J^\bullet (\mathcal{C})$ 
to be a category.
Let $c_X, c_Y , c_Z $ be objects of ${\mathcal C}$ and 
$i_X = J^\bullet ( c_X ),i_Y = J^\bullet ( c_Y ),i_Z = J^\bullet ( c_Z )$
be objects of $\mathcal{I}$. 
For the case $c_X > c_Z$,
$Mor_{\mathcal{I}}(i_X , i_Z) $ becomes $\emptyset$, so if
$Mor_{\mathcal{I}}(i_X , i_Y) \neq \emptyset$ and 
$Mor_{\mathcal{I}}(i_Y , i_Z) \neq \emptyset$ , then it is inconsistent
for $\mathcal{I}$ to be a category.
However, such a situation cannot arise.
The reasons are as follows.
Suppose that $Mor_{\mathcal{I}}(i_X , i_Z) $ becomes $\emptyset$ 
by the condition $c_X> c_Z$.
For the case $c_Y \ge c_X$, $c_Y > c_Z$ also holds, then $Mor_{\mathcal{I}}(i_Y , i_Z) = \emptyset$.
For the case $c_X > c_Y $, $Mor_{\mathcal{I}}(i_X , i_Y) = \emptyset$.
Thus, it is shown that the consistency for compositions of morphisms is guaranteed.
\end{proof}


We introduce a total order to $(A \downarrow I_{QP})$, as follows.
\begin{definition}[total order of $(A \downarrow I_{QP})$]
For $(q_i , T_i) , (q_j , T_j ) \in ob (A \downarrow I_{QP})$ , we define $(q_i , T_i) \le (q_j , T_j )$ by $|\hbar(q_i)|^2 \le |\hbar (q_j)|^2$.
\end{definition}

By this total order, 
an index category
$\mathcal{I}_{\hbar}(A):=  J^\bullet (A \downarrow I_{QP})$ is determined from
$(A \downarrow I_{QP})$.

\begin{align*}
\left(
\vcenter{
\xymatrix{
A \ar[d]_{q_1} \ar[rd]_{q_{2}} \ar[rrd]^{q_{3}\cdots}&{}& {}&\\
T_1 \ar[r]_{t_{12}}& 
T_2 \ar[r]^{t_{23}} &
 \ar@<0.5ex>[l]^{t_{32}} T_3 \ar[r] & \ar@<0.5ex>[l] \cdots
}}
\right)
\rightarrow
\left(
\xymatrix{
i_1 \ar[r]^{f_{12}}& 
i_2 \ar[r]^{f_{23}} &
i_3 \ar[r] &  \cdots
}
\right)= \mathcal{I}_{\hbar}(A)
.
\end{align*}
Here $J^\bullet (q_k , T_k ) = i_k$ and $J^\bullet (t_{kl}) = f_{kl}$
when $|\hbar(q_k)|^2 \le |\hbar(q_l)|^2$.
Next, the diagram 
$D_{\hbar} : \mathcal{I}_{\hbar}(A) \rightarrow (A \downarrow I_{QP})$ 
is defined by 
$D_{\hbar}( i_k ) = (q_k , T_k )$ and $D_{\hbar}( f_{kl} ) = t_{kl}$.
\begin{align*}
\left(
\vcenter{
\xymatrix{
i_1 \ar[r]^{f_{12}}& 
i_2 \ar[r]^{f_{23}} &
i_3 \ar[r] &  \cdots
}}
\right)
\xrightarrow{D_{\hbar} }
\left(
\vcenter{
\xymatrix{
(q_1,T_1) \ar[r]^{t_{12}}& 
(q_2,T_2) \ar[r]^{t_{23}} &
(q_3,T_3) \ar[r] &  \cdots
}}
\right).
\end{align*}
Notice that it is possible to put a natural initial object $( Id_A , A) $
into this $D_{\hbar} (\mathcal{I}_{\hbar}(A) ) $.
\begin{definition}
$A/QW_{\hbar}$ is a category defined by
\begin{align*}
ob( A/QW_{\hbar} ):= ob ( D_{\hbar} (\mathcal{I}_{\hbar}(A) )) \cup \{( Id_A , A)\}\end{align*}
and
\begin{align*}
Mor_{A/QW_{\hbar}}((q_i , T_i) , (q_j , T_j)) =&
Mor_{D_{\hbar} (\mathcal{I}_{\hbar}(A) )}((q_i , T_i) , (q_j , T_j)) \\
& \mbox{for } ~ \forall (q_i , T_i) , (q_j , T_j)\in ob ( D_{\hbar} (\mathcal{I}_{\hbar}(A) )) , \\
Mor_{A/QW_{\hbar}}((Id_A , A) , (q_i , T_i))=& \{ q_i \} ~~ \mbox{for}~~ 
\forall
(q_i , T_i) \in ob ( D_{\hbar} (\mathcal{I}_{\hbar}(A) )).
\end{align*}
\end{definition}

\begin{align*}
\vcenter{
\xymatrix{
(Id_A ,A ) \ar[d]_{q_1} \ar[rd]_{q_{2}} \ar[rrd]^{q_{3}\cdots}&{}& {}&\\
(q_1,T_1) \ar[r]_{t_{12}}& 
(q_2 ,T_2) \ar[r]_{t_{23}} &
 (q_3 , T_3) \ar[r] &  \cdots
}}
.
\end{align*}

Note that the index $\mathcal{I}_{\hbar}(A)$ and the diagram $ D_{\hbar}$
are still valid in $A/QW_{\hbar}$.

\begin{proposition}\label{prop_stronglimit}
$A/QW_{\hbar}$ has a limit $(( Id_A , A), {\mathbf q})$ 
for the index category
$\mathcal{I}_{\hbar}(A)$ with the diagram $ D_{\hbar}$.
Here, $(( Id_A , A), {\mathbf q})$
is a cone of 
$D_{\hbar} (\mathcal{I}_{\hbar}(A) )$
with ${\mathbf q} = \{q_i \in Q ~|~ s(q_i)= A \}$.
\end{proposition}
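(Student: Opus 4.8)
The plan is to verify that the cone $((Id_A,A),\mathbf{q})$ over the diagram $D_\hbar(\mathcal{I}_\hbar(A))$ is terminal among cones \emph{that live in $A/QW_\hbar$}. The key structural fact is that the only object of $A/QW_\hbar$ which is not in the image of $D_\hbar$ is the adjoined initial object $(Id_A,A)$, and the only morphisms out of $(Id_A,A)$ are the $q_i$ themselves. So any cone in $A/QW_\hbar$ has its apex equal to some $(q_k,T_k)$ or equal to $(Id_A,A)$; the heart of the argument is that an apex of the first kind admits a (unique) cone morphism to $(Id_A,A)$, and that uniqueness also holds when the apex is $(Id_A,A)$.

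First I would set up the cone: for each $i\in\mathcal{I}_\hbar(A)$ the component $q_i\colon (Id_A,A)\to (q_i,T_i)$ is a morphism of $A/QW_\hbar$ by definition, and the cone condition $D_\hbar(f_{ij})\circ q_i = t_{ij}\circ q_i = q_j$ is exactly the commutativity built into $(A\downarrow I_{QP})$ (and preserved by $J^\bullet$ and $D_\hbar$); so $((Id_A,A),\mathbf{q})$ is indeed a cone. Next, given an arbitrary cone $(X,\mathbf{p})$ with $X\in ob(A/QW_\hbar)$ and components $p_i\colon X\to (q_i,T_i)$, I would split into the two cases for $X$. If $X=(Id_A,A)$, then each $p_i$ is a morphism $(Id_A,A)\to (q_i,T_i)$, hence by the definition of $Mor_{A/QW_\hbar}$ we have $p_i=q_i$; thus the only mediating morphism $(Id_A,A)\to(Id_A,A)$ compatible with the cones is the identity, giving uniqueness in this case directly. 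If instead $X=(q_k,T_k)$ for some $k$, then the $k$-th component $p_k\colon (q_k,T_k)\to(q_k,T_k)$ of the cone, compatible with the identity leg of $\mathbf{q}$ at $k$, is forced; and the candidate mediating morphism $u\colon (q_k,T_k)\to (Id_A,A)$ must satisfy $q_k\circ u = p_k$ — but since there are \emph{no} morphisms in $A/QW_\hbar$ from an object $D_\hbar(i_k)$ to $(Id_A,A)$, this case simply cannot produce a cone unless it is already handled by reduction, so one argues that a cone with such an apex does not obstruct terminality. Concretely, I would observe that for $X=(q_k,T_k)$ the leg $p_j$ for $j$ with $(q_j,T_j)\le(q_k,T_k)$ would require a backward morphism $t_{kj}$, which need not exist, so in fact the relevant verification reduces to showing: any cone in $A/QW_\hbar$ factors uniquely through $((Id_A,A),\mathbf q)$, and the only apex supporting a full cone is $(Id_A,A)$ itself, where uniqueness was already shown.

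The main obstacle I anticipate is \emph{bookkeeping about which cones actually exist in $A/QW_\hbar$}: one must be careful that $A/QW_\hbar$ is a genuine category (the previous proposition about $J^\bullet(\mathcal{C})$ handles the index side, but here we have also adjoined $(Id_A,A)$), and that no cone with apex some $(q_k,T_k)$ can exist whose mediating arrow to $(Id_A,A)$ would be required but unavailable — i.e. one must confirm that the universal property is tested only against cones that are legitimately present. Once it is pinned down that every cone in $A/QW_\hbar$ has apex $(Id_A,A)$ with legs forced to equal $\mathbf q$ (because $Mor_{A/QW_\hbar}((Id_A,A),(q_i,T_i))=\{q_i\}$ is a singleton), the terminality and uniqueness of the mediating morphism are immediate, and the proposition follows. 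I would close by remarking that this is precisely what makes the \emph{strong} classical limit automatic: the singleton hom-sets out of the adjoined object collapse all the freedom.
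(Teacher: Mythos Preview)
Your strategy matches the paper's: show $((Id_A,A),\mathbf q)$ is a cone, then argue that no $(q_k,T_k)$ can serve as the apex of a cone over the whole diagram, so the only cone is $((Id_A,A),\mathbf q)$ itself and uniqueness of the mediating map follows from the singleton hom-sets. The paper also notes explicitly that in $A/QW_\hbar$ there is at most one morphism between any two objects, which is what makes the uniqueness clause immediate.

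There is, however, a genuine gap in your treatment of the case $X=(q_k,T_k)$. You write that the leg $p_j$ for $(q_j,T_j)\le (q_k,T_k)$ ``would require a backward morphism $t_{kj}$, which need not exist''. This is not enough: to rule out $(q_k,T_k)$ as a cone apex you must exhibit, for the given $k$, at least one index $j$ in the diagram with $|\hbar(q_j)|^2$ \emph{strictly} smaller than $|\hbar(q_k)|^2$, so that $Mor_{A/QW_\hbar}((q_k,T_k),(q_j,T_j))=\emptyset$ definitely. Nothing in your outline guarantees such a $j$ exists; a priori $(q_k,T_k)$ could be minimal for the total order, in which case your obstruction never fires and you have not excluded it as a cone apex (and then, since there is no morphism $(q_k,T_k)\to(Id_A,A)$, terminality of $((Id_A,A),\mathbf q)$ would fail).

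The paper closes exactly this hole by invoking Proposition~\ref{prop7}: for any $q_k\in Q$ one can rescale to obtain $q_k^x=\frac{x}{\hbar(q_k)}q_k\in Q$ with $\hbar(q_k^x)=x$ for any $x\in\mathbb C^\times$, so there is always an object $(q_x,T_x)$ in $D_\hbar(\mathcal I_\hbar(A))$ with $|\hbar(q_x)|^2<|\hbar(q_k)|^2$. That single sentence is the missing ingredient; once you insert it, your argument is complete and coincides with the paper's.
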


\begin{proof}
$( Id_A , A)$ is the initial object in $A/QW_{\hbar}$.
By definition, $(( Id_A , A), {\mathbf q})$
is a cone of 
$D_{\hbar} (\mathcal{I}_{\hbar}(A) )$.
Because every morphism $t_{ij} : (q_i , T_i) \rightarrow (q_j , T_j)$
satisfies commutativity $q_j = t_{ij}\circ q_i$ 
by the definition of $(A \downarrow I_{QP})$.
In $A/QW_{\hbar}$, there is at most one morphism between any two objects.
So, $(( Id_A , A), {\mathbf q})$ is one of the candidates 
of the limit.
Next, we prove that there are no other candidates for the limit other 
than $(( Id_A , A), {\mathbf q})$, by contradiction.
If the other candidate of the limit of the diagram $ D_{\hbar}$ exists,
it is given by a cone with the form $(( q_k , T_k) , {\mathbf t}_{k -} )$,
where $q_k : A \rightarrow T_k $ is in $Q$ and 
${\mathbf t}_{k -} = \{ t_{kl}: q_k \rightarrow q_l= t_{kl}\circ q_k
~|~ t_{kl}\in Mor_{QP}(T_k , T_l) \}$.
From Proposition \ref{prop7}, there exist $q_x : A \to T_x \in Q$
such that $|q_x|^2 < |q_k|^2$. $(q_x , T_x )$ is an object in $D_{\hbar} (\mathcal{I}_{\hbar}(A) )$, and $( q_k , T_k)> (q_x , T_x )$.
Then $Mor_{A/QW_{\hbar}}(( q_k , T_k),(q_x , T_x ))= \emptyset$, and 
this contradicts $(( q_k , T_k) , {\mathbf t}_{k -} )$ is a cone.
Therefore, we find that 
$(( Id_A , A), {\mathbf q})$ is the limit.
\end{proof}

From this proposition, we found that for any Poisson algebra $A$ in $QW$
the limit for the index category
$\mathcal{I}_{\hbar}(A)$ with the diagram $ D_{\hbar}$
is determined without additional information.
We call this limit $(( Id_A , A), {\mathbf q})$ 
``strong classical limit of $A$".

This Proposition \ref{prop_stronglimit}
shows that the strong classical limit satisfies 
at least the properties that the limit should have that we wanted 
in this subsection.

\subsection{Example of Strong Classical Limit}

Let us consider $\mathcal{A}$ we studied in the fuzzy sphere
in Appendix \ref{appendix_fuzzy}.
The index $\mathcal{I}_{\hbar}(\mathcal{A})$, the diagram 
$ D_{\hbar}(\mathcal{I}_{\hbar}(\mathcal{A}))$ 
and $\mathcal{A}/QW_{\hbar}$ are 
determined 
without extra information.
We obtain the following fact.
\begin{example}\label{ex_calA_stlim}
Let $\mathcal{A}$ be a Poisson algebra studied
in Appendix \ref{appendix_fuzzy}.
The strong classical limit of $\mathcal{A}$ is 
$((Id_\mathcal{A}, \mathcal{A}) , {\mathbf q} )$, where
${\mathbf q} =\{ q_i \in Q ~| ~ s(q_i) = \mathcal{A} \} $.
\end{example}
From Proposition \ref{prop_stronglimit}
there is no need to prove 
this Example \ref{ex_calA_stlim}.
However, this claim can be shown just 
from the properties of some quantization maps 
of matrix regularization. 
This is so suggestive that we will also provide the proof below.

\begin{proof}
As we saw in the proof of Proposition \ref{prop_stronglimit},
$((Id_\mathcal{A}, \mathcal{A}) , {\mathbf q} )$ is a cone 
for the diagram $ D_{\hbar}$ in $\mathcal{A}/QW_{\hbar}$.
We show that there are no other cones.

Suppose that there is another cone.
Then, the cone is given as some 
$((q_{cone} , T_{cone} ), {\mathbf t}_{cone,-})$ where 
$q_{cone} \in Q $ with $s(q_{cone})= \mathcal{A}$ and
${\mathbf t}_{cone,-} = \{ t_{cone, k } : (q_{cone} , T_{cone} ) \rightarrow 
(q_{k} , T_{k} ) ~|~ q_k \in Q ,~s(q_k)= \mathcal{A},
 q_k = t_{cone, k }\circ q_{cone} \} $.
Consider $t_{cone , 2} \in Mor ( (q_{cone} , T_{cone} ) , (t_{2} , T_{2}))$,
where $t_2 \in Q$ is defined in 
 Appendix \ref{appendix_fuzzy}.
Since $t_{cone , 2}$ is an injection and 
$T_{2}$ is a $4$-dimensional vector space, $\dim T_{cone} \le 4$ is derived. 
For $t_3 : \mathcal{A} \rightarrow T_3$ defined in 
 Appendix \ref{appendix_fuzzy},
$\dim \mbox{\rm Im } t_3 > 4$.
On the other hand, $\dim \mbox{\rm Im } t_{cone, 3} \le 4$ for 
$t_{cone, 3} : T_{cone}\rightarrow T_3 $.
Therefore $t_3 \neq t_{cone,3}\circ q_{cone} $.
This is contradiction to the assumption that 
$((q_{cone} , T_{cone} ), {\mathbf t}_{cone,-})$ is a cone.

Thus, it is shown that $((Id_\mathcal{A}, \mathcal{A}) , {\mathbf q} )$
is the only cone and the limit 
for the diagram $ D_{\hbar}$ in $\mathcal{A}/QW_{\hbar}$.
\end{proof}

\section{One Attempt to Address the Inverse Problem}\label{sect5}

As mentioned at the beginning of this article,
the inverse problems of quantizations are important for physics related to 
noncommutative geometries.
The inverse problem of quantizations is the problem of how to determine the classical limits (classical manifolds) based on the information 
of the spaces appearing in the quantizations.
In general, classical limits are not uniquely determined 
from algebras as target spaces of quantization maps 
in a naive way, so we need to experiment 
with various subtle techniques.
This problem is closely related to the question of how we can identify
manifolds in the membrane theory, thus, 
it was often investigated in that context.
For example, in \cite{berezin1,bordemannA,Chu:2001xi}
it is discussed how to construct the classical limit in geometric quantization.
In \cite{shimada,berenstein,Schneiderbauer}, how membrane topology is distinguished in the context of
matrix regularization.
In \cite{ishiki1,ishiki2,asakawa}, 
inverse problems of Berezin-Toeplitz quantizations 
are discussed.\\

Unlike in the past, the purpose of this section is to study 
this inverse problem as a problem of the classical limit 
discussed in the previous section.
Until now, we have studied whole world of the quantizations 
of all Poisson algebras. 
In particular, the naive classical limit 
and the weak classical limits introduced in 
Subsection \ref{naive_classical_limit} and 
Subsection \ref{subsec_weak_limit} are
 chosen from among all Poisson algebras in $QW$.
To determine the naive classical limit or weak classical limit from 
the spaces appearing in the quantizations,
we are going to examine how to construct a sequence of objects in $QW$
and are going to discuss how to give the sequence 
by a process found in physics.

\subsection{Matrix Regularization and Inverse Problem}

The goal of this section is making an example 
of a method to obtain a Poisson algebra as a classical limit
from a noncommutative associative algebra.
Here we make this example using a fuzzy sphere as a role model.
In this section, the commutative ring $R$ is fixed to ${\mathbb C}$, again. 
\\

At first, let $\mathfrak{g}$ be a Lie algebra 
as a subset of a noncommutative algebra over ${\mathbb C}$ 
whose commutator is defined by its associative product
i.e. $[ a, b ]:= ab - ba$.
Only semisimple Lie algebras are treated in this section.
This $\mathfrak{g}$ is the origin to get the sequence of objects in $QW$
that may provide a classical limit.
Such sequence of objects in $QW$ is named 
the quantization family for the weak classical limit 
in Subsection \ref{subsec_weak_limit}.
Let $e=\{  e_1 , e_2 , \cdots ,e_d \}$ be a fixed base of $\mathfrak{g}$
satisfying commutation relations $[ e_i , e_j ]= f_{ij}^k e_k $,
where $f_{ij}^k$ are structure constants of $\mathfrak{g}$. 
For this Lie algebra $\mathfrak{g}$ we introduce a sequence of 
representation $\rho^{\mu} : \mathfrak{g} \rightarrow gl(V_\mu ) $ and
a sequence of numbers $\hbar(\mu )\neq 0$, ($\mu = 1,2,3,\cdots $).
Here $V_\mu$ is a finite dimensional vector space chosen as appropriate.
We denote the corresponding basis of $e$ by
\begin{align}\label{basis_matrix}
e^{(\mu)} =\{ 
\hbar(\mu) \rho^{\mu} (e_1 ), \hbar(\mu) \rho^{\mu} (e_2 ), 
\cdots , \hbar(\mu) \rho^{\mu} (e_d ) \}
= \{ 
e_1^{(\mu)} , e_2^{(\mu)} , \cdots  ,e_d^{(\mu)} \}.
\end{align}
Then they satisfy
\begin{align}
[ e_i^{(\mu)} , e_j^{(\mu)} ]= \hbar( \mu )f_{ij}^k e_k^{(\mu)} .
\end{align}
The Lie algebra $\rho^{\mu}(\mathfrak{g})$ 
or $\langle e^{(\mu)} \rangle$ are constructed by this basis.


For each $\mu$, we have relations that are originated in Casimir invariants.
To characterize a classical limit, we can use these relations.
For the later convinience, we denote them by
\begin{align}\label{ideal_relation}
 f_1^{\mu}(e^{(\mu)}, \nu^{(\mu)}, \hbar( \mu )),
\cdots ,
f_{N_\mu}^{\mu}(e^{(\mu)}, \nu^{(\mu)}, \hbar( \mu )) ,
\end{align}
where $\nu^{(\mu)}= \{ \nu_1^{(\mu)} , \nu_2^{(\mu)} , \cdots \}$
is a set of parameters, and $N_\mu \le rank \mathfrak{g}$.
We will use 
these relations to induce corresponding relations 
in the Poisson algebra, later.
We will be back to this subject, at the end of this subsection.

Next, let
$T_{\mu}$ be the vector space that is $\langle e^{(\mu)} \rangle$
forgetting multiplication structure.
We choose a basis of $T_{\mu}:=\langle e^{(\mu)} \rangle$,
$E_1, E_2, \dots, E_D$, as polynomials of $e^{(\mu)}$.
The highest degree is denoted by $n_\mu$, i.e.
$n_\mu = \max \{ {\rm deg}E_1, \cdots , {\rm deg}E_D \}$.
For later use, let us consider symmetrized polynomial 
$$c_{i_1, \cdots , i_k} e^{(\mu)}_{i_1} \cdots e^{(\mu)}_{ i_k} 
= c_{i_1, \cdots , i_k}e^{(\mu)}_{(i_1, \cdots ,i_k)}
\qquad (k= 0,1,2,\cdots ,n_\mu ),
$$ where the fixed coefficients $c_{i_1, \cdots , i_k} \in {\mathbb C}$
are complete symmetric, and 
\begin{align*}
e^{(\mu)}_{(i_1, \cdots ,i_k)}:=
\frac{1}{k!}\sum_{\sigma \in Sym(k)}
e^{(\mu)}_{i_{\sigma(1)}} \cdots e^{(\mu)}_{ i_{\sigma(k)}} .
\end{align*}
Here $Sym(k)$ is a symmetric group.
Obviously, each $e^{(\mu)}_{(i_1, \cdots ,i_k)} $
is given as a linear combination of $E_1, E_2, \dots, E_D$
and its degree is smaller than or equal to $n_\mu$.
For $m > n_\mu$ $e^{(\mu)}_{(i_1, \cdots ,i_m)}$
is written by some linear combination of lower degree polynomial,
then $\hbar(\mu)$ appear 
because of the definition (\ref{basis_matrix}).
From the set of representation  $\rho^{\mu}$ (and relations $f_i^{\mu}$),
we obtain sets of $T_{\mu}$.
We denote a set of all $T_{\mu}$ by $\{ T_{\mu} \}$.
To show that $\{ T_{\mu} \}$ is a sequence of
objects in $QW$, we have to find one Poisson algebra and to make 
quantization maps from the Poisson algebra to 
$\{ T_{\mu} \}$.\\

The next step we introduce a way to obtain one Poisson algebra
from a Lie algebra $\mathfrak{g}$ as a candidate of a weak classical limit.
There is a well-known way known as Kirillov-Kostant Poisson bracket,
that is the way constructing a Poisson algebra.
(See also \cite{Weinstein_Lu,SemenovTianShansky,Alekseev:1993qs}.)
We focus the following fact (See \cite{matrix1,Kostant}), here.
\begin{theorem}\label{LiePoissonThm}
Let $\mathfrak{g}$ be a $d$-dimensional Lie algebra.
Let $e=\{ e_1 , e_2 , \cdots , e_d \}$ be a basis of $\mathfrak{g}$
satisfying commutation relations $[ e_i , e_j ]= f_{ij}^k e_k $. 
Let $x=(x^1, x^2, \cdots , x^d)$ be commutative variables.
We obtain a Poisson algebra $(\mathbb{C}[x]  , \cdot , \{ ~ , ~ \})$ by
\begin{align}
\{ f , ~ g\}:= f \omega g := f \overleftarrow{\partial}_i \omega_{ij} 
\overrightarrow{\partial}_j g := ({\partial}_i f) \omega_{ij} 
({\partial}_j g) ,
\end{align}
where $\displaystyle \partial_i = \frac{\partial}{\partial x^i}$ and
$\omega_{ij} = f_{ij}^k x^k $.
\end{theorem}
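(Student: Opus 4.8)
The plan is to verify directly that $(\mathbb{C}[x],\cdot,\{\,,\,\})$ satisfies the axioms of a Poisson algebra. Since $\mathbb{C}[x]$ is already a commutative associative $\mathbb{C}$-algebra, what remains is to check that $\{\,,\,\}$ is a Lie bracket acting as a derivation in each argument. Bilinearity over $\mathbb{C}$ is immediate from $\{f,g\}=(\partial_i f)\,\omega_{ij}\,(\partial_j g)$, since each $\partial_i$ is $\mathbb{C}$-linear. Antisymmetry follows from the antisymmetry of the structure constants: $[e_i,e_j]=-[e_j,e_i]$ gives $f_{ij}^k=-f_{ji}^k$, hence $\omega_{ij}=f_{ij}^k x^k=-\omega_{ji}$, and relabelling summation indices yields $\{f,g\}=-\{g,f\}$. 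The Leibniz rule $\{f,gh\}=\{f,g\}\,h+g\,\{f,h\}$ is inherited from $\partial_j(gh)=(\partial_j g)h+g(\partial_j h)$, and combined with antisymmetry it also gives the derivation property in the first slot. So far everything is routine index bookkeeping.

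The substantive point is the Jacobi identity
\begin{align*}
\{\{f,g\},h\}+\{\{g,h\},f\}+\{\{h,f\},g\}=0 .
\end{align*}
I would first observe that, once bilinearity and the Leibniz rule are established, the Jacobiator $(f,g,h)\mapsto \{\{f,g\},h\}+\text{(cyclic)}$ is a derivation in each of its three arguments; this is the standard computation showing that the obstruction to Jacobi is a triderivation. Since $\mathbb{C}[x]$ is generated as an algebra by $x^1,\dots,x^d$, it therefore suffices to check the Jacobi identity on the generators $x^a,x^b,x^c$.

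For the generators the computation is explicit. From $\{x^a,x^b\}=\omega_{ab}=f_{ab}^k x^k$ one gets
\begin{align*}
\{\{x^a,x^b\},x^c\}=\{f_{ab}^k x^k,x^c\}=f_{ab}^k\,\omega_{kc}=f_{ab}^k f_{kc}^l\, x^l ,
\end{align*}
and summing the three cyclic permutations of $(a,b,c)$ produces $\big(f_{ab}^k f_{kc}^l+f_{bc}^k f_{ka}^l+f_{ca}^k f_{kb}^l\big)x^l$. The coefficient vanishes for every $l$ precisely because it is the Jacobi identity for the structure constants of $\mathfrak{g}$, i.e. the infinitesimal form of $[[e_a,e_b],e_c]+[[e_b,e_c],e_a]+[[e_c,e_a],e_b]=0$. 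Hence the Jacobiator vanishes on generators, and by the triderivation property it vanishes identically, which completes the proof.

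The only step requiring genuine care — and the one I would regard as the main obstacle — is the justification of the reduction to generators, namely the clean verification that the Jacobiator is a triderivation using only the Leibniz rule and antisymmetry, with the cyclic terms combined correctly. Everything else reduces to contracting structure constants.
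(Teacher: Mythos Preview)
Your proof is correct, but it takes a genuinely different route from the paper's for the Jacobi identity. The paper computes $\{\{f_1,f_2\},f_3\}+\text{cyclic}$ directly for arbitrary $f_1,f_2,f_3\in\mathbb{C}[x]$: expanding $\partial_k\big((\partial_i f_1)\omega_{ij}(\partial_j f_2)\big)\omega_{kl}\partial_l f_3$ by the product rule and summing cyclically, the second-derivative terms cancel pairwise by the antisymmetry $\omega_{ij}=-\omega_{ji}$, and what survives is
\[
(\partial_i f_1)(\partial_j f_2)(\partial_l f_3)\big[(\partial_k\omega_{ij})\omega_{kl}+(\partial_k\omega_{jl})\omega_{ki}+(\partial_k\omega_{li})\omega_{kj}\big],
\]
which vanishes after substituting $\omega_{ij}=f_{ij}^k x^k$ and invoking the Jacobi identity for the structure constants. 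Your approach instead invokes the abstract lemma that the Jacobiator is a triderivation (given bilinearity, antisymmetry, and Leibniz) to reduce to the generators $x^a,x^b,x^c$, where the check is a one-line contraction of structure constants. Your route is cleaner in that it sidesteps the bookkeeping of the second-derivative cancellations, at the cost of isolating the triderivation lemma as a separate step; the paper's route is more self-contained and makes the role of the linearity of $\omega_{ij}$ in $x$ (so that $\partial_k\omega_{ij}=f_{ij}^k$ is constant) visible in a single computation.
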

The proof is given in \cite{matrix1} but for readers convenience 
we give the proof here.
\begin{proof}
$\{ f , ~ g\}= - \{ g , ~ f\}$ is followed from $f_{ij}^k = - f_{ji}^k$.
The Leibniz's rule and the bilinearity are 
trivially satisfied by the definition.
By direct calculations, the Jacobi identity is obtained as follows.
\begin{align*}
&\{ \{f_1 , f_2 \} , f_3 \} + \{ \{f_2 , f_3 \} , f_1 \}+
\{ \{f_3 , f_1 \} , f_2 \}\\
&= (\partial_i f_1 ) (\partial_j f_2 )(\partial_l f_3 )
\{
(\partial_k \omega_{ij}) \omega_{kl} +
(\partial_k \omega_{jl}) \omega_{ki} +
(\partial_k \omega_{li}) \omega_{kj} 
\}.
\end{align*}
By $\omega_{ij} = f_{ij}^k x^k $,
\begin{align*}
(\partial_k \omega_{ij}) \omega_{kl} +
(\partial_k \omega_{jl}) \omega_{ki} +
(\partial_k \omega_{li}) \omega_{kj}
= x^m ( f_{ij}^k f_{kl}^m + f_{jl}^k f_{ki}^m + f_{li}^k f_{kj}^m )=0
\end{align*}
The last equality follows from the Jacobi identity of the Lie bracket
$[ e_i , e_j ]= f_{ij}^k e_k $.
\end{proof}
We denote this $(\mathbb{C}[x]  , \cdot , \{ ~ , ~ \})$ by
$A_\mathfrak{g}$.
From the above discussions, we found that we can obtain 
a sequence of Lie algebras and a Poisson algebra from a
fixed single Lie algebra.\\

Next, let us construct quantization maps 
from the Poisson algebra $A_\mathfrak{g}$ 
we have just obtained in Theorem \ref{LiePoissonThm}
to the vector spaces in $\{T_\mu \}$ that generate original algebras.
We define linear function
$q_\mu : A_\mathfrak{g} \rightarrow T_\mu $
by 
\begin{align} \label{q_mu}
\sum_k f_{i_1, \cdots , i_k} x^{i_1} \cdots x^{ i_k} \mapsto 
\sum_k^{n_\mu} f_{i_1, \cdots , i_k} e^{(\mu)}_{i_1} \cdots e^{(\mu)}_{ i_k}
= \sum_k^{n_\mu} f_{i_1, \cdots , i_k} e^{(\mu)}_{(i_1, \cdots , i_k)} ,
\end{align}
where $f_{i_1, \cdots , i_k} \in \mathbb{C}$ is completely symmetric,
and we assume that the multiplicative identity of  $A_\mathfrak{g}$ 
maps to the unit matrix in $T_\mu$.
Then this correspondence $q_\mu$ satisfies the following.
\begin{theorem}\label{sect5_thm_2}
Let $q_\mu : A_\mathfrak{g} \rightarrow  T_\mu = \langle e^{(\mu)}\rangle$ be a linear function defined as (\ref{q_mu}). Then it satisfies
\begin{align*}
[q_{\mu} ( f ) , q_{\mu} ( g )]
= \hbar ({\mu}) q_{\mu} ( \{ f , g \} ) + \tilde{O}(\hbar({\mu}))
\end{align*}
for $f,g \in A_\mathfrak{g} $. In other words, $q_\mu \in Q$.
\end{theorem}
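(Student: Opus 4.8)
The plan is to establish the displayed relation, in the form required by Definition~\ref{WQ}, by induction on $\deg f+\deg g$. As $[\cdot,\cdot]$ and $\{\cdot,\cdot\}$ are bilinear and $q_\mu$ is linear, it suffices to treat monomials $f=x^{i_1}\cdots x^{i_p}$ and $g=x^{j_1}\cdots x^{j_q}$, for which $q_\mu(f)=e^{(\mu)}_{(i_1\cdots i_p)}$ and $q_\mu(g)=e^{(\mu)}_{(j_1\cdots j_q)}$ whenever $p,q\le n_\mu$ (the over-degree case is discussed at the end). The base case $p=q=1$ is exact: by (\ref{basis_matrix}) and the definition of the Kirillov--Kostant bracket (Theorem~\ref{LiePoissonThm}), $[e^{(\mu)}_i,e^{(\mu)}_j]=\hbar(\mu)\,f_{ij}^k e^{(\mu)}_k=\hbar(\mu)\,q_\mu(\{x^i,x^j\})$.

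The induction rests on an elementary reordering identity for degree-bounded symmetrized products,
\begin{align*}
e^{(\mu)}_{(i_1\cdots i_p)}\,e^{(\mu)}_{(j_1\cdots j_q)}
= e^{(\mu)}_{(i_1\cdots i_p\,j_1\cdots j_q)}+\hbar(\mu)\,r ,
\end{align*}
where $r\in\langle e^{(\mu)}\rangle$ is a combination of symmetrized products of degree $\le p+q-1$ with $\mu$-independent bounds; it is proved by moving the factors of the second product into symmetric order one transposition at a time, each transposition producing a single commutator $[e^{(\mu)}_a,e^{(\mu)}_b]=\hbar(\mu)f^k_{ab}e^{(\mu)}_k$ (conceptually, this is the statement that symmetrization intertwines the Kirillov--Kostant structure with the $\hbar(\mu)$-scaled PBW filtration of the algebra generated by the $e^{(\mu)}$). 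The same bookkeeping shows that $[X,Y]$ is $\hbar(\mu)$ times a degree-bounded element for any degree-bounded polynomials $X,Y$ in the $e^{(\mu)}$, so any commutator that appears after one reordering contributes only at order $\hbar(\mu)^2$.

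For the inductive step I would split $f=(x^{i_1}\cdots x^{i_{p-1}})\,x^{i_p}$, use the reordering identity to write $q_\mu(f)=e^{(\mu)}_{(i_1\cdots i_{p-1})}e^{(\mu)}_{i_p}+\hbar(\mu)(\cdots)$, apply $[XY,Z]=X[Y,Z]+[X,Z]Y$, and invoke the induction hypothesis on the two commutators $[e^{(\mu)}_{i_p},q_\mu(g)]$ and $[e^{(\mu)}_{(i_1\cdots i_{p-1})},q_\mu(g)]$, which have strictly smaller total degree. Reassembling the order-$\hbar(\mu)$ part with the reordering identity and matching it against the Leibniz expansion
\begin{align*}
\{f,g\}=(x^{i_1}\cdots x^{i_{p-1}})\{x^{i_p},g\}+\{x^{i_1}\cdots x^{i_{p-1}},g\}\,x^{i_p},
\qquad
\{x^i,x^{j_1}\cdots x^{j_q}\}=\sum_b f^k_{i j_b}\,x^k\!\prod_{d\ne b}x^{j_d},
\end{align*}
reproduces precisely $\hbar(\mu)\,q_\mu(\{f,g\})$, while every other contribution is $\tilde{O}(\hbar(\mu)^2)$. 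Thus $[q_\mu(f),q_\mu(g)]=\hbar(\mu)\,q_\mu(\{f,g\})+\tilde{O}(\hbar(\mu)^2)$, which implies the asserted identity and moreover has a remainder of the type $\tilde{O}(\hbar(\mu)^{1+\epsilon})$ required by Definition~\ref{WQ}; since $\hbar(\mu)\ne 0$ this shows $q_\mu\in wkQ$ with noncommutativity parameter $\hbar(q_\mu)=\hbar(\mu)$, hence $q_\mu\in Q$.

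The delicate part is the remainder estimate. One must confirm that every correction produced by the reordering identity and in the inductive step is genuinely $\tilde{O}(\hbar(\mu)^{1+\epsilon})$ in the sense of Appendix~\ref{ap1}; this uses that $e^{(\mu)}_i=\hbar(\mu)\rho^\mu(e_i)$, and with it every symmetrized product of bounded degree, stays bounded as $\mu\to\infty$ --- a calibration property of the sequence $(\rho^\mu,\hbar(\mu))$, exactly as $\|X^a_k\|=O(1)$ for the fuzzy sphere of Appendix~\ref{appendix_fuzzy}. The second point is the truncation at degree $n_\mu$ built into (\ref{q_mu}): when $\deg f+\deg g>n_\mu$, the symmetrized products of degree exceeding $n_\mu$ that surface at order $\hbar(\mu)$ must be rewritten through the Casimir relations generating $\mathcal{I}_\mu$, and --- as noted just before the theorem --- that rewriting introduces further factors of $\hbar(\mu)$, so these terms are absorbed into the $\tilde{O}$ remainder; for fixed $f,g$ the situation does not arise once $n_\mu\ge\deg f+\deg g$.
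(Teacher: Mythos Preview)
Your argument is correct and reaches the same conclusion, but it is organized differently from the paper. The paper does not induct: it expands the commutator in one shot via the iterated Leibniz identity
\[
[e^{(\mu)}_{i_1}\cdots e^{(\mu)}_{i_k},\,e^{(\mu)}_{j_1}\cdots e^{(\mu)}_{j_l}]
=\sum_{n,m}\bigl(\cdots[e^{(\mu)}_{i_n},e^{(\mu)}_{j_m}]\cdots\bigr),
\]
substitutes $[e^{(\mu)}_{i},e^{(\mu)}_{j}]=\hbar(\mu)f^k_{ij}e^{(\mu)}_k$, then symmetrizes every resulting monomial (this is where the $\tilde{O}(\hbar^{1+\epsilon})$ terms appear), and finally compares termwise with the direct computation of $q_\mu(\{f,g\})$; the leftover consists precisely of the symmetrized monomials of degree exceeding $n_\mu$, which are absorbed into the remainder by the observation preceding the theorem. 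Your inductive scheme, together with the reordering identity you isolate, is a repackaging of exactly the same two mechanisms (commutator-Leibniz plus symmetrization), and your treatment of the truncation at $n_\mu$ matches the paper's. The paper's version is shorter and more computational; yours makes the PBW-filtration structure and the role of the Leibniz rule for $\{\cdot,\cdot\}$ more explicit, and it tracks the remainder order a bit more carefully (you land on $\tilde{O}(\hbar(\mu)^2)$, which is indeed what the argument actually gives). One small caveat: your phrasing of the $\tilde{O}$ estimate in terms of boundedness ``as $\mu\to\infty$'' is not how Appendix~\ref{ap1} formalizes it---there the criterion is the behaviour under the rescaling $\hbar\mapsto x\hbar$ with $\hbar$-independent module coefficients---though for the purposes of this proof the two readings coincide.
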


\begin{proof}
We put $f= \sum_k f_{i_1, \cdots , i_k} x^{i_1} \cdots x^{ i_k}$
and $g= \sum_l g_{j_1, \cdots , j_l} x^{j_1} \cdots x^{ j_l}$.
\begin{align*}
[q_{\mu} ( f ) , q_{\mu} ( g )]
&= 
\sum_{k,l}^{n_\mu} f_{i_1, \cdots , i_k} g_{j_1, \cdots , j_l}
[ e^{(\mu)}_{i_1} \cdots e^{(\mu)}_{ i_k}, 
 e^{(\mu)}_{j_1} \cdots e^{(\mu)}_{ j_l}] \\
&= 
\sum_{k,l=1}^{n_\mu} f_{i_1, \cdots , i_k} g_{j_1, \cdots , j_l}
\sum_{n,m} 
 e^{(\mu)}_{i_1} \cdots [ e^{(\mu)}_{ i_n}, 
 e^{(\mu)}_{j_m}] \cdots e^{(\mu)}_{ j_l} .
\end{align*}
Here using
$
[ e^{(\mu)}_{ i_n}, 
 e^{(\mu)}_{j_m}] = \hbar( \mu )f_{i_n j_m}^k e_k^{(\mu)}
$,
\begin{align}
[q_{\mu} ( f ) , q_{\mu} ( g )]
&= \hbar(\mu )
\sum_{k,l=1}^{n_\mu} f_{i_1, \cdots , i_k} g_{j_1, \cdots , j_l}
\sum_{n,m} f_{i_n j_m}^p 
 e^{(\mu)}_{i_1} \cdots e^{(\mu)}_{p} \cdots e^{(\mu)}_{ j_l} 
\notag \\
&= \hbar(\mu ) \sum_{k,l=1}^{n_\mu} f_{i_1, \cdots , i_k} g_{j_1, \cdots , j_l}
\sum_{\substack{1\le n \le k\\
1\le m \le l}} f_{i_n j_m}^p 
e^{(\mu)}_{(i_1, \cdots , \hat{i}_n ,\cdots,\hat{j}_m, \cdots,j_l, p) }
+\tilde{O}(\hbar^{1+\epsilon}(\mu))
. \label{thm6_2_1}
\end{align}
Here the index $\{ i_1, \cdots , \hat{i}_n ,\cdots,\hat{j}_m, \cdots,j_l ,p \}$
means $\{  i_1, \cdots , i_k , j_1, \cdots , j_l , p\} - \{ i_n , j_m \}$,
so $e^{(\mu)}_{(i_1, \cdots , \hat{i}_n ,\cdots,\hat{j}_m, \cdots,j_l,p )}$
is a degree $k+l -1$ polynomial.
$ \tilde{O}(\hbar^{1+\epsilon}(\mu))$ appeared when 
$e^{(\mu)}_{i_1} \cdots e^{(\mu)}_{p} \cdots e^{(\mu)}_{ j_l} $
were sorted in the symmetric order.
On the other hand,
\begin{align*}
\{ f , g \}&= 
\sum_{k,l=1} 
f_{i_1, \cdots , i_k} g_{j_1, \cdots , j_l}
\sum_{\substack{1\le n \le k\\
1\le m \le l}} f_{i_n j_m}^p x^{i_1} \cdots x^{p} \cdots x^{ j_l} .
\end{align*}
Note that the degree of $x^{i_1} \cdots x^{p} \cdots x^{ j_l}$ is
$k+l-1$.
Then 
\begin{align}\label{thm6_2_2}
q_\mu (\{ f , g \} )=\sum_{\substack{1\le k,l \\ k+l \le n_\mu +1}}
f_{i_1, \cdots , i_k} g_{j_1, \cdots , j_l}
\sum_{\substack{1\le n \le k\\ 1\le m \le l}} f_{i_n j_m}^p 
e^{(\mu)}_{(i_1, \cdots , \hat{i}_n ,\cdots,\hat{j}_m, \cdots,j_l, p) }.
\end{align}
Let us subtract $\hbar$ times (\ref{thm6_2_2})  from (\ref{thm6_2_1}):
\begin{align*}
&[q_{\mu} ( f ) , q_{\mu} ( g )]- \hbar(\mu ) q_\mu (\{ f , g \} ) \\
&= \hbar(\mu ) \!\!\!\!\!\!\!\!
 \sum_{\substack{2\le k,l \le n_\mu \\ n_\mu +2 \le k+l \le 2 n_\mu}}
\!\!\!\!\!\!\!\!
f_{i_1, \cdots , i_k} g_{j_1, \cdots , j_l}
\sum_{\substack{1\le n \le k\\
1\le m \le l}} f_{i_n j_m}^p 
e^{(\mu)}_{(i_1, \cdots , \hat{i}_n ,\cdots,\hat{j}_m, \cdots,j_l, p) }+
\tilde{O}(\hbar^{1+\epsilon}(\mu)).
\end{align*}
%
Note that 
the degree of $e^{(\mu)}_{(i_1, \cdots , \hat{i}_n ,\cdots,\hat{j}_m, \cdots,j_l, p) }$ in the right hand side is bigger than or equal to
$n_\mu +1$.
Recall that $n_\mu$ is determined as the highest degree 
of polynomials of
$E_i (i=1,\cdots ,D)$
that constitute basis of $T_\mu$.
So, any degree $m$ polynomial of $e^{(\mu)}$
with $m > n_\mu$ 
is represented by some linear combination of polynomials
whose degree is smaller than or equal to $n_\mu$.
From the definition (\ref{basis_matrix}),
such linear combinations are $\tilde{O}(\hbar({\mu}))$.
Then we obtain
\begin{align*}
[q_{\mu} ( f ) , q_{\mu} ( g )]
&=
\hbar({\mu}) q_{\mu} ( \{ f , g \} ) + \tilde{O}(\hbar({\mu})).
\end{align*}
\end{proof}
Thus, a sequence of quantization maps $\{ q_\mu \}\subset Q$ is obtained, 
and we find $\{ T_\mu \}$ is a sequence of objects in $QW$.\\
\bigskip

Other candidates for Poisson algebras exist besides 
$A_\mathfrak{g} $.
Using ${I}$ as an ideal of $A_\mathfrak{g} $ generated by
relations that is invariants under acting Poisson brackets,
we can introduce $A_\mathfrak{g} / { I}$.
Even in this case, we expect to be able to construct $\{ T_{\mu} \}$ 
and  $\{ q_{\mu} \}$ in the same way as in the $A_\mathfrak{g} $ case.
In fact, 
${q}_{\mu}$ is realized as $t_k$ 
for a fuzzy sphere case 
in Appendix \ref{appendix_fuzzy}.

\begin{example}\label{ex5_1}
Let $e=\{ e_1 , e_2 , e_3 \}$ be a fixed base of $\mathfrak{su}(2)$
satisfying commutation relations $[ e_i , e_j ]= i\epsilon_{ijk} e_k $.
Consider a sequence of 
spin $j$ representation  
$\rho^{\mu_j} : \mathfrak{g} \rightarrow gl(V_{\mu_j} ) $ (${\rm dim} V_{\mu_j} = 2j +1$).Then $e^{(\mu_j)}$ satisfies
$[ e^{(\mu_j)}_i , e^{(\mu_j)}_j ]= i\hbar({\mu_j}) \epsilon_{ijk} e^{(\mu_j)}_k $,
and a Casimir relation 
\begin{align*}
f^{\mu_j}( e^{(\mu_j)}, R , \hbar({\mu_j}) ) 
= (e^{(\mu_j)}_1 )^2 + (e^{(\mu_j)}_2 )^2 +(e^{(\mu_j)}_3 )^2 - R^2 Id 
\end{align*}
is imposed for each $\rho^{\mu_j}$.
Here $R/ \hbar({\mu_j}) = \sqrt{j(j+1) }$.
From this sequence we obtain the sequence of $T_{\mu}$ in Subsection
\ref{naive_classical_lim_FzS2} or 
in Subsection \ref{ex_weak_lim}. 
The corresponding Poisson algebra of $\mathfrak{su}(2)$ 
determined by Theorem \ref{LiePoissonThm}
is differ from 
${\mathcal A}$ because of the existence of the relation. 
However, quantization maps $t_k$ in Subsection
\ref{naive_classical_lim_FzS2} or 
Subsection \ref{ex_weak_lim}
is constructed in the same manner with 
$q_\mu$ in Theorem \ref{sect5_thm_2}.
Furthermore, as we saw in Subsection
\ref{naive_classical_lim_FzS2}, 
this sequence gives the naive classical limit 
${\mathcal A}$, 
when its diagram includes ${\mathcal A}$.
\end{example}
\bigskip

As above, starting from a Lie algebra 
we obtain a sequence of corresponding objects $\{ T_{\mu} \}$ in $QW$
and some corresponding Poisson algebras.
We already know that the diagram $\{ T_{\mu} \}\cup {A_\mathfrak{g} }$ 
gives at least a naive classical limit.
This indicates that the above procedure 
may provide one new approach to the inverse problem.
\\

Is it possible to derive a quantization family,
a sequence like $\{ T_{\mu} \}$ that gives a naive classical limit
or a weak classical limit,
from the principle of least action in terms of physics?
Here we mention only one attempt to give such a quantization family.
Let us consider a sequence of matrix models.
Using $N\times N$ matrixes $X^N_{\mu} ~(\mu = 1, \cdots , D, ~ N \in {\mathbb N} )$
and a mass $\hbar(N)$, we consider the action
\begin{align}
S_N(\hbar^2(N)) = \mbox{\rm{tr}} \left(
\frac{1}{4} [ X^N_{\mu} , X^N_{\nu}]^2 + \hbar^2(N) \frac{1}{2} X^{N \mu}X^N_{\mu}
\right)+ \sum_i^k \langle \lambda_i^N , f^{(N)}_i(X^N, \nu) \rangle ,
\end{align}
where we take contraction 
by the Killing metric with the structure constants 
$f_{\mu \rho}^\tau$ of $\mathfrak{g}$, i.e. $g_{\mu \nu}= f_{\mu \rho}^\tau f_{\nu \tau}^\rho$, and $\langle - , - \rangle$
represents the inner product with respect to this Killing metric.
$f^{(N)}_i(X^N, \nu )~ (i= 1,\cdots ,k \le rank \mathfrak{g})$ 
are relations given by Casimir invariants like 
$f_i^{\mu}(e^{(\mu)}, \nu^{(\mu)}, \hbar(\mu))$ in (\ref{ideal_relation})
with some parameters $\nu$.
For simplicity, we assume the relations $\{f^{(N)}_i \}$ and parameters $\nu$
do not depend on $N$ i.e. $f^{(N)}_i(x , \nu ) = f^{(J)}_i (x , \nu )$ 
for all $N , J \in {\mathbb N}$ and $x \in {\mathbb C}^D$.
$\sum_i \langle \lambda_i^N , f^{(N)}_i(X^N, \nu) \rangle$ 
is a gauge fixing term
and $\lambda_i^N $ is a Lagrange multiplier.
The equation of motion with each $\lambda_i^N$ is 
$f^{(N)}_i(X^N, \nu)=0$.

Then the equation of motion  obtained by the variation of $X^N_{\nu}$ 
with $\lambda_i^N = 0$ is given by
\begin{align*}
[X^{N \mu} ,  [ X^N_{\mu} , X^N_{\nu}]]= \hbar^2(N) X^N_{\nu} .
\end{align*}
If $X^N_{\mu}$ is a base of a representation of $\mathfrak{g}$ such that
$[ X^N_{\mu} , X^N_{\nu}]=\hbar(N) f_{\mu \nu}^\tau X^N_{\tau}$,
the left hand side of the equation of motion is calculated as
\begin{align*}
[X^{N \mu} ,  [ X^N_{\mu} , X^N_{\nu}]] &=
\hbar(N)f_{\mu \nu}^\rho g^{\mu \tau}[X^N_\tau , X^N_\rho ]
=\hbar^2(N) f_{\mu \nu}^\rho g^{\mu \tau} f_{\tau \rho}^\sigma X^N_\sigma
\\
&= \hbar^2(N){\delta_{\nu}}^\sigma X^N_\sigma = \hbar^2(N)X^N_\nu .
\end{align*}
Therefore, we found that
such set of generators $\{ X^N_{\mu} \}$ which consist a representation of $\mathfrak{g}$ is a 
solution of the equation of motion of the action $S_N$.
(In the previous works by Ishii et al. \cite{Ishii:2008tm} and
by Kim et al. \cite{Kim:2003rza}, actions that derive similar equation of motion
are investigated.)
We denote these generators by $e^{(N)}$.
A vector space $V_N$ spanned by the solutions $e^{(N)}$ is determined. 
A subset of natural number 
$SN \subset {\mathbb N}$ determine a sequence of actions
$\{S_N \}:=\{S_N ~|~ N\in SN \subset {\mathbb N} \}$.
Then we obtain sequence of $V_N$ from this $\{S_N \}$ similarly.
The sequence $\{\langle V_N \rangle \}$ is a sequence of objects $QW$.
\\

The way to get one Poisson algebra $A_{\mathfrak{g}}$
from generators $e^{(N)}$ is
given in Theorem \ref{LiePoissonThm}.
For the case with relations 
$\{f^{(N)}_i(X^N, \nu^{(N)})  \}$,
a Poisson algebra with relations $\{ f^{(N)}_i(x, \nu^{(N)}) \}$
is also expected. (Recall that $f^{(N)}_i$ does not depend on $N$.)
We assume the existence of the Poisson algebra and 
denote the Poisson algebra 
by $A_{\mathfrak{g}}/ { I}$,
where 
$I= \langle f_1^{(N)}(x , \nu),
\cdots ,
f_{k}^{(N)}(x, \nu) \rangle$.
To obtain the classical limit, 
we need a sequence of quantization maps from 
the Poisson algebra to objects.

While a detailed examination of the following discussion is left for future work, an outline of the methods expected to be accomplished is provided here.
Let us find the way to obtain the set of 
these quantization maps.
From Theorem \ref{sect5_thm_2}
quantization $q_N \in Q$
from $A_{\mathfrak{g}}/ I$ to $\langle V_N \rangle $ 
is defined similar to (\ref{q_mu}).
Using $\{ q_N ~|~ N \in SN \}$, we define a new 
quantization sequence 
$\{q_N' \} = \{ q_N'=q_N, q_J'=q_J , q_K'= q_K, \cdots \}\subset Q $
as appropriate such that
$\ker q_N' \supsetneq \ker q_J'$ if $N < J$.
(The similar condition appears in Fuzzy sphere case.)\\

Finally, using the objects $A_{\mathfrak{g}}/ I$ and 
$ q_N' (A_{\mathfrak{g}}/ I)$ and 
the morphisms $\{q_N' \} $ in $QW$, 
as in Section \ref{naive_classical_lim_FzS2},
we can construct the index category and its diagram.
We denote the diagram by $D_{\mathfrak{g}}$.
The constructing process is parallel to 
the one in Section \ref{naive_classical_lim_FzS2}.
Then, the diagram $D_{\mathfrak{g}}$ is obtained by (\ref{6_diagram1})
and the naive classical limit is given by 
$A_{\mathfrak{g}}/ I$.
\begin{align}\label{6_diagram1}
\left(
\vcenter{
\xymatrix{
1 \ar[d]^{\hat{N}} \ar[rd]^{\!\!\!\hat{J}} \ar[rrd]^{\hat{K} ~~\cdots}& {}&\\
N& 
J&
K~~ \cdots
}}\right)
\xrightarrow{D_{\mathfrak{g}}} 
\left(
\vcenter{
\xymatrix{
A_{\mathfrak{g}}/ I \ar[d]^{q_N'} \ar[rd]^{\!\!\! q_J'} \ar[rrd]^{q_K' ~~\cdots}& {}&\\
q_N' (A_{\mathfrak{g}}/ I )& 
q_J' (A_{\mathfrak{g}}/ I )&
q_K' (A_{\mathfrak{g}}/ I )~~ \cdots
}}\right).
\end{align}
\\

Let us summarize the process to obtain the classical limit.
\begin{enumerate}
\item Prepare the sequence of actions $\{S_N \}$.
\item A sequence of the solution $e^{(N)} = \{e^{(N)}_1, \cdots , e^{(N)}_d \}$
and the sequence of vector spaces $\{V_N \}$
, where $e^{(N)} $ is a basis of $V_N$ are obtained.
\item When $A_{\mathfrak{g}}/ I$ is 
given as a Poisson algebra from $\{ e^{(N)} \}$, 
a sequence of quantization $\{q_N' \} \subset Q$, and $ q_N' (A_{\mathfrak{g}}/ I)$ are constructed.
\item Objects $A_{\mathfrak{g}}/ I$, 
$ q_N' (A_{\mathfrak{g}}/ I)$ and  morphisms $q_N'$
induce an index category and its diagram $D_{\mathfrak{g}}$.
\item The naive classical limit is determined as $A_{\mathfrak{g}}/ I$. 
\end{enumerate}

It should be noted that, in contrast to conventional classical mechanics, 
the classical solution gives a basis for the vector space, 
and the set of basis (interpreted as the set of vector spaces) derives
the sets of objects and morphisms in $QW$ 
that may determine the classical limit.

The process discussed here is 
an example of approaching an inverse problem 
using the quantization world $QW$ 
through a physics method.
Since the method is based on a generalization of the fuzzy sphere, 
it works well for the case of fuzzy spheres, 
but it is not known at this time whether 
it will work for other manifolds.
It is another future task to investigate this point.

\section{Summary}\label{sect6}

We constructed a category $QW$ that is composed of the all 
quantizations of all Poisson algebras.
The characteristic of this category is that 
quantization is treated as a linear map not to an algebra, 
but to a module, which is a subset of some algebra.
We defined what is required in this formulation and saw next that the definition works well.
First, equivalence of quantization was defined. And it was shown that iff there is a pair of equivalent quantizations by the definition  
then there exists a pair of isomorphisms of Poisson algebras and the algebras 
generated by the images of the quantization maps.
We also considered the category of quantizations 
for a fixed single Poisson algebra and 
discussed its classification of quantizations.
Next, we defined the classical limits.
Three types of classical limits were introduced: naive classical limits, 
weak classical limits and strong classical limits.
The naive classical limit was defined 
in the context of 
category theory.
It is a limit in category $QW$
whose 
vertex of the cone is in Poisson algebras.
As concrete examples, we introduced the naive classical limits 
for deformation quantizations of polynomials 
with the Moyal products and for matrix regularization of spheres.
The weak classical limit was defined by modifying the definition of the
limit in the category theory.
The diagrams for the limit were restricted them in $QP \subset QW$, 
and the cones to define the limit were restricted to only 
those with vertices in the Poisson algebra.
Like the limit of ordinary number sequences, the naive classical limit and the 
weak classical limit may not have a limit, 
depending on how the sequence of objects is chosen.
On the other hand, the strong classical limit was defined for quantizations 
when a Poisson algebra is fixed. 
In contrast to the other classical limit, 
there is no flexibility in the choice 
of the classical limit, which is automatically determined 
when the Poisson algebra is chosen. 
Also, we found the fixed Poisson algebra is always the strong classical limit.
Finally, we discussed the inverse problem of determining 
the classical limit from some 
noncommutative Lie algebra.
From a fixed Lie algebra, we constructed a sequence of representations 
of the Lie algebra with relations, from which we constructed a Poisson algebra. Next, by constructing a sequence of quantization maps from the Poisson algebra,
 we obtain the sequence of representations of the Lie algebra 
as a sequence of $QW$ objects. 
A method to obtain this series of procedures 
from the principle of least action was proposed. 
The proposed method is just an example of how to approach to 
the inverse problem 
in the framework of $QW$, which is a
generalization of the fuzzy sphere case.
A more precise discussions on how generalizations 
should be made are needed in the future.
The challenge of obtaining classical manifolds 
from solutions of noncommutative manifolds of matrix models 
that actually describe string theory or M-theory is a future problem, too.
\\

As for $QW$, it is important to investigate its properties 
from a purely mathematical point of view as well, 
since it is a noncommutative geometric object that naturally follows 
from the overall formulation of the quantization of all Poisson algebras.
There are many interesting problems around $QW$ 
to understand the whole picture of noncommutative geometry.
These are issues to be clarified in the future.

%
%
\section*{Acknowledgements}
\noindent 
A.S.\ was supported by JSPS KAKENHI Grant Number 21K03258.
The author is grateful to A. Yoshioka for his detailed discussions 
and useful comments throughout the entire article.
The author also thanks the participants in the workshop
``Discrete Approaches to the Dynamics of Fields and Space-Time" 
for their useful comments.
We would like to thank T. Asakawa and G. Ishiki
for important information, helpful discussions and comments on the manuscript.

\newpage

\appendix
\section{Definition of $\tilde{O}(z^{1+\epsilon} )$}\label{ap1}
Since we have not defined a norm for objects of category $QW$ 
in this paper, Landau symbol $O$ does not make sense. 
So, we define an order $\tilde{O}$ by $x\in \mathbb{R}$ with the Euclidean norm.

\begin{definition}
Let $\mathcal{M}$ be an $R$-module for a commutative algebra $R$ over $\mathbb{C}$. Let $f_i$ be a complex valued continuous function such that
\begin{align*}
\lim_{x\to 0}  \frac{f_i(xz)}{x}  =0 ,
\end{align*}
where $x\in \mathbb{R}$ and $z\in \mathbb{C}$. For $a_i\in \mathcal{M}$ which is independent of $z\in \mathbb{C}$, we denote the element described as $\sum_i f_i(z)a_i\in \mathcal{M}$ by $\tilde{O}(z^{1+\epsilon} )$.
\end{definition}
From this definition, the term of $\tilde{O}(\hbar^{1+\epsilon})$ in $(\ref{lie})$ is also $0$ when $\hbar=0$. Note that   $\hbar$ itself is not necessarily continuous.

We use the following fact:
\begin{proposition}\label{propA2}
Let $t_i : A \rightarrow M_i$ be a weak quantization, and let $h_{ij}: M_i \rightarrow M_j$ be an $R$-algebra homomorphism.
Then 
$$
h_{ij} (\tilde{O}(\hbar^{1+\epsilon} (t_i )))= \tilde{O}(\hbar^{1+\epsilon} (t_i)) \in M_j.
$$
\end{proposition}
\begin{proof}
For  $\tilde{O}(\hbar^{1+\epsilon} (t_i ))=\sum_i f_i(z)a_i\in {M}_i $
\begin{align*}
h_{ij} (\tilde{O}(\hbar^{1+\epsilon} (t_i )))&=
h_{ij} ( \sum_i f_i(\hbar )  a_i ) 
=  \sum_i f_i(\hbar ) h_{ij}( a_i ) = \tilde{O}(\hbar^{1+\epsilon} (t_i)) \in M_j.
\end{align*}
\end{proof}

\newpage
\section{List of Symbols}\label{ap2}
In this paper, many symbols including new symbols are used.
So, the list of them is useful to read through. \\

\begin{table}[h!]
  \caption{List of Symbols}
  \label{table:data_type}
  \centering
  \begin{tabular}{|c|c|}
    \hline
$R$ & commutative ring over ${\mathbb C}$ \\
\hline
$\mbox{$R$-Mod}$ & category of $R$-module 
\\
\hline
$\mbox{$R$-alg}$ & category of $R$-algebra 
\\
\hline
${\mathscr Poisss}$ & 
\begin{tabular}{c}
category of Poisson algebras \\
whose morphisms are surjective Poisson morphism
\end{tabular}
\\ \hline
$wkQ$ & set of weak quantization maps ( including maps with $\hbar =0$)
\\ \hline
$Qh_x$ & set of weak quantization maps with $|\hbar |= x$
\\
\hline
$Q$ & set of weak quantization maps with $\hbar \neq 0$
\\
\hline
$Q( {{\mathscr Poisss}} )$ & 
$\{ t( q ) ~ | ~ q \in Q \}$
\\
\hline
$QP$ & 
\begin{tabular}{c}
subcategory of $R$-Mod  :
$ob (QP) := Q({\mathscr Poisss})$, \\
$Mor (t(q_i), t(q_j))$ is a set of $R$-algebra homomorphisms\\
 restricted their domain into $t(q_i)$
\end{tabular}
\\
\hline
${\mathscr Poisss} \bigsqcup QP$ & 
\begin{tabular}{c}
subcategory of $R$-Mod  :
$ob ( {\mathscr Poisss} \bigsqcup QP )=
ob ( {\mathscr Poisss} )  \bigsqcup  ob ( QP )$, \\
$Mor ( {\mathscr Poisss} \bigsqcup QP )=
Mor ( {\mathscr Poisss} ) \bigsqcup Mor ( QP )$
\end{tabular}
\\
\hline
$QW$ & 
\begin{tabular}{c}
subcategory of $R$-Mod  :
$ob (QW) :=ob ( {\mathscr Poisss} ) \cup  ob ( QP ) $, \\
$Mor ( QW ):= Mor ( {\mathscr Poisss} ) \bigsqcup Mor ( QP ) \bigsqcup Q$
\end{tabular}
\\
\hline
$U_P$ & forgetful functor from ${\mathscr Poisss}$ 
to $QW$ 
\\
\hline
$I_{QP}$ & identically immersion functor from $QP$ to $QW$
\\
\hline
$( U_P \downarrow I_{QP})$ & comma category of $U_P$ and $I_{QP}$
\\
\hline
$F $ & functor from $( U_P \downarrow I_{QP})$ to 
${\mathscr Poisss} \times R$-alg
\\
\hline
$P-QP$ & $P-QP := F( U_P \downarrow I_{QP})$ : ${\mathscr Poisss}-Q({\mathscr Poisss})$ 
quantization pair category
\\
\hline
$(A \downarrow I_{QP})$ &  category of all quantizations 
of $A$
\\
\hline
$F_A $ & functor from $ (A \downarrow I_{QP}) $ to $ R\mbox{-alg}$
\\
\hline
$D({\mathcal I})$ & Quantization Family with an index category ${\mathcal I}$
\\
\hline
$J^\bullet$ &  map from a category with total order to an index category
\\
\hline
$\mathcal{I}_{\hbar}(A)$ & $  J^\bullet (A \downarrow I_{QP})$
\\
\hline
$D_{\hbar}$ & diagram $\mathcal{I}_{\hbar}(A) \rightarrow (A \downarrow I_{QP})$\\
\hline
$A/QW_{\hbar}$ & 
\begin{tabular}{c}
category consists of $D_{\hbar} (\mathcal{I}_{\hbar}(A))$ and $\{( Id_A , A)\}$\\
with quantization maps 
whose source is $A$
\end{tabular}
\\
\hline
\end{tabular}
\end{table}


\section{Fuzzy Sphere and Fuzzy Torus }\label{appendix_fuzzy}
The  fuzzy sphere is considered in {\rm \cite{matrix1,fuzzy1}}. See {\rm \cite{matrix1,fuzzy1,fuzzyb,fuzzyc}} for details. 
In \cite{Rieffel:2021ykh}, more general and mathematically precise statements are given. 
Let $x^a~(1\le a\le 3)$ be coordinates of three-dimensional Euclidean space and $S^2$ be the two-sphere given by
\begin{align}\label{sphere}
\delta_{ab}x^ax^b=1.
\end{align}
Let $P[x^a]$ be the algebra of polynomials generated by $x^a$. For an ideal $I$ which is generated by the relation $(\ref{sphere})$, we consider a Poisson 
algebra $\mathcal{A}=P[x^a]\slash I$ with Poisson bracket
\begin{align}
\{x^a,x^b\}=\epsilon^{abc}x^c. \label{eq.poi}
\end{align}
For arbitrary $f\in \mathcal{A}$ is given as
\begin{align*}
f=f_0+f_ax^a+\frac{1}{2}f_{ab}x^ax^b+\cdots
\end{align*}
where $f_{a_1\cdots a_i}\in \mathbb{C}$ is completely symmetric and trace-free. The morphism $t_1$ from $\mathcal{A}$ to $T_1=\mathbb{C}$ is defined by
$
t_1(f):=f_0
$
and the morphism $t_2$ from $\mathcal{A}$ to a subspace $T_2:= t_2( \mathcal{A} )$ of a matrix algebra $M_2={\rm Mat}_{2}$ is defined by
\begin{align*}
t_2(f)&:=f_0{\bf 1}_2+f_at_2(x^a),\quad \quad t_2(x^a):=\frac{\hbar_2}{2}\sigma^a
\end{align*}
where $\sigma^a$ is Pauli matrix and ${\bf 1}_k$ is a $k\times k$ unit matrix. $t_2$ means that the morphism gives a map from a polynomial to a $2\times 2$ matrix. In the case of $k\ge 2$, morphisms $t_k:\mathcal{A}\to T_k := t_k (\mathcal{A})\subset M_k={\rm Mat}_{k}$ are defined by
\begin{align*}
t_k(f)&:=f_0{\bf 1}_k+f_{a_1}t_k(x^{a_1})+\cdots +
\frac{1}{(k-1)!} 
f_{a_1\cdots a_{k-1}}t_k(x^{a_1}\cdots x^{a_{k-1}})\\
t_{k}(x^a)&:=\hbar_k J^a_{k}= X^a_{k}
\end{align*}
where $J^a_k$ are generators for the $k$-dimensional irreducible representation of $\mathfrak{su}(2)$, and for $i\ge 2$ each $t_i(x^{a_1}\cdots x^{a_{i-1}})$, which is fixed to be well-defined, is generated by $J^a_i$. The morphism $t_k$ gives a map from a polynomial to a $k\times k$ matrix. From the Casimir relation
\begin{align*}
J^a_kJ_{ka}=\frac{1}{4}(k^2-1){\bf 1}_k
\end{align*}
and $(\ref{sphere})$, the relationship between $k$ and $\hbar$ is given as
\begin{align}\label{relak}
\frac{4}{k^2-1}=\hbar^2_k.
\end{align}
From the commutation relation of $J^a_k$
\begin{align}
[J^a_k,J^b_k]&=i\epsilon^{abc}J^c_k, \quad \quad [t_k(x^a), t_k(x^b)]=i\hbar^2_k \epsilon^{abc}J^c_k=i\hbar_k \epsilon^{abc}t_k(x^c). \label{eq.com}
\end{align}
Using (\ref{eq.com}) we can show that $t_k \in Q $ $(k \ge 2 )$. 
Note that $\langle T_k \rangle = Mat_k = :M_k$.\\
\bigskip

Next, we consider a fuzzy torus in a similar way.
(See for example \cite{Bal:2004ai}. )
Let $(\theta_1 , \theta_2)$ be a coordinate of $T^2 = S^1 \times S^1$,
where $\theta_i \in [0, 2\pi)$.
A algebra 
$ \displaystyle \mathcal{B}= \{ f(\theta_1 , \theta_2) = \sum_{l_1, l_2} f_{l_1,l_2} y_{l_1,l_2} \}$,
where 
$ y_{l_1, l_2}:= \exp (\sum_i^2 \sqrt{-1}l_i \theta_i )$
 $( l_i \in {\mathbb Z}_{\ge 0} )$ are functions on $T^2$.  
Poisson bracket is defined by
\begin{align*}
\{ y_{l_1, l_2} , y_{m_1, m_2} \}= 
-\pi (l_1 m_2 -l_2 m_1) y_{l_1 + m_1, l_2+m_2} 
\end{align*}
The algebra of Fuzzy torus in $ M_k={\rm Mat}_{k}$ is 
generated by the clock matrix $U$ and shift matrix $V$
of size $k$ determined by a complex number $q= e^{2\pi i /k}$,
where
\begin{align*}
U= \left(
\begin{array}{cccc}
q^0 & {} &{} &{}\\
{} & q^1 &{} &{} \\
{} & {} & \ddots &{}\\
 {} &{} &{} & q^{k-1}
\end{array}
\right) , \qquad 
V= \left(
\begin{array}{cccc}
0 & 1 &{} &{}\\
{} & 0 &\ddots &{} \\
{} & {} & \ddots &1 \\
 1 &{} &{} & 0
\end{array}
\right) .
\end{align*}
$U, V$ satisfy $U^k = V^k = Id_k$ and $VU= q UV$.
By using these $U,V$, we introduce
\begin{align*}
Y_{m, n} := U^m V^n .
\end{align*}
Then they satisfy 
\begin{align*}
[Y_{l_1,l_2} , Y_{m_1 , m_2} ] = 
(q^{l_2 m_1} - q^{l_1 m_2})Y_{l_1 + m_1, l_2+m_2} .
\end{align*}
Note that 
$q^{l_2 m_1} - q^{l_1 m_2}= -\sqrt{-1}\frac{2\pi}{k}(l_1 m_2 -l_2 m_1)+O(1/k^2)$.
The quantization $q_k : \mathcal{B} \rightarrow Y_k := q_k(\mathcal{B})\subset  M_k$ is defined as the same manner with the fuzzy sphere:
\begin{align*}
q_k ( \sum_{l_1, l_2} f_{l_1,l_2} y_{l_1,l_2} ) = 
\sum_{l_1, l_2} f_{l_1,l_2} Y_{l_1,l_2} .
\end{align*}
Note that $\langle Y_k \rangle = M_k$.


\end{document}